\algrenewcommand\algorithmicrequire{\textbf{Input:}}
\algrenewcommand\algorithmicensure{\textbf{Output:}}
\newcommand{\itemmap}{\sigma}
\newcommand{\Item}{I_\sigma}
\newcommand{\Max}{{\mathcal C}_{\mathrm{max}}}
\newcommand{\MS}{{\mathcal S}}
\newcommand{\MC}{{\mathcal C}}
\newcommand{\MB}{{\mathcal B}}
\renewcommand{\algref}[1]{Algorithm~\ref{alg:#1}}
\newcommand{\angleb}[1]{\langle#1\rangle}
\newcommand{\corref}[1]{Corollary~\ref{cor:#1}}
\newcommand{\figref}[1]{Figure~\ref{fig:#1}}
\newcommand{\lemref}[1]{Lemma~\ref{lem:#1}}
\newcommand{\lineref}[1]{line~\ref{line:#1}}
\newcommand{\thmref}[1]{Theorem~\ref{thm:#1}}
\newcommand{\secref}[1]{Section~\ref{sec:#1}}
\newcommand{\Ind}{\mathsf{Ind}}
\newtheorem{cor}{Corollary}
\newtheorem{lem}{Lemma}
\newtheorem{thm}{Theorem}
\newenvironment{proof}{
\noindent{\scshape Proof:}}{\quad $\Box$\bigskip}
\long\def\invis#1{}
\title{Design of Polynomial-delay Enumeration Algorithms in Transitive Systems}
\author[1]{Kazuya Haraguchi}
\author[2]{Hiroshi Nagamochi}
\date{}
\affil[1]{Otaru University of Commerce}
\affil[ ]{haraguchi@res.otaru-uc.ac.jp}
\affil[2]{Department of Applied Mathematics and Physics, Kyoto University}
\affil[ ]{nag@amp.i.kyoto-u.ac.jp} 
\begin{document} 
\maketitle

\begin{abstract}
In this paper, as a new notion, we define a {\em transitive system}
to be a set system $(V, \MC\subseteq 2^V)$ on a finite set $V$
of elements such that 
 every three sets $X,Y,Z\in\MC$ with 
$Z\subseteq X\cap Y$ implies $X\cup Y\in\MC$,
where we call a set $C\in \MC$  a {\em component}. 
We assume that two oracles $\mathrm{L}_1$ and $\mathrm{L}_2$
are available, where  given
two subsets $X,Y\subseteq  V$, $\mathrm{L}_1$ returns 
a maximal component $C\in \MC$
with $X\subseteq C\subseteq Y$;
and 
given a set $Y\subseteq  V$, $\mathrm{L}_2$ returns 
all maximal components $C\in \MC$ with $C\subseteq Y$. 
Given a set $I$ of attributes and a function $\sigma:V\to 2^I$
in a transitive system,
a component $C\in \MC$ is called a {\em solution}
if the set of common attributes in $C$ is inclusively maximal; 
 i.e., $\bigcap_{v\in C}\sigma(v)\supsetneq \bigcap_{v\in X}\sigma(v)$
 for any component $X\in\MC$ with $C\subsetneq X$.
We prove that there exists an algorithm of enumerating all solutions
in delay bounded by a polynomial with respect to
the input size and the running times of the oracles.
The proposed algorithm yields
the first polynomial-delay algorithms
for enumerating connectors in an attributed graph 
and for enumerating all subgraphs
with various types of connectivities such as 
 all $k$-edge/vertex-connected  induced subgraphs
 and  
 all $k$-edge/vertex-connected  spanning subgraphs
in a given undirected/directed graph 
for a fixed $k$.
\end{abstract}

\section{Introduction}
\label{sec:intro}

In the present paper, we introduce a novel notion of set system,
``a transitive system.''
For a transitive system on a set of elements
  and  a set of items (or attributes)  given to each element,
we design an algorithm that enumerates all subsets in the system
that are inclusion-wise maximal with respect to the common items in a subset. 

Let $V$ be  a finite set  of elements.
A {\em system} on a set $V$ of elements is defined to be
a pair  $(V,\MC)$ of   $V$ of elements and   
a family $\MC\subseteq 2^V$,
where  
a set in $\MC$ is called a {\em component}.
For a subset $X\subseteq V$ in a system $(V,\MC)$, 
a component $Z\in \MC$ with $Z\subseteq X$ is called {\em  $X$-maximal} 
if  no other component $W\in \MC$ satisfies $Z\subsetneq W\subseteq X$,
and let $\Max(X)$  denote the family of all $X$-maximal components. 
For two subsets $X\subseteq  Y\subseteq V$,
let $\Max(X;Y)$ denote the family of components $C\in \Max(Y)$
such that $X\subseteq C$.
We call a system $(V,\MC)$ (or $\MC$)   {\em transitive} if  
\[\mbox{any tuple of  components  $X,Y,Z\in \MC$ with $Z\subseteq X\cap Y$
 implies $X\cup Y\in\MC$.}\]
For example, any Sperner family,
a family of subsets every two of which intersect, is
a transitive system. 
We call a set function  $\rho$ from $2^V$ to the set $\mathbb{R}$  of reals
a {\em volume function} if 
$\rho(X)\leq \rho(Y)$ for any subsets $X\subseteq Y\subseteq V$.
A subset $X\subseteq V$ is called {\em $\rho$-positive} if $\rho(X)>0$.
To discuss the computational complexities for solving a problem in a  transitive system,
we assume that a transitive system $(V,\MC)$ is implicitly given as two oracles 
$\mathrm{L}_1$ and  $\mathrm{L}_2$ such that
\begin{itemize}
\item[-]
  given non-empty  subsets $X\subseteq Y\subseteq V$, 
  $\mathrm{L}_1(X,Y)$  returns a component $Z\in \Max(X;Y)$ 
  (or $\emptyset$ if no such $Z$ exists)
  in $\theta_{\mathrm{1,t}}$ time and  
  $\theta_{\mathrm{1,s}}$ space; and 
\item[-] 
  given a non-empty subset $Y\subseteq V$, 
  $\mathrm{L}_2(Y)$  returns $\Max(Y)$
  in $\theta_{\mathrm{2,t}}$ time and  
  $\theta_{\mathrm{2,s}}$ space.
\end{itemize}
Given a volume function $\rho$,
we assume that whether $\rho(X)> 0$ holds or not can be
tested in $\theta_{\rho,\mathrm{t}}$ time 
and $\theta_{\rho,\mathrm{s}}$ space. 
We also denote by $\delta(X)$ an upper bound on $|\Max(X)|$,
where we assume that $\delta$ is a non-decreasing function in the sense that
$\delta(Y)\leq \delta(X)$ holds for any subsets $Y\subseteq X\subseteq V$.

We define an {\em instance} to be a tuple $\mathcal{I}=(V,\MC,I,\sigma)$  of  
a set $V$ of $n\geq 1$ elements, a family $\MC\subseteq 2^V$, 
a set $I$ of $q\geq 1$ items and a function $\sigma:V\to 2^I$.
Let  $\mathcal{I}=(V,\MC,I,\sigma)$ be an  instance.  
The common item set $\Item(X)$ over a subset $X\subseteq V$
is defined to be  $\Item(X) = \bigcap_{v\in X}\itemmap(v)$.
A  {\em solution\/} to instance $\mathcal{I}$ is defined 
to be a component $X\in \MC$
such that  
\[\mbox{  every component $Y\in \MC$ with $Y\supsetneq X$ satisfies  
 $\Item(Y)\subsetneq \Item(X)$. 
 }\] 
 Let $\MS$ denote the family of all solutions to   instance  $\mathcal{I}$.
 Our aim is to design an efficient algorithm for enumerating all solutions in $\MS$ when
 $\MC$ is transitive in instance $\mathcal{I}$.  

We call  an enumeration algorithm 
 $\mathcal A$ \\
~-   {\em output-polynomial}
if the overall computation time is polynomial with respect to\\
\quad the input and output size; \\
~-  {\em incremental-polynomial}  
if the computation time between the $i$-th output and \\
\quad the $(i-1)$-st output is bounded by a polynomial with respect to \\
\quad  the input size and $i$; and \\
~-  {\em polynomial-delay}  if the delay (i.e., the time between any two consecutive outputs),\\
\quad preprocessing time and postprocessing time are all bounded by a polynomial \\
\quad  with respect to the input size. \\ 
In this paper, we design an  algorithm
that   enumerates all solutions in  $\mathcal S$
by traversing a {\em family tree} over the solutions in  $\mathcal S$,
where the family tree is a tree structure that represents 
a parent-child relationship among solutions.
The following theorem summarizes our main result.

\begin{thm}   \label{thm:main}
  Let $\mathcal{I}=(V,\MC,I,\sigma)$ be 
  an instance on a transitive system $(V,\MC)$ with a volume function $\rho$,
  where $n=|V|$ and $q=|I|$. 
  All $\rho$-positive solutions in $\mathcal S$ 
  to the instance  $\mathcal{I}$ can be enumerated
  in $O\big(q \theta_{2,\mathrm{t}} 
  + (q(n+\theta_{1,\mathrm{t}})+\theta_{\rho,\mathrm{t}})q\delta(V)\big)$
  delay and
  in $O\big((q+n+\theta_{1,\mathrm{s}}+\theta_{2,\mathrm{s}}
  +\theta_{\rho,\mathrm{s}}) n\big)$
  space. 
\end{thm}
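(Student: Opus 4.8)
The plan is to design a \emph{family tree} over the set $\MS$ of solutions and then show that it can be traversed with polynomial delay. First I would fix a canonical total order $v_1,v_2,\dots,v_n$ on $V$ and, for each solution $X\in\MS$, define a \emph{parent solution} $P(X)$ that is ``smaller'' than $X$ in a suitable sense, so that iterating $P$ from any solution eventually reaches a unique root. A natural choice of root is the solution obtained by starting from the common item set of the whole ground set: compute $Y^{*}=$ the $\rho$-positive component maximizing $\Ind$ among all of $V$ (reachable via $\mathrm{L}_2(V)$ and closure under $\mathrm{L}_1$), and let that be the root. To define $P(X)$ for a non-root solution $X$, I would identify the ``last'' item or element that distinguishes $X$ from a strictly larger component, remove it, and then \emph{canonically re-close}: take the unique $X$-maximal $\rho$-positive component containing the resulting seed whose item set is inclusion-maximal. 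Transitivity of $\MC$ is exactly what guarantees this re-closure is well defined — the union of two components sharing a common sub-component is again a component, so ``the maximal $\rho$-positive component containing a given seed with a given item set'' is unique, which is what makes $P(X)$ single-valued.

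The second ingredient is the child-generation routine: given a solution $X$, enumerate all solutions $X'$ with $P(X')=X$. The standard technique here is to guess, for each candidate item $i\in I\setminus\Ind(X)$ (there are at most $q$ of them) and each $\mathrm{L}_2$-maximal sub-piece (there are at most $\delta(V)$ of them, found by one call to $\mathrm{L}_2$), a seed $S$; then invoke $\mathrm{L}_1$ to grow $S$ into a component with the prescribed item set, test $\rho$-positivity, and finally verify via the parent function that the resulting solution is indeed a child of $X$ (not of some other node). Each such check costs $O(q(n+\theta_{1,\mathrm{t}})+\theta_{\rho,\mathrm{t}})$ — one pass over the $q$ items, each pass doing an $\mathrm{L}_1$ call plus $O(n)$ bookkeeping, plus one volume test — and there are $O(q\delta(V))$ candidates, giving the $O((q(n+\theta_{1,\mathrm{t}})+\theta_{\rho,\mathrm{t}})q\delta(V))$ term in the delay bound. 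The additive $O(q\theta_{2,\mathrm{t}})$ term accounts for the $O(q)$ calls to $\mathrm{L}_2$ used in preprocessing at a node (one per relevant item level), and the space bound follows because along any root-to-node path we only ever store $O(n)$ sets, each described in $O(q+n+\theta_{1,\mathrm{s}}+\theta_{2,\mathrm{s}}+\theta_{\rho,\mathrm{s}})$ space, and the DFS stack has depth $O(n)$.

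Given these two routines, the enumeration itself is a depth-first traversal of the family tree: output the root, then for each output solution call the child-generation routine and recurse, using the alternating-output trick (emit a node on entry if its depth is even and on exit otherwise, or simply output each node and charge the descent/ascent cost against it) so that the time between consecutive outputs is bounded by the per-node work, i.e., the claimed delay. Correctness reduces to three lemmas I would prove in order: (i) $P$ maps $\MS\setminus\{\text{root}\}$ into $\MS$ and strictly decreases a potential (e.g. the pair (size, then item set) under an appropriate order), so the family tree is indeed a tree; (ii) every child of $X$ is produced by the guessing procedure above; and (iii) no component produced by the procedure is output twice, which holds because we keep $X'$ only when $P(X')=X$.

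The main obstacle I anticipate is (i): getting the parent map to be simultaneously well defined (using transitivity to make the re-closure canonical), to land inside $\MS$ rather than merely inside $\MC$ (the re-closed component must again be item-set-maximal among $\rho$-positive components, which requires care about how removing the distinguishing item interacts with $\rho$-positivity), and to be strictly decreasing so that no cycles arise. The volume function $\rho$ complicates this because $\rho$-positivity is not inherited in either direction in general, so the definition of $P(X)$ must be phrased purely in terms of $\rho$-positive components and one must check that the seed used to define $P(X)$ is itself contained in a $\rho$-positive solution; I expect the argument to hinge on monotonicity of $\rho$ together with the transitive closure property, mirroring how a single ``breaking'' element is removed and re-grown.
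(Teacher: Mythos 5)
Your high-level plan matches the paper's: build a family tree over $\MS$, compute parents via the oracles and transitivity, generate children by combining $\mathrm{L}_2$ calls with item-guessing and a parent-check, use alternating output to get polynomial delay, and prune by $\rho$. The cost accounting for the children routine ($O(q)$ item guesses $\times$ $\delta(V)$ maximal pieces $\times$ one parent computation costing $O(q(n+\theta_{1,\mathrm{t}}))$ plus a $\rho$-test) is essentially what the paper does. However, there are two genuine gaps in your construction.

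First, your parent map is not pinned down enough to verify well-definedness, and your own description is internally inconsistent: you say $P(X)$ should be ``smaller'' than $X$, but ``remove a distinguishing item and re-close'' makes the component grow. The paper resolves this cleanly by making the parent a \emph{superset} solution: after partitioning $\MS$ by the minimum common item ($\MS_k=\{X\in\MS : \min\Item(X)=k\}$), the parent $\pi(S)$ of a non-base $S\in\MS_k$ is the lex-min \emph{minimal superset solution} in $\MS_k$, computed greedily over $\Item(S)$ with $O(q)$ calls to $\mathrm{L}_1$. This partition by $\min\Item$ is the key structural idea missing from your proposal: it is what guarantees that parents stay in the same stratum $\MS_k$, that the map is well-founded (parents are strict supersets, so sizes strictly increase), and that the roots — the bases $\MB_k\subseteq\Max(V_{\langle k\rangle})$ with $\min\Item=k$ — are reachable by one $\mathrm{L}_2$ call each and genuinely cover all of $\MS_k$. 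Your single root ``obtained from $V$'' only captures $\MB_0$; the other strata need their own roots.

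Second, your treatment of $\rho$ overcomplicates things in a way that points to a misunderstanding. Solutions in $\MS$ are defined with no reference to $\rho$; $\rho$ is only a filter on which solutions to output. Because in the paper's family tree a child is a strict subset of its parent and $\rho$ is monotone, $\rho(T)\le 0$ for a node $T$ implies $\rho(S)\le 0$ for every descendant $S$, so one simply prunes the subtree at $T$. You do not need ``the re-closed component to be item-set-maximal among $\rho$-positive components'' — that condition is neither what the paper proves nor, as stated, a coherent requirement, and trying to build $\rho$-positivity into the definition of $\MS$ or of $P$ would break the clean structural arguments (e.g., uniqueness of maximal components via transitivity). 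The fix is to leave $\MS$ and $\pi$ $\rho$-free and use monotonicity of $\rho$ only for pruning.
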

The theorem indicates that, when
$\theta_{1,\mathrm{t}}$, $\theta_{2,\mathrm{t}}$, $\theta_{\rho,\mathrm{t}}$ and $\delta(V)$
are bounded by a polynomial of $n$ and $q$, 
all solutions are enumerable in polynomial-delay. 
Similarly, when $\theta_{1,\mathrm{s}}$, $\theta_{2,\mathrm{s}}$ and $\theta_{\rho,\mathrm{s}}$
are bounded by a polynomial of $n$ and $q$, 
the enumeration can be done in polynomial space with respect to the input size. 
Our algorithm in \thmref{main} is a framework that can be
applied to some enumeration problems
over graphs, as will be discussed
in Sections~\ref{sec:app.conn} and \ref{sec:app.subset}.

The paper is organized as follows.
\begin{itemize}
\item We prepare terminologies and notations in \secref{prel}. 
\item In \secref{transitive},
  we present a family-tree based algorithm that
  enumerates all solutions
  in a given instance, along with computational complexity analyses.
  We also show that the algorithm can be used
  to enumerate all components in the transitive system
  of the instance. 
\item
  The proposed algorithm can be applied to
  several problems of enumerating subgraphs 
  that satisfy certain connectivity conditions
  over a given graph.
  In \secref{app.ext.mixed},
  we show how to construct a transitive system
  from a given weighted/unweighted mixed graph
  so that each component in the resulting system
  corresponds to a required subgraph. 
\item In \secref{app.conn},
  we mention a significant application of our algorithm
  to the connector enumeration problem,
  which is used to extract meaningful structure 
  from gene networks~\cite{ASA.2019,SS.2008}.
  Given a graph such that each vertex
  is assigned items,
  the problem asks to enumerate all
  connected induced subgraphs that
  are maximal with respect to the common item set. 
  We show that our algorithm yields the first
  polynomial-delay algorithm for the problem
  even when we require stronger connectivity conditions
  such as $k$-edge/vertex-connectivity. 
\item In \secref{app.subset},
  applying the component enumeration algorithm,
  we obtain polynomial-delay algorithms
  that enumerate all $k$-edge-connected
  (resp., $k$-vertex-connected) induced subgraphs
  and that enumerate all $k$-edge-connected
  (resp., $k$-vertex-connected) spanning subgraphs
  in a given undirected/directed graph 
  for any $k$ (resp., a fixed $k$).
\item Finally  \secref{conc} makes some concluding remarks. 
\end{itemize}

\section{Preliminaries}\label{sec:prel}  

Let $\mathbb{R}$ (resp., $\mathbb{R}_+$) denote the set of reals 
(resp., non-negative reals).
For a function $f: A\to \mathbb{R}$ for a finite subset $A$ 
and a subset $B\subseteq A$,
we let $f(B)$  denote  $\sum_{a\in B}f(a)$. 

For two integers $a$ and $b$, let $[a,b]$ denote the set of integers
$i$ with $a\leq i\leq b$.
For a set $A$ with a total order $<$ over the elements in $A$,
we define a total order $\prec$ over the subsets of $A$ as follows.
For two  subsets $J,K\subseteq A$, 
we denote by $J\prec K$
if the minimum element in $(J\setminus K)\cup(K\setminus J)$ belongs to $J$.
We denote $J\preceq K$ if $J\prec K$ or $J=K$.
Note that $J\preceq K$ holds whenever $J\supseteq K$. 
Let $a_{\max}$ denote the maximum element in $A$. 
Then $J\prec K$ holds for 
$J=\{j_1,j_2,\ldots,j_{|J|}\}$, $j_1<j_2<\cdots<j_{|J|}$ and 
$K=\{k_1,k_2,\ldots,k_{|K|}\}$, $k_1<k_2<\cdots<k_{|K|}$,
if and only if 
 the sequence $(j_1,j_2,\ldots,j_{|J|},j'_{|J|+1},j'_{|J|+2},\ldots, j'_{|A|})$ 
 of length $|A|$  with $j'_{|J|+1}=j'_{|J|+2}=\cdots= j'_{|A|}=a_{\max}$
 is  lexicographically smaller than 
 the  sequence  $(k_1,k_2,\ldots,k_{|K|},k'_{|K|+1},k'_{|K|+2},\ldots, k'_{|A|})$
 of length $|A|$  with $k'_{|K|+1}=k'_{|K|+2}=\cdots=k'_{|A|}=a_{\max}$. 
Hence we see that $\preceq$ is a total order on $2^A$.

We start with an important property on components in a transitive system. 

\begin{lem}  \label{lem:unique} 
Let $(V,\MC)$ be a transitive system. 
For a component $X\in  \MC$ and a superset $Y\supseteq X$,
there is exactly one  component in  $\Max(X;Y)$. 
\end{lem}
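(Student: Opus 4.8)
The plan is to establish existence and uniqueness separately, both leaning on the transitivity axiom applied with $Z = X$ itself.

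For existence, I would argue that the family $\{W \in \MC : X \subseteq W \subseteq Y\}$ is nonempty (it contains $X$) and closed under union: if $W_1, W_2$ both lie in this family, then $X \subseteq W_1 \cap W_2$, so transitivity gives $W_1 \cup W_2 \in \MC$, and clearly $X \subseteq W_1 \cup W_2 \subseteq Y$. Since $V$ is finite, taking the union of all members of this family yields a component $C^*$ with $X \subseteq C^* \subseteq Y$ that contains every such $W$; in particular no component $W'$ satisfies $C^* \subsetneq W' \subseteq Y$, so $C^* \in \Max(X;Y)$.

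For uniqueness, suppose $C_1, C_2 \in \Max(X;Y)$. Then $X \subseteq C_1 \cap C_2$ with $C_1, C_2 \in \MC$, so transitivity (with the role of $Z$ played by $X$) gives $C_1 \cup C_2 \in \MC$, and $X \subseteq C_1 \cup C_2 \subseteq Y$. By $X$-maximality of $C_1$ within subsets of $Y$, the containment $C_1 \subseteq C_1 \cup C_2 \subseteq Y$ forces $C_1 = C_1 \cup C_2$, hence $C_2 \subseteq C_1$; symmetrically $C_1 \subseteq C_2$, so $C_1 = C_2$.

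I do not anticipate a genuine obstacle here — the statement is essentially the observation that transitivity makes the relevant sub-family a (join-)semilattice, so it has a unique maximal (indeed maximum) element. The only point requiring a little care is to phrase the existence argument without circularity: one should note that $X \in \MC$ itself witnesses nonemptiness, and then invoke finiteness of $V$ to justify forming the union of all components sandwiched between $X$ and $Y$, rather than appealing to maximality before existence is established.
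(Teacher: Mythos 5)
Your proof is correct and rests on the same key observation as the paper's: applying transitivity with $Z=X$ shows that the family of components sandwiched between $X$ and $Y$ is closed under pairwise union, which immediately yields a unique maximum. The paper organizes it slightly differently — it takes existence of some $C\in\Max(X;Y)$ for granted from finiteness and then shows $C\cup W=C$ for every sandwiched $W$, collapsing existence and uniqueness into one step — but this is the same argument, not a different route.
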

\begin{proof}
Since $X\subseteq  Y$, 
$\Max(X;Y)$ contains a $Y$-maximal component $C$. 
For any  component $W\in \MC$ 
with $X\subseteq W\subseteq Y$,  the transitivity of $\MC$
and $X\subseteq C\cap W$ imply $C\cup W\in \MC$,
where   $C\cup W=C$ must hold by the $Y$-maximality of $C$. 
Hence $C$ is unique. 
\end{proof} 

\noindent
For a  component $X\in \MC$  and a superset $Y\supseteq X$,
let $C(X;Y)$ denote the unique component in $\Max(X;Y)$.

Suppose that an instance $(V,\MC,I,\sigma)$ is given. 
To facilitate our aim, we introduce a total order over the items in $I$  
by representing $I$ as a set $[1,q]=\{1,2,\ldots,q\}$ of integers.
For each subset $X\subseteq V$, let 
$\min \Item(X)\in [0,q]$ denote the minimum item in
  $\Item(X)$, where $\min \Item(X)\triangleq 0$ for $\Item(X)=\emptyset$.
For each $i\in [0,q]$, define a family of solutions in $\MS$, 
\[\MS_i\triangleq \{X\in \MS\mid \min \Item(X)=i\}.\] 
Note that $\MS$ is a disjoint union of $\MS_i$, $i\in [0,q]$.
In Section~\ref{sec:trav},
we will design an algorithm that enumerates 
all solutions in $\MS_k$ for any specified integer $k\in [0,q]$.

\section{Enumerating Solutions in Transitive System} 
\label{sec:transitive}

\subsection{Defining Family Tree}\label{sec:defn}

To generate all solutions in $\MS$ efficiently,
we use the idea of family tree, where we first introduce
a parent-child relationship among solutions, which defines
a rooted tree (or a set of rooted trees), and
we traverse each tree starting from the root 
and generating the children of a solution recursively. 
Our tasks to establish such an enumeration algorithm are as follows:
\begin{itemize}
\item[-] 
 Select some solutions from the set $\MS$ of solutions 
 as  the roots, called ``bases;'' 
\item[-]Define the ``parent'' $\pi(S)\in \MS$ of 
each non-base solution $S\in \MS$,
  where the solution $S$ is called a ``child'' of the solution $T=\pi(S)$;  
\item[-] Design an algorithm~A that, given a solution $S\in \MS$,
  returns its parent $\pi(S)$; and  
\item[-] Design an algorithm~B that, given a solution $T\in \MS$, 
generates
  a set $\mathcal{X}$ of components $X\in\MC$ such that 
  $\mathcal{X}$ contains all children of $T$. 
We can test whether each component $X\in\mathcal{X}$ is a child of $T$
by constructing  $\pi(X)$ by algorithm~A and checking  if  $\pi(X)$
 is equal to $T$.
\end{itemize}
Starting from each base, we recursively generate the children of a solution. 
 The complexity of delay-time of the entire algorithm depends on
 the time complexity of algorithms A and B, 
 where $|\mathcal{X}|$ is bounded from above
 by the time complexity of algorithm~B.
 
\subsection{Defining Base}

Let $(V,\MC,I=[1,q],\sigma)$ be an instance on a transitive system. 
We define subsets $V_{\angleb{0}}\triangleq V$
and   $V_{\angleb{i}}\triangleq \{v\in V\mid i\in \itemmap(v)\}$
 for each item $i\in I$.
For each non-empty subset $J\subseteq I$, define subset 
$V_{\angleb{J}}\triangleq \bigcap_{i\in J}V_{\angleb{i}}=\{v\in V\mid J\subseteq \sigma(v)\}$.
For $J=\emptyset$, define $V_{\angleb{J}}\triangleq  V$. 
  For each integer $i\in [0,q]$, define a set of solutions 
 \[\mbox{
 $\MB_i\triangleq \{X\in \Max(V_{\angleb{i}})\mid \min \Item(X)=i\}$,}\]
  and $\MB\triangleq \bigcup_{i\in [0,q]}\MB_i$.
 We call each component in $\MB$ a {\em base}. 
 
\begin{lem}  \label{lem:base}
Let $(V,\MC,I=[1,q],\sigma)$ be an instance on a transitive system.
\begin{enumerate}
\item[{\rm (i)}] For each non-empty set $J\subseteq [1,q]$ or $J=\{0\}$, 
it holds that   $\Max(V_{\angleb{J}})\subseteq \MS$;   
\item[{\rm (ii)}] For each $i\in [0,q]$,  any solution $S\in  \MS_i$ 
  is contained in a base in $\MB_i$;  and  
\item[{\rm (iii)}] $\MS_0=\MB_0$ and  $\MS_q=\MB_q$. 
\end{enumerate}
\end{lem}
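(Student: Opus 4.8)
I would prove the three parts in order, since (ii) and (iii) will lean on (i). For part (i), the key observation is that $V_{\angleb{J}}$ is exactly the set of elements whose item set contains $J$, so $J\subseteq\Item(C)$ for every $C\subseteq V_{\angleb{J}}$; in particular $\Item(C)\neq\emptyset$ when $J$ is non-empty, and one must handle the degenerate index $J=\{0\}$ (which gives $V_{\angleb{J}}=V$) separately, noting that a $V$-maximal component has no proper component superset at all. The plan is: take $C\in\Max(V_{\angleb{J}})$ and any component $Y\in\MC$ with $Y\supsetneq C$. I want to show $\Item(Y)\subsetneq\Item(C)$. Since $C\subseteq V_{\angleb{J}}\subseteq Y$ is impossible unless $Y\subseteq V_{\angleb{J}}$ would contradict $C$'s maximality, some element $y\in Y\setminus C$ must lie outside $V_{\angleb{J}}$; otherwise $Y\subseteq V_{\angleb{J}}$ and $C\subsetneq Y$ violates $C\in\Max(V_{\angleb{J}})$. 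Hence there is an item $i\in J$ with $i\notin\sigma(y)$, so $i\in\Item(C)\setminus\Item(Y)$, giving $\Item(Y)\subsetneq\Item(C)$ (the $\subseteq$ direction is automatic from $C\subseteq Y$). Thus $C\in\MS$.

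For part (ii), let $S\in\MS_i$, so $\min\Item(S)=i$. If $i=0$ then $\Item(S)=\emptyset$; the relevant base set is $\MB_0$ built from $V_{\angleb{0}}=V$, and I would argue $S$ is a $V$-maximal component (any strict component superset $Y$ would need $\Item(Y)\subsetneq\Item(S)=\emptyset$, impossible, so no such $Y$ exists, i.e. $S\in\Max(V)$), and $\min\Item(S)=0$ places it in $\MB_0$; in fact $S$ equals a base here. If $i\geq 1$ then $i\in\Item(S)$, so $S\subseteq V_{\angleb{i}}$, and by \lemref{unique} there is a unique component $C(S;V_{\angleb{i}})\in\Max(S;V_{\angleb{i}})$ containing $S$. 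I claim this component, call it $B$, lies in $\MB_i$: it is $V_{\angleb{i}}$-maximal by construction, and since $i\in\Item(B)$ (as $B\subseteq V_{\angleb{i}}$) we have $\min\Item(B)\leq i$; to get equality I use that $S\subseteq B$ forces $\Item(B)\subseteq\Item(S)$, so no item smaller than $i$ can be in $\Item(B)$ either, whence $\min\Item(B)=i$ and $B\in\MB_i$. So $S\subseteq B\in\MB_i$.

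For part (iii), the case $i=0$ was essentially handled above: $\MS_0\subseteq\MB_0$ was shown, and conversely $\MB_0\subseteq\Max(V)\subseteq\MS$ by part (i) with the containment in $\MS_0$ following from the defining condition $\min\Item(X)=0$ in $\MB_0$. For $i=q$: if $S\in\MS_q$ then $\min\Item(S)=q$ forces $\Item(S)=\{q\}$ (it is a non-empty subset of $[1,q]$ with minimum $q$), so $S\subseteq V_{\angleb{q}}$; I would show $S$ is $V_{\angleb{q}}$-maximal, because any component $Y$ with $S\subsetneq Y\subseteq V_{\angleb{q}}$ has $q\in\Item(Y)$ and $\Item(Y)\subseteq\Item(S)=\{q\}$, so $\Item(Y)=\Item(S)$, contradicting $S\in\MS$; hence $S\in\Max(V_{\angleb{q}})$ and, with $\min\Item(S)=q$, $S\in\MB_q$. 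The reverse inclusion $\MB_q\subseteq\MS_q$ is immediate from part (i) plus the minimum-item condition in the definition of $\MB_q$.

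The only real obstacle is getting the degenerate indices straight — the interplay between $J=\{0\}$, $V_{\angleb{0}}=V$, the convention $\min\Item(X)=0$ when $\Item(X)=\emptyset$, and whether $0$ is treated as an item — and making sure the maximality arguments in (i) correctly distinguish "$Y$ escapes $V_{\angleb{J}}$'' from "$Y$ stays inside $V_{\angleb{J}}$.'' Everything else is a routine chain of $\Item$-monotonicity ($X\subseteq Y\Rightarrow\Item(Y)\subseteq\Item(X)$) together with one invocation of \lemref{unique}.
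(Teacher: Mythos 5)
Your proof is correct and follows essentially the same route as the paper's: (i) via $J\subseteq\Item(C)$ and $V_{\angleb{J}}$-maximality forcing $\Item(Y)\subsetneq\Item(C)$ for every component $Y\supsetneq C$, (ii) via the unique component $C(S;V_{\angleb{i}})$ from \lemref{unique} together with $\Item$-monotonicity to pin down its minimum item, and (iii) via the degenerate structure of $\Item$ at $k=0$ and $k=q$. The only stylistic departures are that you argue (i) directly rather than by contradiction, and for $k=q$ in (iii) you establish $V_{\angleb{q}}$-maximality of $S$ from scratch rather than deducing $S=X$ from the base $X\supseteq S$ supplied by (ii), which is what the paper does.
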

\begin{proof} 
(i) 
  Let $X$ be a   component in $\Max(V_{\angleb{J}})$.
  Note that  $J\subseteq \Item(X)$ holds.
  When  $J=\{0\}$ (i.e., $V_{\angleb{J}}=V$), 
   no proper superset of $X$ is a component,   and $X$ is a solution.  
  Consider the case of $\emptyset\neq J\subseteq [1,q]$.  
  To derive a contradiction, assume that $X$ is not a solution; i.e.,
  there is a proper superset $Y$ of $X$ such that $\Item(Y)=\Item(X)$.
  Since $\emptyset\neq J\subseteq \Item(X)=\Item(Y)$, 
  we see that $V_{\angleb{J}}\supseteq Y$.
  This, however, contradicts the $V_{\angleb{J}}$-maximality of $X$. 
   This proves that $X$ is a solution. 
   
 (ii)
  We prove that each solution $S\in  \MS_i$ 
 is contained in a base in $\MB_i$.
 Note that  $i=\min \Item(S)$ holds.
 By definition, it holds that $S\subseteq V_{\angleb{i}}$.
 Let $C\in  \Max(S;V_{\angleb{i}})$ be a solution. 
 Note that $\Item(S)\supseteq \Item(C)$ holds. 
 Since $i\in \Item(C)$ for $i\geq 1$
 (resp., $\Item(C)=\emptyset$ for $i=0$),
 we see that $\min \Item(S)=i=\min \Item(C)$. 
 This proves that $C$ is a base in $\MB_i$.  
 Therefore $S$ is contained in a base $C\in \MB_i$.  
 
 (iii) 
 Let $k\in \{0,q\}$. 
 We see from (i)  that $\Max(V_{\angleb{k}})\subseteq \MS$,
 which implies that  
  $\MB_k
  =\{X\in \Max(V_{\angleb{k}})\mid \min \Item(X)=k\}
  \subseteq \{X\in \MS\mid \min \Item(X)=k\} =\MS_k$.
 We prove that any solution $S\in \MS_k$ is a base in $\MB_k$.
 By (ii), there is a base $X\in \MB_k$ such that $S\subseteq X$,
 which implies that 
 $\Item(S)\supseteq \Item(X)$ and $\min\Item(S)\leq \min\Item(X)$.
 We see that  $ \Item(S)=  \Item(X)$, since 
 $\emptyset=\Item(S)\supseteq \Item(X)$ for $k=0$,
 and $q=\min\Item(S)\leq \min\Item(X)\leq q$ for $k=q$.
Hence $S\subsetneq X$ would contradict that $S$ is a solution.
Therefore 
$S=X\in  \MB_k$,
as required.  
\end{proof}

  Lemma~\ref{lem:base}(iii) tells that all solutions in $\MS_0\cup \MS_q$
  can be found 
by calling oracle $\mathrm{L}_2(Y)$ for $Y=V_{\angleb{0}}=V$ and 
$Y=V_{\angleb{q}}$.
In the following, we consider how to generate 
all solutions in $\MS_k$ for each item $k\in [1,q-1]$.
 
For a notational convenience, 
let  $C(X;i)$ for each item $i\in \Item(X)$ denote 
the component $C(X;V_{\angleb{i}})$ and 
let  $C(X;J)$ for each subset $J\subseteq \Item(X)$
   denote the component $C(X;V_{\angleb{J}})$.

\begin{lem}
  \label{lem:superset}
  Let $(V,\MC,I=[1,q],\sigma)$ be an instance on a transitive system.
  Any two solutions $S,T\in\MS$ such that $S\subseteq T$ satisfy 
   $T=C(S;\Item(T))$.
\end{lem}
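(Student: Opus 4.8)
The plan is to show the two inclusions $T \subseteq C(S;\Item(T))$ and $T \supseteq C(S;\Item(T))$ separately, where recall $C(S;\Item(T)) = C(S; V_{\angleb{\Item(T)}})$ is the unique component in $\Max(S; V_{\angleb{\Item(T)}})$ guaranteed by \lemref{unique}. First I would check that this object is even well-defined: since $S \subseteq T$ we have $\Item(S) \supseteq \Item(T)$, hence $S \subseteq V_{\angleb{\Item(T)}}$ by the definition of $V_{\angleb{J}}$, so \lemref{unique} applies with $X = S$ and $Y = V_{\angleb{\Item(T)}}$, and $C(S;\Item(T))$ exists and is unique. Write $C \triangleq C(S;\Item(T))$ for brevity.

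For the inclusion $T \subseteq C$: I would observe that $T$ is itself a component with $S \subseteq T$, and moreover $\Item(T) \subseteq \sigma(v)$ for every $v \in T$, so $T \subseteq V_{\angleb{\Item(T)}}$. Thus $T \in \MC$ satisfies $S \subseteq T \subseteq V_{\angleb{\Item(T)}}$, i.e.\ $T$ is one of the components whose union with the $V_{\angleb{\Item(T)}}$-maximal component $C$ must lie in $\MC$ by transitivity (using $S \subseteq C \cap T$). Then $C \cup T \in \MC$ and $C \cup T \subseteq V_{\angleb{\Item(T)}}$, so $V_{\angleb{\Item(T)}}$-maximality of $C$ forces $C \cup T = C$, i.e.\ $T \subseteq C$. (This is essentially the uniqueness argument of \lemref{unique} re-run with $T$ in place of $W$.)

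For the reverse inclusion $T \supseteq C$: here is where the solution hypothesis on $T$ enters. We have $S \subseteq C$, hence $\Item(C) \supseteq \Item(T) \supseteq$ — wait, more carefully: $C \subseteq V_{\angleb{\Item(T)}}$ means $\Item(T) \subseteq \sigma(v)$ for all $v \in C$, so $\Item(T) \subseteq \Item(C)$. Combined with $T \subseteq C$ from the previous paragraph, which gives $\Item(C) \subseteq \Item(T)$, we conclude $\Item(C) = \Item(T)$. Now $C \in \MC$ with $T \subseteq C$ and $\Item(C) = \Item(T)$; if $T \subsetneq C$ this would witness that $T$ is \emph{not} a solution, contradicting $T \in \MS$. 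Hence $T = C = C(S;\Item(T))$, as required.

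I do not expect a serious obstacle here; the lemma is a fairly direct consequence of \lemref{unique} together with the definitions. The one point requiring a little care is the bookkeeping on common item sets — establishing $\Item(T) \subseteq \Item(C)$ from $C \subseteq V_{\angleb{\Item(T)}}$ and the opposite containment from $T \subseteq C$ — so that the solution property of $T$ can be invoked to collapse $C$ down to $T$. Everything else is transitivity applied exactly as in the proof of \lemref{unique}.
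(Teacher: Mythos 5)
Your proof is correct and follows essentially the same route as the paper: first $T\subseteq C(S;\Item(T))$ by transitivity and $V_{\angleb{\Item(T)}}$-maximality, then $T=C(S;\Item(T))$ by invoking the solution property of $T$ against the inclusion $C(S;\Item(T))\subseteq V_{\angleb{\Item(T)}}$. The only cosmetic difference is that the paper detours through \lemref{base}(i) to note $C(S;\Item(T))$ is itself a solution before deriving the item-set contradiction, whereas you apply the solution property of $T$ directly after establishing $\Item(C)=\Item(T)$ — a slightly leaner version of the same argument.
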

\begin{proof}
  Let $T'=C(S;\Item(T))\in \Max(V_{\angleb{\Item(T)}})$.
  Note that $S\subseteq T\subseteq  V_{\angleb{\Item(T)}}$ holds.
  The uniqueness of maximal component  $T'=C(S;\Item(T))$ 
  by \lemref{unique} 
  indicates  $T\subseteq T'$. 
  To derive a contradiction, assume that   $T\subsetneq T'$. 
  By \lemref{base}(i), 
  $T'\in \Max(V_{\angleb{\Item(T)}})$ is a solution.
  Since $T$ and $T'$ are solutions such that $T\subsetneq T'$, 
  it must hold that $\Item(T)\supsetneq \Item(T')$,
  implying that $V_{\angleb{\Item(T)}}\not\supseteq T'$,
  a contradiction.
Therefore we have $T=T'$. 
\end{proof}

\subsection{Defining Parent}

This subsection defines the ``parent'' of a non-base solution. 
For two solutions $S,T\in \MS$,
we say that $T$ is a {\em superset solution} of $S$ 
if  $T\supsetneq S$ and $S,T\in \MS_i$ for some item $i\in [1,q-1]$.
A superset solution $T$ of a solution  $S\in \MS$  is called {\em minimal} 
if no proper subset $Z\subsetneq T$ is a  superset solution of $S$.
Let $S$ be a non-base solution in $\MS_k\setminus \MB_k$
 for some item $k\in [1,q-1]$.
We call a minimal superset solution $T$ of $S$ 
{\em the lex-min solution} of $S$ if 
$\Item(T)  \preceq \Item(T')$ holds
for all minimal superset solutions $T'$ of $S$.

\begin{algorithm}[h]
  \caption{{\sc Parent}$(S)$: 
  Finding the lex-min solution of a  solution $S$ }
  \label{alg:parent_c}
  \begin{algorithmic}[1]
    \Require An instance $(V,\MC,I=[1,q],\itemmap)$  on a transitive system, 
      an item $k\in  [1,q-1]$, and
      a non-base solution $S\in \MS_k\setminus \MB_k$, where $k= \min \Item(S)$.
    \Ensure
     The lex-min solution $T\in \MS_k$ of $S$.
     %
     \State Let $\{k,i_1,i_2,\ldots,i_p\}:=\Item(S)$, where $k<i_1<i_2<\cdots<i_p$;
     \label{line:parent_init}
      \State $J:=\{k\}$;     \Comment{$C(S; k)\supsetneq S$ by $S\not\in \MB_k$}
      \For {{\bf each} integer $j=1,2,\ldots,p$} 
     \label{line:parent_for}
         \If {$C(S;J\cup\{i_j\})\supsetneq S $ } 
         \label{line:parent_eq}
            \State $J:=J\cup\{i_j\}$  
         \EndIf 
      \EndFor; \Comment{$J=\Item(T)$ holds}
     \label{line:parent_endfor}
      \State Output  $T:=C(S;J)$ 
  \end{algorithmic}
\end{algorithm}

\begin{lem}   \label{lem:greedy_minimal} 
Let $(V,\MC,I=[1,q],\sigma)$ be an instance on a transitive system,
  $S\in \MS_k\setminus \MB_k$ be a  non-base  solution for some item $k\in  [1,q-1]$, 
and  $T$ denote the  lex-min solution of $S$. 
Denote $\Item(S)$ by $\{k, i_1,i_2,\ldots,i_p\}$ so that $k<i_1<i_2<\cdots<i_p$.
Then:  
\begin{enumerate}
\item[{\rm (i)}]
For each integer $j\in [1,p]$, 
 $i_j\in \Item(T)$ holds if and only if $C(S;J\cup\{i_j\})\supsetneq S$ holds
for the item set   $J=\Item(T)\cap \{k, i_1,i_2,\ldots,i_{j-1}\}$; and

\item[{\rm (ii)}] 
  {\sc Parent}$(S)$ in \algref{parent_c}
  correctly delivers 
  the lex-min solution of $S$ in $O(q(n+\theta_{\mathrm{1,t}}))$ time
  and $O(q+n+\theta_{\mathrm{1,s}})$ space. 
\end{enumerate}
\end{lem}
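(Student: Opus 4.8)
The plan is to prove (i) first, as a structural description of the common item set $\Item(T)$ of the lex-min solution, and then to obtain (ii) by reading \algref{parent_c} as a greedy computation of $\Item(T)$ guided by (i). Throughout write $\Item(S)=\{k,i_1,\dots,i_p\}$ with $k<i_1<\cdots<i_p$. I would first record some easy facts. Since $S\notin\MB_k$, the component $C(S;k)$ properly contains $S$; by \lemref{base}(i) it is a solution, and as $k\in\Item(C(S;k))\subseteq\Item(S)$ its minimum item is $k$, so $C(S;k)$ is a superset solution of $S$. Hence superset solutions of $S$ exist, a minimal one exists, and since $\preceq$ is a total order on subsets of $[1,q]$ the lex-min solution $T$ is well defined (uniqueness follows from \lemref{superset}). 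Moreover, since $T$ is a superset solution of $S\in\MS_k$ we have $T\in\MS_k$, hence $k\in\Item(T)$; since $S\subseteq T$ we have $\Item(T)\subseteq\Item(S)$; and consequently, for every $j\in[1,p]$, the elements of $\Item(T)$ smaller than $i_j$ are exactly those in $J:=\Item(T)\cap\{k,i_1,\dots,i_{j-1}\}$.

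Fix $j\in[1,p]$ with $J$ as above. For the forward implication of (i), assume $i_j\in\Item(T)$. Then $J\cup\{i_j\}\subseteq\Item(T)$, so $T$ is a component with $S\subseteq T\subseteq V_{\angleb{\Item(T)}}\subseteq V_{\angleb{J\cup\{i_j\}}}$; as $C(S;J\cup\{i_j\})$ is the unique $\subseteq$-maximal component between $S$ and $V_{\angleb{J\cup\{i_j\}}}$ (\lemref{unique}), it contains $T$, so $C(S;J\cup\{i_j\})\supseteq T\supsetneq S$. For the backward implication, suppose for contradiction that $C(S;J\cup\{i_j\})\supsetneq S$ but $i_j\notin\Item(T)$, and set $D:=C(S;J\cup\{i_j\})$. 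By \lemref{base}(i) $D$ is a solution, $D\supsetneq S$, and $\min\Item(D)=k$ (as $k\in J\cup\{i_j\}\subseteq\Item(D)\subseteq\Item(S)$), so $D$ is a superset solution of $S$. Pick $D'\subseteq D$ of minimum cardinality among superset solutions $Z$ with $S\subsetneq Z\subseteq D$; then $D'$ admits no superset solution as a proper subset, so $D'$ is a minimal superset solution, and $J\cup\{i_j\}\subseteq\Item(D)\subseteq\Item(D')$. Now compare $\Item(D')$ with $\Item(T)$: every element of $\Item(T)$ below $i_j$ lies in $J\subseteq\Item(D')$, so the symmetric difference of $\Item(D')$ and $\Item(T)$ contains no element of $\Item(T)$ below $i_j$; since $i_j\in\Item(D')\setminus\Item(T)$, the minimum of this symmetric difference belongs to $\Item(D')$, i.e.\ $\Item(D')\prec\Item(T)$. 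This contradicts the lex-minimality of $\Item(T)$ over all minimal superset solutions of $S$, so $i_j\in\Item(T)$, which proves (i). I expect this backward implication — isolating the minimal superset solution $D'$ and certifying $\Item(D')\prec\Item(T)$ through the observation that the part of $\Item(T)$ below $i_j$ equals $J$ — to be the main obstacle; the rest is bookkeeping.

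For (ii), I would prove by induction on $j$ the invariant that at the start of the $j$-th iteration of the loop of \algref{parent_c} one has $J=\Item(T)\cap\{k,i_1,\dots,i_{j-1}\}$. The base case $j=1$ holds since $J=\{k\}$ and $k\in\Item(T)$. For the inductive step, by (i) the test $C(S;J\cup\{i_j\})\supsetneq S$ on \lineref{parent_eq} succeeds exactly when $i_j\in\Item(T)$, which is exactly when the algorithm adds $i_j$ to $J$; hence the invariant is preserved. After the loop $J=\Item(T)\cap\Item(S)=\Item(T)$, so the returned component $C(S;J)=C(S;\Item(T))$ equals $T$ by \lemref{superset}. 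For the running time, computing $\Item(S)$ takes $O(qn)$ time; the loop runs $p=O(q)$ times, and each iteration maintains $V_{\angleb{J}}$ incrementally by intersecting it with $V_{\angleb{i_j}}$ in $O(n)$ time, calls $\mathrm{L}_1(S,V_{\angleb{J\cup\{i_j\}}})$ once in $\theta_{\mathrm{1,t}}$ time to get $C(S;J\cup\{i_j\})$, and tests $\supsetneq S$ in $O(n)$ time; this sums to $O(q(n+\theta_{\mathrm{1,t}}))$. Since only $S$, $\Item(S)$, $J$, $V_{\angleb{J}}$, the returned component, and the workspace of $\mathrm{L}_1$ need be kept, the space is $O(q+n+\theta_{\mathrm{1,s}})$.
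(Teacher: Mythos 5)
Your proof is correct and follows essentially the same strategy as the paper's: characterize $\Item(T)$ item by item via (i), then read \algref{parent_c} as the greedy realization of that characterization for (ii). The only cosmetic differences are that you prove the forward implication of (i) directly (rather than by contrapositive through the case $C(S;J\cup\{i_j\})=S$ as the paper does) and you spell out the extraction of a minimal superset solution $D'$ inside $D=C(S;J\cup\{i_j\})$ by a minimum-cardinality argument, a step the paper asserts without elaboration; the lex-comparison certifying $\Item(D')\prec\Item(T)$ and the loop-invariant induction for (ii) match the paper's reasoning.
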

\begin{proof}
  (i) By Lemma~\ref{lem:base}(i) and $\min\Item(S)=k$, 
we see that $C(S;J\cup\{i_j\}) \in\MS_k$
for any integer $j\in [1,p]$. \\
\noindent
{\bf  Case~1.} $C(S;J\cup\{i_j\})=S$: 
 For any set $J'\subseteq \{i_{j+1},i_{j+2},\ldots,i_p\}$,
the component $C(S;J\cup\{i_j\}\cup J')$ is equal to $S$ and cannot be
a minimal superset solution of $S$.
This  implies that $i_j\not\in \Item(T)$.\\
\noindent
{\bf  Case~2.}  $C(S;J\cup\{i_j\})\supsetneq S$:
Then $C=C(S;J\cup\{i_j\})$ is a solution by Lemma~\ref{lem:base}(i).
Observe that $k\in J\cup\{i_j\}\subseteq \Item(C)\subseteq \Item(S)$ and
$\min\Item(C)=k$, implying that $C\in \MS_k$ is a superset solution of $S$.
Then $C$ contains
a minimal superset solution $T^*\in \MS_k$ of $S$, where 
$\Item(T^*)\cap[1,i_{j-1}]=\Item(T^*)\cap \{k,i_1,i_2,\ldots,i_{j-1}\}\supseteq 
J= \Item(T)\cap \{k,i_1,i_2,\ldots,i_{j-1}\}=\Item(T)\cap[1,i_{j-1}]$ 
and $i_j \in\Item(T^*)$.
If $\Item(T^*)\cap[1,i_{j-1}]\supsetneq J$ or $i_j\not\in \Item(T)$, 
 then $\Item(T^*)\prec \Item(T)$ would hold, contradicting that $T$ is 
the lex-min solution of $S$.
Hence $\Item(T)\cap[1,i_{j-1}]=J=\Item(T^*)\cap[1,i_{j-1}]$
and    $i_j\in \Item(T)$.

(ii) Based on (i), we can obtain the solution $T$ as follows.
First  we find the item set $\Item(T)$ by applying (i) 
to each integer $j\in[1,p]$,
where we construct subsets 
 $J_0\subseteq J_1\subseteq\cdots \subseteq J_p \subseteq\Item(S)$ such that
  $J_0=\{k\}$ and
\begin{align*}
  J_j&=\left\{
  \begin{array}{ll}
    J_{j-1}\cup\{i_j\} & \textrm{if\ }C(S;J_{j-1}\cup\{i_j\})\supsetneq S,\\
    J_{j-1} & \textrm{otherwise}. 
  \end{array}
    \right.
\end{align*}
Each subset  $J_j$ can be obtained from  subset $J_{j-1}$
by testing whether $C(S;J_{j-1}\cup\{i_j\})\supsetneq S$ holds or not,
where $C(S;J_{j-1}\cup\{i_j\})$ is computable by calling the oracle
$\mathrm{L}_1$.
By (i), we have $J_j=\Item(T)\cap\{k,i_1,\dots,i_j\}$,
and in particular, $J_p=\Item(T)$ holds.
Next we compute the component $C(S;J_p)$ by calling the oracle 
$\mathrm{L}_1(S,V_{\angleb{J_p}})$, 
where   $C(S;J_p)$ is equal to the solution $T$ by \lemref{superset}.
The above algorithm is described 
as  algorithm {\sc Parent}$(S)$ in \algref{parent_c}.

Let us mention critical parts in terms of time complexity analysis. 
In \lineref{parent_init}, it takes $O(qn)$ time to compute $\Item(S)$.
The for-loop from \lineref{parent_for} to
\ref{line:parent_endfor} is repeated $O(q)$ times.
In \lineref{parent_eq}, the oracle $\mathrm{L}_1(S,V_{\angleb{J\cup\{i_j\}}})$
is called to obtain a component $Z=C(S;J\cup\{i_j\})$
and whether $S=Z$ or not is tested.
This takes $O(\theta_{\mathrm{1,t}}+n)$ time.
The overall running time is $O(q(n+\theta_{\mathrm{1,t}}))$.
It takes $O(q)$ space to store $\Item(S)$ and $J$,
and $O(n)$ space to store $S$ and $Z$.
An additional $O(\theta_{\mathrm{1,s}})$
space is needed for the oracle $\mathrm{L}_1$. 
\end{proof}

For each item  $k\in [1,q-1]$,
 we define the {\em parent} $\pi(S)$ of a non-base solution 
 $S\in \MS_k\setminus \MB_k$ to be the lex-min solution  of $S$,
  and   define  a {\em child} of  a solution $T\in \MS_k$
  to be a non-base solution $S\in \MS_k\setminus \MB_k$ such that $\pi(S)=T$.

\subsection{Generating Children}

This subsection shows how to construct a family $\mathcal{X}$ of
components for a given solution $T$ so that $\mathcal{X}$ contains
all children of $T$. 

\begin{lem}   \label{lem:child_candidate}  
Let $(V,\MC,I=[1,q],\sigma)$ be an instance on a transitive system
and   $T\in \MS_k$ be a solution for some item $k\in  [1,q-1]$.
Then:
 \begin{enumerate}
 \item[{\rm (i)}] 
   Every child $S$ of $T$   
   satisfies  $[k+1,q]\cap (\Item(S)\setminus \Item(T)) \neq\emptyset$
   and is a component in $\Max(T\cap V_{\angleb{j}})$ 
   for any item  $j\in[k+1,q]\cap (\Item(S)\setminus \Item(T))$;
 \item[{\rm (ii)}] 
The family of children $S$ of $T$ is
equal to the disjoint collection of families 
$\mathcal{C}_j =
   \{ C\in\Max(T\cap V_{\angleb{j}})\mid 
        k= \min \Item(C),
        j=\min\{i\mid i\in [k+1,q]\cap (\Item(C)\setminus \Item(T))\},
        T=${\sc Parent}$(C)\}$ over all items  
        $j\in[k+1, q]\setminus \Item(T)$; and 
 \item[{\rm (iii)}] 
  The set of all children of $T$ can be constructed in 
$O\big(q \theta_{2,\mathrm{t}} + q^2(n+\theta_{1,\mathrm{t}})\delta (T) \big)$ 
time and
   $O(q+n+\theta_{1,\mathrm{s}}+\theta_{2,\mathrm{s}})$ space. 
 \end{enumerate}
\end{lem}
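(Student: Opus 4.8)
The plan is to prove the three parts of \lemref{child_candidate} in order, using the characterization of parents via \algref{parent_c} together with \lemref{superset} and \lemref{base}.

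\textbf{Part (i).} Let $S$ be a child of $T$, so $S\in\MS_k\setminus\MB_k$, $T=\pi(S)$, and $S\subsetneq T$ with both in $\MS_k$. First I would show $[k+1,q]\cap(\Item(S)\setminus\Item(T))\neq\emptyset$: since $S\subsetneq T$ and both are solutions, $\Item(S)\supsetneq\Item(T)$, so $\Item(S)\setminus\Item(T)\neq\emptyset$; every element of this set is at least $k=\min\Item(S)$, and it cannot equal $k$ because $k\in\Item(T)$, so the set lies in $[k+1,q]$. Next, fix any $j$ in this set. Then $S\subseteq T\cap V_{\angleb{j}}$ because $S\subseteq T$ and $j\in\Item(S)$. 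I claim $S$ is $T\cap V_{\angleb{j}}$-maximal: if some component $W$ satisfied $S\subsetneq W\subseteq T\cap V_{\angleb{j}}$, then $W\subseteq V_{\angleb{j}}$ gives $j\in\Item(W)$, while $W\subseteq T$ gives $\Item(W)\supseteq\Item(T)$ and $k\in\Item(W)$, so $W\in\MS_k$ (after replacing $W$ by the solution $C(W;V_{\angleb{k}})$ if needed — actually one must be careful: $W$ need not be a solution, so instead argue that $C(S;\Item(W))$ would be a superset solution of $S$ strictly inside $T$, contradicting that $T$ is the \emph{minimal} superset solution chosen as parent). This is the point that needs the most care; see the obstacle remark below.

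\textbf{Part (ii).} For the disjointness and exhaustiveness: given a child $S$ of $T$, part (i) tells us it lies in $\Max(T\cap V_{\angleb{j}})$ for $j=\min\{i\in[k+1,q]\cap(\Item(S)\setminus\Item(T))\}$, and it clearly satisfies $k=\min\Item(S)$ and $T=${\sc Parent}$(S)$, so $S\in\mathcal{C}_j$. Conversely, each $C\in\mathcal{C}_j$ satisfies $T=${\sc Parent}$(C)$ by definition, hence is a child of $T$ provided $C$ is genuinely a non-base solution in $\MS_k\setminus\MB_k$; membership in $\MS$ follows since $C$ is a maximal component of $V_{\angleb{\Item(C)}}$-type set once we know $C=C(C;\Item(C))$ — more directly, {\sc Parent}$(C)$ is only defined when $C$ is a non-base solution, so the defining condition $T=${\sc Parent}$(C)$ presupposes $C\in\MS_k\setminus\MB_k$. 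Disjointness of the $\mathcal{C}_j$ over $j\in[k+1,q]\setminus\Item(T)$ holds because the index $j$ is uniquely determined by $C$ as the minimum of $[k+1,q]\cap(\Item(C)\setminus\Item(T))$, so no $C$ can lie in two families.

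\textbf{Part (iii).} For the complexity bound I would describe the algorithm: for each $j\in[k+1,q]\setminus\Item(T)$ (at most $q$ values), call $\mathrm{L}_2(T\cap V_{\angleb{j}})$ once to obtain $\Max(T\cap V_{\angleb{j}})$, costing $\theta_{2,\mathrm{t}}$ time and yielding at most $\delta(T\cap V_{\angleb{j}})\le\delta(T)$ candidates (using monotonicity of $\delta$). For each of the $O(q\delta(T))$ candidate components $C$, check in $O(qn)$ time whether $k=\min\Item(C)$ and whether $j$ is the minimum of $[k+1,q]\cap(\Item(C)\setminus\Item(T))$, then run {\sc Parent}$(C)$, which by \lemref{greedy_minimal}(ii) costs $O(q(n+\theta_{1,\mathrm{t}}))$ time, and test $T=${\sc Parent}$(C)$. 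Summing: the $\mathrm{L}_2$ calls contribute $O(q\theta_{2,\mathrm{t}})$, and the per-candidate work contributes $O\big(q\delta(T)\cdot q(n+\theta_{1,\mathrm{t}})\big)=O\big(q^2(n+\theta_{1,\mathrm{t}})\delta(T)\big)$, giving the claimed time bound. Space: storing $T$, a candidate $C$, item sets, and intermediate sets needs $O(q+n)$; the oracles need $O(\theta_{1,\mathrm{s}}+\theta_{2,\mathrm{s}})$; we process candidates one at a time (not storing all of $\Max$ simultaneously beyond what $\mathrm{L}_2$ returns — or we accept the $\mathrm{L}_2$ output size is absorbed), giving $O(q+n+\theta_{1,\mathrm{s}}+\theta_{2,\mathrm{s}})$.

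\textbf{Main obstacle.} The delicate step is the maximality claim in part (i): showing that a child $S$ is actually maximal in $T\cap V_{\angleb{j}}$, not merely contained in it. The resolution uses that $T=\pi(S)$ is the \emph{minimal} (and lex-min) superset solution of $S$, so any component strictly between $S$ and $T$ would, after passing to its $C(\cdot;\Item(\cdot))$-closure within $V_{\angleb{k}}$, produce a smaller superset solution — but one must verify this closure still lies inside $T$, which follows from \lemref{superset} applied to $S$ and $T$ giving $T=C(S;\Item(T))$ together with monotonicity of the $C(\cdot;\cdot)$ operator in its second argument. Getting this containment argument exactly right, and correctly handling the boundary between "component" and "solution," is where the real work lies.
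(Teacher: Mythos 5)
Your proposal tracks the paper's structure for all three parts and reaches the right conclusion, but the maximality argument in part (i) takes a genuinely different route from the paper's, so let me compare. You suppose $S\subsetneq W\subseteq T\cap V_{\angleb{j}}$ and pass to $W^*:=C(S;\Item(W))$, which is a solution by \lemref{base}(i), contains $W$ (hence properly contains $S$) by the uniqueness argument of \lemref{unique}, and lies inside $T=C(S;\Item(T))$ by monotonicity of $C(S;\cdot)$ in its second argument together with $\Item(W)\supseteq\Item(T)$. You should add two small checks that you gloss over: that $W^*\subsetneq T$ strictly (this follows because $j\in\Item(W)\subseteq\Item(W^*)$ while $j\notin\Item(T)$) and that $W^*\in\MS_k$ (which follows since $k\in\Item(T)\subseteq\Item(W)\subseteq\Item(W^*)\subseteq\Item(S)$ and $\min\Item(S)=k$); without these, "superset solution strictly inside $T$" is not yet established. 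You also rely on monotonicity of the $C(\cdot;\cdot)$ operator in its second argument, which the paper never states explicitly though it does follow from \lemref{unique}. The paper instead takes the $(T\cap V_{\angleb{j}})$-maximal component $C\supseteq S$ and shows directly that $C$ is a solution: if not, a larger solution $C^*$ with $\Item(C^*)=\Item(C)$ must escape $T$ by maximality, and then the transitive union $C^*\cup T$ has $\Item(C^*\cup T)=\Item(T)$ while strictly containing $T$, contradicting that $T$ is a solution. The paper's route is cleaner because it needs no auxiliary monotonicity fact and no strictness check; your route is valid but has more moving parts, and it is precisely the two unfilled checks above (strictness and membership in $\MS_k$) that need to be added to turn your sketch into a complete argument. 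Your part (ii) is a little hand-wavy about why a non-solution cannot pass the $T=\textsc{Parent}(C)$ test, but the paper is similarly brief on this point, and your part (iii) complexity accounting matches the paper's exactly.
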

\begin{proof} 
(i) 
Note that $[0,k]\cap \Item(S)=[0,k]\cap \Item(T)=\{k\}$ since $S,T\in \MS_k$. 
Since $S\subseteq T$ are both solutions,
$\Item(S)\supsetneq \Item(T)$.
Hence 
$[k+1,q]\cap (\Item(S)\setminus \Item(T)) \neq\emptyset$.  
Let $j$ be an arbitrary item in $[k+1,q]\cap (\Item(S)\setminus \Item(T))$.
Since $S\subseteq T\cap V_{\angleb{j}}$, 
it holds that 
$\Max(S;T\cap V_{\angleb{j}})\neq\emptyset$.
 
Let $C$ be a   $(T\cap V_{\angleb{j}})$-maximal component in $\Max(S;T\cap V_{\angleb{j}})$.
It suffices to show that $C=S$. 
Note that  $S\subseteq C\subseteq T$, 
$\Item(S)\supseteq \Item(C)\supseteq \Item(T)$ and 
 $k=\min \Item(S)=\min \Item(T)$ implies $ \min \Item(C)= k$.
 
 We show that $C\in \MS$, which implies $C\in \MS_k$. 
 Note that $j \in \Item(C)\setminus \Item(T)$, and $C\subsetneq T$.
 Assume that $C$ is not a solution; i.e., there is a solution  $C^*\in \MS$
 such that $C\subsetneq C^*$ and  $\Item(C)=\Item(C^*)$,
 where $j\in \Item(C)=\Item(C^*)$ means that $C^*\subseteq V_{\angleb{j}}$.
 Hence  $C^*\setminus T\neq\emptyset$ 
 by the $(T\cap V_{\angleb{j}})$-maximality 
 of $C$.
 Since  $C,C^*,T\in \MC$ and $C\subseteq C^*\cap T$, 
we have $C^*\cup T\in \MC$ by the transitivity. 
We also see that 
$\Item(C^*\cup T)=\Item(C^*)\cap \Item(T)=\Item(C)\cap \Item(T)=\Item(T)$.
 This, however, contradicts that $T$ is a solution,
 proving that $C\in \MS_k$.  
 If  $S\subsetneq C$, then $S\subsetneq C\subsetneq T$ would hold
  for $S,C,T\in \MS_k$,
 contradicting that $T$ is a minimal superset solution of $S$.
 Therefore $S=C$. 
 
 (ii)  
 By (i), the family $\mathcal{S}_T$ of children of $T$ is 
 contained in the family of 
  $(T\cap V_{\angleb{j}})$-maximal components 
  over all items $j\in [k+1,q]\cap  \Item(T)$.
Hence $\mathcal{S}_T
    =\cup_{j\in [k+1,q]\cap \Item(T)}\{C\in \Max(T\cap V_{\angleb{j}})
    \mid T=${\sc Parent}$(C)\}$.
Note that 
  if a subset $S\subseteq V$ is a child of $T$, then 
 $k= \min \Item(S)$ and 
 $S\in \Max(T\cap V_{\angleb{j}})$ for all items 
$j\in [k+1,q]\cap (\Item(S)\setminus \Item(T))$.
Hence we see that  $\mathcal{S}_T$ is
equal to the disjoint collection of families 
$\mathcal{C}_j =
   \{ C\in\Max(T\cap V_{\angleb{j}})\mid 
        k= \min \Item(C),
        j=\min\{i\mid i\in [k+1,q]\cap (\Item(C)\setminus \Item(T))\},
        T=${\sc Parent}$(C)\}$ over all items  
        $j\in[k+1, q]\setminus \Item(T)$. 
 
\begin{algorithm}[h]
 \caption{{\sc Children}$(T,k)$: Generating all children}\label{alg:children}
\begin{algorithmic}[1]
\Require  An instance $(V,\MC,I,\itemmap)$, an item $k\in   [1,q-1]$ and   
a solution $T\in \MS_k$. 
\Ensure All children of $T$, each of which is output whenever it is generated.
\For{{\bf each} item  $j\in[k+1, q]\setminus \Item(T)$} 
 \State Compute $\Max(T\cap V_{\angleb{j}})$; 
    \For{{\bf each} component $C\in\Max(T\cap V_{\angleb{j}})$} 
        \If {$k= \min \Item(C)$, 
        $j=\min\{i\mid i\in [k+1,q]\cap (\Item(C)\setminus \Item(T))\}$ \par 
          \hskip\algorithmicindent and $T=${\sc Parent}$(C)$ 
          }
                   \State Output $C$ as one of the children of $T$ 
              \EndIf    
    \EndFor
\EndFor 
  \end{algorithmic}
\end{algorithm}

 (iii)  
Based on (ii), we obtain an algorithm described in Algorithm~\ref{alg:children}.
We analyze the time and space complexities of the algorithm.
Note that   $T$ may have no children.
The outer for-loop from line 1 to 10 is repeated $O(q)$ times.
Computing $\MC(T\cap V_{\angleb{j}})$ in line 2
takes $\theta_{2,\mathrm{t}}$ time by calling the oracle L$_2$.  
The inner for-loop from line 3 to 7 is repeated
at most $\delta(T\cap V_{\angleb{j}})$ times for each $j$,
and 
the most time-consuming part of the inner for-loop is 
  algorithm \textsc{Parent}$(S)$ in line 4, which takes
   $O(q(n+\theta_{1,\mathrm{t}}))$ time
  by  \lemref{greedy_minimal}(ii).
  Recall that $\delta$ is a non-decreasing function. 
Then the running time of  algorithm \textsc{Children}$(T,k)$ is evaluated by
\[
O\Big(q\theta_{2,\mathrm{t}} 
+ q(n+\theta_{1,\mathrm{t}})\sum_{j\in[k+1,q]\setminus\Item(T)}
\delta(T\cap V_{\angleb{j}})\Big)
=
O\big(q \theta_{2,\mathrm{t}} + q^2(n+\theta_{1,\mathrm{t}})\delta (T) \big). 
\]

For the space complexity,
we do not need to share the space between
iterations of the outer for-loop from line 1 to 8.
In each iteration,
we use the oracle L$_2$ and  algorithm \textsc{Parent}$(S)$, whose space complexity 
is $O(q+n+\theta_{1,\mathrm{s}})$   by  \lemref{greedy_minimal}(ii).
Then  algorithm \textsc{Children}$(T,k)$ uses
$O(q+n+\theta_{1,\mathrm{s}}+\theta_{2,\mathrm{s}})$ space.
\end{proof}

\subsection{Traversing Family Tree}
\label{sec:trav}

We are ready to describe an entire algorithm for enumerating
solutions in $\MS_k$ 
for a given integer $k\in [0,q]$. 
We first compute the component set $\Max(V_{\angleb{k}})$. 
We next compute the family $\MB_k~(\subseteq \Max(V_{\angleb{k}}))$ of bases
by testing whether $k=\min \Item(T)$ or not,
where $\MB_k\subseteq \MS_k$.
When $k=0$ or $q$, we are done with $\MB_k=\MS_k$ 
by Lemma~\ref{lem:base}(iii). 
Let  $k\in [1,q-1]$.
Suppose that we are given a solution $T\in \MS_k$.
We find all the children of $T$ by
{\sc Children}$(T,k)$ in  Algorithm~\ref{alg:children}. 
By applying Algorithm~\ref{alg:children}
to a newly found child recursively,
we can find all solutions in $\MS_k$.

When no child is found to a given solution $T\in \MS_k$, 
we may need to go up to an ancestor by traversing
recursive calls $O(n)$ times before we generate the next solution.
This would result in time delay of $O(n\alpha)$, where $\alpha$ denotes
the time complexity required for a single run of
{\sc Children}$(T,k)$. 
To improve the delay to $O(\alpha)$, we employ the  
{\em alternative output method\/}~\cite{U.2003}, where  
we output the children of $T$ after (resp., before)
generating all descendants when the depth of the recursive call to $T$
is an even (resp., odd) integer. 

Assume that a  volume function   $\rho: 2^V\to \mathbb{R}$ 
is given. 
An algorithm that enumerates all  $\rho$-positive solutions in $\MS_k$
is described in Algorithm~\ref{alg:enumalg}
and Algorithm~\ref{alg:enumdesc}.

\begin{algorithm}[h]
  \caption{An algorithm to enumerate $\rho$-positive solutions in $\MS_k$ 
  for a given $k\in [0,q]$}
  \label{alg:enumalg}
  \begin{algorithmic}[1]
    \Require An instance $(V,\MC,I=[1,q],\itemmap)$ on a transitive system, 
     and an item $k\in  [0,q]$
    \Ensure The set $\MS_k$ of   solutions to $(V,\MC,I,\itemmap)$ 
    \State Compute $\Max(V_{\angleb{k}})$; $d:=1$; 
    \label{line:enum_vk}
    \For{{\bf each} $T\in\Max(V_{\angleb{k}})$}
    \label{line:enum_for}
    \If{$k=\min \Item(T)$ (i.e., $T\in \MB_k$) and $\rho(T)>0$}
    \label{line:enum_if}
         \State  Output $T$; 
         \If {$k\in [1,q-1]$}
         \State {\sc Descendants}$(T,k,d+1)$
         \EndIf
    \EndIf
    \EndFor
    \label{line:enum_endfor}
  \end{algorithmic}
\end{algorithm}
  
\begin{algorithm}[h]
  \caption{{\sc Descendants}$(T,k,d)$: Generating all $\rho$-positive descendant solutions}
  \label{alg:enumdesc}
  \begin{algorithmic}[1]
\Require An instance $(V,\MC,I,\itemmap)$, $k\in   [1,q-1]$, 
a solution $T\in \MS_k$, 
the current depth $d$ of recursive call of {\sc Descendants}, and
a volume function $\rho:2^V\to\mathbb{R}$
\Ensure All $\rho$-positive descendant solutions of $T$ in $\MS_k$ 
 \For{{\bf each} item $j\in[k+1, q]\setminus \Item(T)$}
 \label{line:desc_for}
 \State Compute $\Max(T\cap V_{\angleb{j}})$;
 \label{line:desc_max}
    \For{{\bf each} component $S\in\Max(T\cap V_{\angleb{j}})$}
      \label{line:desc_inner_for}
      \If {$k= \min \Item(S)$,
               $j=\min\{i\mid i\in [k+1,q]\cap (\Item(S)\setminus \Item(T))\}$, \par
        \hskip\algorithmicindent $T= ${\sc Parent}$(S)$  (i.e., $S$ is a child of $T$), and $\rho(S)>0$  }
            \If{$d$ is odd}
              \State Output $S$
            \EndIf;   
            \State {\sc Descendants}$(S,k,d+1)$; 
            \If{$d$ is even}
                \State Output $S$
            \EndIf 
    \EndIf
   \EndFor
 \label{line:desc_inner_endfor}
 \EndFor   
 \label{line:desc_endfor}
  \end{algorithmic}
\end{algorithm}

\begin{lem}   \label{lem:main_poly} 
Let $(V,\MC,I=[1,q],\sigma)$ be an instance on a transitive system. 
For each $k\in [0,q]$, all $\rho$-positive solutions in $\MS_k$ can be enumerated
in
$O\big(q\theta_{2,\mathrm{t}} + (q(n+\theta_{1,\mathrm{t}})
+\theta_{\rho,\mathrm{t}})q\delta(V_{\angleb{k}})\big)$
 delay and 
$O\big((q+n+\theta_{1,\mathrm{s}}+\theta_{2,\mathrm{s}}
+\theta_{\rho,\mathrm{s}}) n\big)$
space.
\end{lem}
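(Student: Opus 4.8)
The plan is to establish the delay bound by analyzing the recursion tree of Algorithms~\ref{alg:enumalg} and \ref{alg:enumdesc}, and the space bound by bounding the depth of recursion together with the per-frame storage. First I would verify correctness: by \lemref{base}(iii), when $k\in\{0,q\}$ the set $\MS_k=\MB_k$ is obtained directly in \lineref{enum_vk}--\ref{line:enum_endfor} by computing $\Max(V_{\angleb{k}})$ and filtering on $\min\Item(T)=k$ and $\rho(T)>0$; when $k\in[1,q-1]$, \lemref{base}(ii) says every $S\in\MS_k$ lies in some base $T\in\MB_k$, and by the minimal-superset-solution order the parent map $\pi$ (the lex-min solution, computed by \textsc{Parent} via \lemref{greedy_minimal}) sends each non-base solution strictly ``up'' toward a base, so the parent-child relation forms a forest rooted at $\MB_k$. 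Since \lemref{child_candidate}(ii) shows \textsc{Children}$(T,k)$ (equivalently the filtering inside \textsc{Descendants}) generates exactly the children of $T$, a DFS from each base reaches every solution in $\MS_k$ exactly once. The $\rho$-positivity pruning is safe because $\rho$ is a volume function and every child $S$ of $T$ satisfies $S\subseteq T$, so $\rho(S)\le\rho(T)$; hence once $\rho(T)\le 0$ no descendant can be $\rho$-positive and pruning the subtree loses no $\rho$-positive solution. Wait --- I should double-check the direction of containment: a child $S$ of $T$ has $S\subsetneq T$ since $T$ is a superset solution of $S$, so indeed $\rho(S)\le\rho(T)$ and pruning is valid.

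Next I would bound the delay. The cost of one invocation of the body of \textsc{Descendants}$(T,k,d)$, excluding recursive calls, is dominated by the $O(q)$ iterations of the outer loop, each calling $\mathrm{L}_2$ once ($\theta_{2,\mathrm t}$ time) and then iterating over $\Max(T\cap V_{\angleb{j}})$, whose size is at most $\delta(T\cap V_{\angleb{j}})\le\delta(V_{\angleb{k}})$ by monotonicity of $\delta$ and $T\cap V_{\angleb{j}}\subseteq V_{\angleb{k}}$; each such component costs $O(q(n+\theta_{1,\mathrm t}))$ for \textsc{Parent} plus $O(\theta_{\rho,\mathrm t})$ for the volume test plus $O(n+q)$ for the $\min\Item$ checks. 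Summing, a single frame costs $\alpha:=O\big(q\theta_{2,\mathrm t}+(q(n+\theta_{1,\mathrm t})+\theta_{\rho,\mathrm t})q\,\delta(V_{\angleb{k}})\big)$, matching the claimed bound. The key point is that the alternative output method~\cite{U.2003} guarantees delay $O(\alpha)$ rather than $O(n\alpha)$: at an even-depth frame the children are output only after all descendants, at an odd-depth frame before; this interleaves outputs with the traversal so that between any two consecutive outputs the algorithm spends only $O(1)$ frames' worth of non-recursive work, i.e.\ $O(\alpha)$. The preprocessing (computing $\Max(V_{\angleb{k}})$) costs $O(\theta_{2,\mathrm t})$ and fits within the bound; postprocessing is trivial.

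For the space bound I would argue that the depth of recursion is $O(n)$: along any root-to-leaf path in the family tree the solutions form a strictly decreasing chain $T_0\supsetneq T_1\supsetneq\cdots$ (each child is a proper subset of its parent), so there are at most $n$ levels. Each stack frame stores $T$ (and the candidate $S$, its item set, etc.) in $O(n+q)$ space, plus the working space of \textsc{Parent} and the oracles, which is $O(q+n+\theta_{1,\mathrm s}+\theta_{2,\mathrm s}+\theta_{\rho,\mathrm s})$ by \lemref{greedy_minimal}(ii) and the oracle specifications; this working space need not be shared across frames but it is reused across the iterations of the loops within one frame, so it appears once per active frame. Multiplying the per-frame space by the $O(n)$ recursion depth yields $O\big((q+n+\theta_{1,\mathrm s}+\theta_{2,\mathrm s}+\theta_{\rho,\mathrm s})n\big)$, as claimed. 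The main obstacle I anticipate is the careful accounting for the alternative output method --- specifically, arguing rigorously that the even/odd output discipline caps the number of recursion-stack push/pop operations between consecutive outputs by a constant, so that no ``silent'' descent or ascent through many childless frames can inflate the delay; this is where one must track the interleaving pattern of the DFS traversal with the output events, and it is the part that requires the most attention even though the underlying idea is standard.
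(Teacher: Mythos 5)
Your proposal is correct and follows essentially the same route as the paper's own proof: you bound the per-frame (non-recursive) cost of \textsc{Descendants} by $O\big(q\theta_{2,\mathrm t}+(q(n+\theta_{1,\mathrm t})+\theta_{\rho,\mathrm t})q\,\delta(V_{\angleb{k}})\big)$ using \lemref{child_candidate} and the monotonicity of $\delta$, invoke the alternative output method of Uno to show the delay is a single frame's cost rather than $O(n)$ times that, justify $\rho$-pruning via $S\subseteq T\Rightarrow\rho(S)\le\rho(T)$, and bound the recursion depth by $n$ (the paper does this through the invariant $|S|+d\le n+1$, you through the strictly decreasing chain of solutions, which is equivalent) before multiplying by the per-frame space $O(q+n+\theta_{1,\mathrm s}+\theta_{2,\mathrm s}+\theta_{\rho,\mathrm s})$. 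The one place you stop short — giving the precise counting argument for the alternative output method (the paper's observation that at least one solution is emitted within every three consecutive calls) — is flagged honestly by you and is handled only sketchily in the paper as well, so this does not constitute a departure in approach.
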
 
\begin{proof}
  Let $T\in\MS_k$ be a solution such that $\rho(T)\leq 0$. 
  In this case, $\rho(S)\le\rho(T)\leq 0$ holds
  for all descendants $S$ of $T$ since $S\subseteq T$.
  Then we do not need to make recursive calls for such $T$. 
  
  We analyze the time delay. 
  Let $\alpha$ denote the time complexity required for
  a single run of {\sc Children}$(T,k)$.
  By \lemref{child_candidate}(ii) and $\delta(T)\le\delta(V_{\angleb{k}})$,
  we have $\alpha=O\big(q\theta_{2,\mathrm{t}} 
  + q^2(n+\theta_{1,\mathrm{t}})\delta(V_{\angleb{k}})\big)$. 
  In \algref{enumalg} and {\sc Descendants},
  we also need to compute $\rho(S)$ for all child candidates $S$.
  The complexity is $O(q\delta(V_{\angleb{k}})\theta_{\rho,\mathrm{t}})$
  since $\rho(S)$ is called at most $q\delta(V_{\angleb{k}})$ times.
  Hence we see that the time complexity of  \algref{enumalg} 
  and {\sc Descendants}
  without including recursive calls is $O(\alpha+q\delta(V_{\angleb{k}})\theta_{\rho,\mathrm{t}})$. 
  
  From \algref{enumalg} and {\sc Descendants}, we observe: \\
  (i) When $d$ is odd, the solution $S$ for any call {\sc Descendants}$(S,k,d+1)$
  is output\\
  ~~ immediately before  {\sc Descendants}$(S,k,d+1)$ is executed; and \\
  (ii) When $d$ is even, the solution $S$ for any call
   {\sc Descendants}$(S,k,d+1)$
  is output\\
  ~~  immediately after  {\sc Descendants}$(S,k,d+1)$ is executed. \\ 
  Let $m$ denote the number of all calls of {\sc Descendants} during a whole
  execution of  \algref{enumalg}.
  Let $d_1=1,d_2,\ldots,d_m$ denote the sequence of depths $d$
  in each {\sc Descendants}$(S,k,d+1)$ of the $m$ calls.
  Note that $d=d_i$ satisfies (i) when $d_{i+1}$ is odd and $d_{i+1}=d_i+1$, 
  whereas $d=d_i$ satisfies (ii) when  $d_{i+1}$ is even and $d_{i+1}=d_i-1$.
  Therefore we easily see that during three consecutive calls 
  with depth $d_i$, $d_{i+1}$ and $d_{i+2}$,
  at least one solution will be output.
  This implies that the time delay for outputting a solution is
   $O(\alpha+q\delta(V_{\angleb{k}})\theta_{\rho,\mathrm{t}})$.

  We analyze the space complexity. 
  Observe that the number of calls  {\sc Descendants} whose executions
   are not finished  
  during an execution of \algref{enumalg} is the depth $d$
  of the current call {\sc Descendants}$(S,k,d+1)$. 
  In \algref{enumdesc}, 
  $|T|+d\leq n+1$ holds initially,
  and 
   {\sc Descendants}$(S,k,d+1)$ is called
  for a nonempty subset $S\subsetneq T$, where $|S|<|T|$.
  Hence $|S|+d\leq n+1$ holds when  {\sc Descendants}$(S,k,d+1)$ is called.
Then  \algref{enumalg} can be implemented to run 
 in $O(n(\beta+\theta_{\rho,\mathrm{s}}))$ space, where
  $\beta$ denotes the space required for a single run of {\sc Children}$(T,k)$.
  We have $\beta=O(q+n+\theta_{1,\mathrm{s}}+\theta_{2,\mathrm{s}})$
  by \lemref{child_candidate}(ii).
  Then the overall space complexity is 
  $O\big((q+n+\theta_{1,\mathrm{s}}+\theta_{2,\mathrm{s}}+\theta_{\rho,\mathrm{s}}) n\big)$.   
\end{proof}

The volume function is introduced to impose a condition on the output solutions. 
For example,
when $\rho(X)=|X|-p$ for a constant $p$,
all solutions $X\in\MS_k$ with $|X|\ge p+1$ will be output.
In particular, all solutions in $\MS_k$ will be output for $p\le0$. 
In this case, we have 
$\theta_{\rho,\mathrm{t}}=\theta_{\rho,\mathrm{s}}=O(n)$,
and thus the delay is $O\big(q\theta_{2,\mathrm{t}} 
+ q^2(n+\theta_{1,\mathrm{t}})\delta(V_{\angleb{k}})\big)$ and
the space is 
$O\big((q+n+\theta_{1,\mathrm{s}}+\theta_{2,\mathrm{s}}) n\big)$. 

\thmref{main} is immediate from
\lemref{main_poly}
since $\delta(V_{\angleb{k}})\le\delta(V)$ holds
by our assumption
that $\delta(Y)\le\delta(X)$ for subsets $Y\subseteq X\subseteq V$. 

\subsection{Enumerating Components}\label{sec:enum}

This section shows that our algorithm in the previous section can enumerate
all components in a given transitive system  $(V,\MC)$ with $n=|V|\geq 1$. 
For this, we construct an instance $\mathcal{I}=(V,\MC,I=[1,n],\varphi)$
as follows.
Denote $V$ by $\{v_1,\dots,v_n\}$.
We set $I=[1,n]$
and define a function $\varphi:V\to 2^I$
to be $\varphi(v_k)\triangleq I\setminus\{k\}$ for each element $v_k\in V$.
For each subset $X\subseteq V$, let  $\Ind(X)$ denote the set of indices $i$ 
of elements $v_i\in X$; i.e.,   $\Ind(X)=\{i\in [1,n]\mid v_i\in X\}$,
and  $I_\varphi(X)\subseteq [1,n]$ 
denote the common item set over $\varphi(v)$, $v\in X$;
i.e.,   $I_\varphi(X) = \bigcap_{v\in X}\varphi(v)$.
Observe that $I_\varphi(X)=I\setminus\Ind(X)$. 

\begin{lem}
  \label{lem:compsol}
  Let $(V=\{v_1,\dots,v_n\},\MC)$ be a  transitive system with $n\geq 1$.
  The family $\MC$ of all components   is equal 
  to the family $\MS$ of all solutions
  in the instance $(V,\MC,I=[1,n],\varphi)$. 
\end{lem}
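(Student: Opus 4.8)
The plan is to show the two inclusions $\MC\subseteq\MS$ and $\MS\subseteq\MC$ in the instance $(V,\MC,I=[1,n],\varphi)$. Since every solution is by definition a component, the inclusion $\MS\subseteq\MC$ is immediate from the definition of a solution, so the real content is $\MC\subseteq\MS$. For this, I would take an arbitrary component $X\in\MC$ and verify the defining property of a solution: every component $Y\in\MC$ with $Y\supsetneq X$ satisfies $I_\varphi(Y)\subsetneq I_\varphi(X)$.

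The key observation driving the argument is the identity $I_\varphi(X)=I\setminus\Ind(X)$, already recorded in the excerpt. First I would note that $\Ind$ is strictly monotone with respect to proper inclusion: if $X\subsetneq Y$ then $\Ind(X)\subsetneq\Ind(Y)$, simply because the index map $v_i\mapsto i$ is a bijection between $V$ and $[1,n]$, so distinct elements have distinct indices and a strictly larger subset of $V$ has a strictly larger index set. Complementing within $I=[1,n]$ then gives $I\setminus\Ind(Y)\subsetneq I\setminus\Ind(X)$, i.e. $I_\varphi(Y)\subsetneq I_\varphi(X)$. Hence for any component $X$ and any component $Y$ with $Y\supsetneq X$ we get $I_\varphi(Y)\subsetneq I_\varphi(X)$, which is exactly the condition for $X$ to be a solution. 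This establishes $\MC\subseteq\MS$, and combined with $\MS\subseteq\MC$ yields $\MC=\MS$.

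I do not expect a genuine obstacle here: the argument is essentially the remark that, under this particular choice of $\varphi$, the common-item set $I_\varphi(X)$ is an order-reversing \emph{injective} encoding of $X$ itself, so ``inclusion-wise maximal common item set'' degenerates to ``being a component'' with no further constraint. The only point requiring a line of care is confirming the strictness of the inclusion $I_\varphi(Y)\subsetneq I_\varphi(X)$ when $Y\supsetneq X$ — that is, ruling out $I_\varphi(Y)=I_\varphi(X)$ — which follows from the strict monotonicity of $\Ind$ noted above. One should also remark that the transitivity of $\MC$ is not actually needed for this lemma; it merely guarantees, via the earlier results, that the enumeration \emph{algorithm} applies to this instance.
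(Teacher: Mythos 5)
Your proposal is correct and matches the paper's proof essentially line for line: both establish $\MS\subseteq\MC$ trivially, and then show $\MC\subseteq\MS$ via the identity $I_\varphi(X)=I\setminus\Ind(X)$ and the strict monotonicity of $\Ind$ under proper inclusion. Your side remark that transitivity of $\MC$ is not actually used in the proof is accurate and a nice observation.
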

\begin{proof} Since any solution $S\in \MS$ is a component, it holds that 
$\MC\supseteq \MS$.
We prove that $\MC\subseteq \MS$.
  Let $X\in\MC$. 
  For any superset $Y\supsetneq X$,
  it holds that $I_{\varphi}(Y)=I\setminus \Ind(Y)\subsetneq I\setminus \Ind(X)=I_{\varphi}(X)$.
  The component $X$ is a solution in $(V,\MC,I,\varphi)$
  since no superset of $X$ has the same common item set as $X$. 
\end{proof}

Since the family  $\MC$ of components is equal to the family $\MS$ of solutions
to the instance $\mathcal{I}=(V,\MC,I,\varphi)$   by \lemref{compsol}, 
we can enumerate all components in $(V,\MC)$ by
running our algorithm on the instance $\mathcal{I}$. 
By $|I|=n$, we have the following corollary to \thmref{main}. 

\begin{cor}\label{cor:comp} 
  Let $(V,\MC)$ be a  transitive system with $n=|V|\geq 1$
  and a volume function $\rho$.
  All $\rho$-positive components in $\MC$ can be enumerated 
  in $O\big( n\theta_{2,\mathrm{t}} 
     + (n^2+n\theta_{1,\mathrm{t}}
      +\theta_{\rho,\mathrm{t}})n \delta(V)\big)$   delay and 
  $O\big((n+\theta_{1,\mathrm{s}}+\theta_{2,\mathrm{s}}
  +\theta_{\rho,\mathrm{s}}) n\big)$
  space.
\end{cor}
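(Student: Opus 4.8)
The plan is to derive \corref{comp} directly from \thmref{main} by instantiating the general result on the specific instance $\mathcal{I}=(V,\MC,I=[1,n],\varphi)$ constructed just before the corollary, and invoking \lemref{compsol} to identify the output family.

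First I would recall from \lemref{compsol} that $\MS=\MC$ for this instance, so that enumerating all $\rho$-positive solutions to $\mathcal{I}$ is exactly enumerating all $\rho$-positive components of $(V,\MC)$; this is what licenses us to run the algorithm of \thmref{main} on $\mathcal{I}$ and read off components. Then I would verify that $\rho$ remains a valid volume function on the instance (it is unchanged, so monotonicity is inherited), and that the oracles $\mathrm{L}_1,\mathrm{L}_2$ and the bound function $\delta$ for $(V,\MC)$ serve verbatim as the corresponding objects for $\mathcal{I}$, since $\MC$ is the same set family; in particular $\delta(V)$ is still a valid upper bound on $|\Max(V)|$. The only genuinely new ingredient is the parameter substitution: here $q=|I|=n$, so every occurrence of $q$ in the bounds of \thmref{main} becomes $n$.

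Next I would carry out the substitution mechanically. The delay bound $O\big(q\theta_{2,\mathrm{t}}+(q(n+\theta_{1,\mathrm{t}})+\theta_{\rho,\mathrm{t}})q\delta(V)\big)$ becomes, with $q=n$, $O\big(n\theta_{2,\mathrm{t}}+(n(n+\theta_{1,\mathrm{t}})+\theta_{\rho,\mathrm{t}})n\delta(V)\big) = O\big(n\theta_{2,\mathrm{t}}+(n^2+n\theta_{1,\mathrm{t}}+\theta_{\rho,\mathrm{t}})n\delta(V)\big)$, matching the stated claim. Likewise the space bound $O\big((q+n+\theta_{1,\mathrm{s}}+\theta_{2,\mathrm{s}}+\theta_{\rho,\mathrm{s}})n\big)$ becomes $O\big((n+\theta_{1,\mathrm{s}}+\theta_{2,\mathrm{s}}+\theta_{\rho,\mathrm{s}})n\big)$, since $q+n=O(n)$. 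This completes the derivation.

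I do not anticipate a real obstacle: the proof is a specialization argument, and essentially all the work has already been done in \lemref{compsol} and \thmref{main}. The one point requiring a sentence of care is confirming that the constructed $\varphi$ and the relabeling $I=[1,n]$ do not change the set system or the oracle/complexity parameters, so that the hypotheses of \thmref{main} are met with the same $\theta$'s and $\delta$; once that is noted, the bound is obtained purely by setting $q=n$ and simplifying.
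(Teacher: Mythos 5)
Your proposal is correct and follows exactly the paper's route: invoke \lemref{compsol} to identify $\MC$ with the solution family $\MS$ of the instance $(V,\MC,I=[1,n],\varphi)$, then specialize \thmref{main} by setting $q=n$, and the arithmetic substitutions $(q(n+\theta_{1,\mathrm{t}})+\theta_{\rho,\mathrm{t}})q\delta(V)\mapsto(n^2+n\theta_{1,\mathrm{t}}+\theta_{\rho,\mathrm{t}})n\delta(V)$ and $q+n=O(n)$ are just as you carried them out. The paper's own argument is a one-line remark before the corollary; your write-up adds the harmless but reasonable observation that the oracles, $\delta$, and $\rho$ carry over unchanged.
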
 


\section{Transitive System in Mixed Graph with Meta-weight Function} 
\label{sec:app.ext.mixed}

Our enumeration algorithm in a transitive system can be
applied to several  problems of enumerating subgraphs 
that satisfy certain types of connectivity requirements 
over a given graph. 
To treat these applications universally,
this subsection presents a general method of constructing a transitive system
based on a mixed graph and a weight function on elements in the graph.
 
\subsection{Meta-weight Function in Mixed Graph} 

Let $M$ be a {\em mixed graph}, which is defined to be
 a graph that may contain undirected edges and directed edges.
In this paper, $M$ may have multiple edges but no self-loops.
Let $V(M)$, $\vec{E}(M)$ and $\overline{E}(M)$   denote
the sets of vertices, directed edges and undirected edges, respectively.
Let $E(M)\triangleq \vec{E}(M)\cup \overline{E}(M)$.
Let $n=|V(M)|$ and $m=|E(M)|$. 
For a vertex subset $X\subseteq V$, 
let $M[X]$ denote the subgraph  induced from $M$ by $X$.
For a subset $X\subseteq  V(M)\cup E(M)$,
let $V(X)$ denote the set of vertices in $X\cap V(M)$ and
the end-vertices of edges in $X\cap E(M)$. 
For two vertices $u,v\in V(M)$, let \\
~~  $\vec{E}(u,v)$ denote the set of  directed edges from $u$ to $v$, \\
~~  $\overline{E}(u,v)$ denote the set of 
 undirected edges between $u$ and $v$ in $M$,  and \\
~~ $E(u,v)\triangleq \vec{E}(u,v)\cup \overline{E}(u,v)$. \\
 For two non-empty subsets $X,Y\subseteq V(M)$, let \\
~~ $\vec{E}(X,Y)\triangleq \bigcup_{u\in X,v\in Y}\vec{E}(u,v)$, 
$\overline{E}(X,Y)\triangleq \bigcup_{u\in X,v\in Y}\overline{E}(u,v)$ and \\
~~ $E(X,Y)\triangleq \bigcup_{u\in X,v\in Y}E(u,v)$. \\
For two vertices $s,t\in V(M)$,
an {\em $s,t$-cut $C$} is defined to be an ordered pair $(S,T)$
of disjoint subsets $S,T\subseteq V(M)$ such that 
$s\in S$ and $t\in T$, and the element set 
$\varepsilon(C)$ of $C$
(or $\varepsilon(S,T)$ of $(S,T)$) 
is defined to be a union $F\cup R$ of
the edge subset $F=E(S,T)$  and
the vertex subset $R=V(M)\setminus (S\cup T)$,  where
$R=\emptyset$ is allowed.

We define a {\em meta-weight function} on $M$ to be
$\omega: 2^{V(M)\cup E(M)}\times (V(M)\cup E(M))\to \mathbb{R}_+$.
For each subset $X\in 2^{V(M)\cup E(M)}$,
we define
the function $\omega_X: V(M)\cup E(M)\to \mathbb{R}_+$ induced
from $\omega$ by $X$ so that 
 $\omega_X(a)=\omega(X,a)$ for each element $a\in V(M)\cup E(M)$. 
We call $\omega$ {\em monotone}
 if every two subsets $X\subseteq Y\subseteq V(M)$ satisfy 
\[
\mbox{ 
$\omega_{Y}(a)\geq \omega_X(a)$ for each element $a\in V(M)\cup E(M)$.}\]
For  two vertices $s,t\in V(M)$ 
and a subset $X\subseteq V(M)\cup E(M)$, define 
\[ \mu(s,t;X)\triangleq \min\{\omega_X(\varepsilon(C))\mid
\mbox{$s,t$-cuts $C=(S,T)$ in $M$}\}. \]
We call a subset $X\subseteq V(M)\cup E(M)$ {\em $k$-connected}
if   $|V(X)|=1$ or $\mu(u,v;X)\geq k$ for each pair of vertices $u,v\in V(X)$.

\begin{lem}   \label{lem:connectivity}
Let $(M,\omega)$ be a mixed graph with a monotone meta-weight function,
and $k\geq 0$.  
For any two $k$-connected subsets $X,Y\subseteq V(M)\cup E(M)$ such that 
$\omega_{X\cap Y}(V(X\cap Y))\geq k$, 
the subset  $X\cup Y$  is $k$-connected.
 \end{lem}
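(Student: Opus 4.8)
The plan is to show that $X \cup Y$ is $k$-connected by verifying the defining condition: either $|V(X\cup Y)| = 1$, or $\mu(u,v; X\cup Y) \geq k$ for every pair of vertices $u,v \in V(X\cup Y)$. The degenerate case $|V(X\cup Y)| = 1$ is trivial, so I would assume $|V(X\cup Y)| \geq 2$ and fix an arbitrary $s,t$-cut $C = (S,T)$ in $M$ with $s,t \in V(X\cup Y)$; the goal is to prove $\omega_{X\cup Y}(\varepsilon(C)) \geq k$. Since $\omega$ is monotone and $X, Y \subseteq X\cup Y$, we have $\omega_{X\cup Y}(a) \geq \omega_X(a)$ and $\omega_{X\cup Y}(a) \geq \omega_Y(a)$ for every element $a$, so it suffices to find, for each such cut $C$, \emph{one} of the three sets $X$, $Y$, $X\cap Y$ whose induced weight on $\varepsilon(C)$ is already at least $k$.

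The key case analysis is on where $s$ and $t$ sit relative to $X$ and $Y$. First I would handle the case that both $s,t \in V(X)$: then $k$-connectivity of $X$ gives $\omega_X(\varepsilon(C)) \geq \mu(s,t;X) \geq k$, hence $\omega_{X\cup Y}(\varepsilon(C)) \geq k$ by monotonicity. Symmetrically if both $s,t \in V(Y)$. The remaining case is the interesting one: $s \in V(X)\setminus V(Y)$ and $t \in V(Y)\setminus V(X)$ (up to swapping $X$ and $Y$, and up to swapping $s$ and $t$). Here neither $X$ nor $Y$ alone "sees" both endpoints, so I would instead exploit the hypothesis $\omega_{X\cap Y}(V(X\cap Y)) \geq k$ together with the $k$-connectivity of $X$ (and of $Y$). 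Pick any vertex $w \in V(X\cap Y)$ — such $w$ exists because $X\cap Y$ is $k$-connected with $\omega_{X\cap Y}(V(X\cap Y)) \geq k \geq 0$, and I would argue $V(X\cap Y) \neq \emptyset$ using that hypothesis (if $V(X\cap Y)=\emptyset$ then $\omega_{X\cap Y}(V(X\cap Y)) = 0$, forcing $k = 0$, in which case $k$-connectivity is vacuous and the lemma is trivial). Now $w$ lies on one side of the cut, say $w \in S$ (the case $w \in T$ is symmetric, and $w \in R$ is impossible since $w\in V(X\cap Y)\subseteq S\cup T$ would need to be checked — actually $w$ could be in $R$, so I must handle $w\in R$ too). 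If $w \in S$: since $t \in V(Y)$ and $w \in V(X\cap Y) \subseteq V(Y)$, the pair $w,t$ lies in $V(Y)$, so $\omega_Y(\varepsilon(C)) \geq \mu(w,t;Y) \geq k$, done. If $w \in T$: then $s,w \in V(X)$ gives $\omega_X(\varepsilon(C)) \geq \mu(s,w;X) \geq k$. The case $w \in R$ is where the vertex-weight hypothesis is essential: $w \in R$ contributes $\omega_{X\cap Y}(w)$ to $\omega_{X\cap Y}(\varepsilon(C))$, and combined with monotonicity this already forces $\omega_{X\cap Y}(\varepsilon(C)) \geq \omega_{X\cap Y}(w)$; but this alone need not be $\geq k$, so I would need the full strength of $\omega_{X\cap Y}(V(X\cap Y)) \geq k$ distributed appropriately — more carefully, one should pick $w$ \emph{not} lying in $R$ if possible, or split $V(X\cap Y)$ across $S$, $T$, $R$ and bound $\omega_{X\cap Y}(\varepsilon(C))$ from below by $\omega_{X\cap Y}$ of the part in $R$ plus contributions from the cut edges among the parts in $S$ and $T$, invoking $k$-connectivity of $X\cap Y$.

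The main obstacle I anticipate is precisely this last sub-case: making the argument airtight when $V(X\cap Y)$ straddles all three of $S$, $T$, and $R$. The clean way is probably: let $S' = S \cap V(X\cap Y)$, $T' = T\cap V(X\cap Y)$, $R' = R \cap V(X\cap Y)$. If $R' \neq \emptyset$, pick $w\in R'$; then every $s,t$-cut $(S,T)$ places $w$ in $R$, so $\varepsilon(C) \ni w$ and I'd want $\omega_{X\cap Y}(\varepsilon(C)) \geq k$ — but this requires $\omega_{X\cap Y}(w)\geq k$, which is false in general. So instead: if $S'\neq\emptyset$, pick $w\in S'$ and use $\mu(w,t;Y)$; if $S'=\emptyset$ but $T'\neq\emptyset$, pick $w\in T'$ and use $\mu(s,w;X)$; if $S' = T' = \emptyset$ then $V(X\cap Y)\subseteq R$, so $\varepsilon(C) \supseteq V(X\cap Y)$ and hence $\omega_{X\cap Y}(\varepsilon(C)) \geq \omega_{X\cap Y}(V(X\cap Y)) \geq k$, which is exactly where the hypothesis is used. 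In all cases $\omega_{X\cup Y}(\varepsilon(C)) \geq k$, and since $C$ was an arbitrary $s,t$-cut we conclude $\mu(s,t; X\cup Y) \geq k$; as $s,t$ were arbitrary in $V(X\cup Y)$, $X\cup Y$ is $k$-connected. I would double-check that the induced weight function $\omega_{X\cup Y}$ is legitimately used — $X \cup Y$ may not be a subset of $V(M)$ but of $V(M)\cup E(M)$, and monotonicity as stated is quantified over $X\subseteq Y\subseteq V(M)$, so I may need the (presumably intended) reading that monotonicity holds for subsets of $V(M)\cup E(M)$, or restrict attention to the vertex/edge components appropriately; this is a definitional point to reconcile but not a genuine difficulty.
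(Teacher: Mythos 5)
Your cleaned-up argument at the end (the three-way split of $V(X\cap Y)$ into $S'=S\cap V(X\cap Y)$, $T'=T\cap V(X\cap Y)$, $R'=R\cap V(X\cap Y)$, handling $S'\ne\emptyset$ via $\mu(w,t;Y)$, $T'\ne\emptyset$ via $\mu(s,w;X)$, and $S'=T'=\emptyset$ via $\omega_{X\cap Y}(V(X\cap Y))\ge k$) is correct and is essentially the paper's proof, just phrased directly rather than by contradiction: the paper assumes a cut of weight $<k$, shows $V(X\cap Y)\subseteq R$ is forced (your first two cases ruled out), and then derives the contradiction from $\omega_{X\cap Y}(V(X\cap Y))\ge k$. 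Your observation that the monotonicity quantifier should read over $V(M)\cup E(M)$ rather than $V(M)$ is also right and matches how the paper actually uses it (e.g.\ in Lemma~\ref{lem:monotone}).
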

\begin{proof}
To derive a contradiction, assume that $X\cup Y$  is not $k$-connected;
i.e., $|V(X\cup Y)|\geq 2$ and 
some vertices $s,t\in V(X\cup Y)$ admit 
an $s,t$-cut $C=(S,T)$ with $\omega_{X\cup Y}(\varepsilon(C))<k$.
By the monotonicity of $\omega$, it holds that 
$\omega_{X\cup Y}(a)\geq \omega_X(a), \omega_Y(a)$
 for any element $a\in V(M)\cup E(M)$.
Hence $\omega_{X\cup Y}(\varepsilon(C))<k$ implies 
$\omega_{X}(\varepsilon(C))<k$ and $\omega_{Y}(\varepsilon(C))<k$.
Since each of $X$ and $Y$ is $k$-connected, we see that  
  neither of $s,t\in V(X)$ and $s,t\in V(Y)$  occurs.
  Without loss of generality assume that 
  $s\in V(X\setminus Y)$ and $t\in V(Y\setminus X)$.
  If some vertex $v\in V(X\cap Y)$ belongs to $T$ (resp., $S$),
  then $C$ would be 
  an $s,v$-cut with $s,v\in V(X)$ (resp., $v,t$-cut with $v,t\in V(Y)$),
  contradicting the $k$-connectivity of $X$ (resp., $Y$).
  Hence for the set $R=V(M)\setminus (S\cup T)$,
  it holds $V(X\cap Y)\subseteq R$.  
By the assumption of $X\cap Y$, the non-negativity and 
the monotonicity of $\omega$, 
  we have $k\leq \omega_{X\cap Y}(V(X\cap Y)) \leq \omega_{X\cap Y}(R)
  \leq \omega_{X\cup Y}(R)\leq \omega_{X\cup Y}(\varepsilon(C))$.
  This, however, contradicts $\omega_{X\cup Y}(\varepsilon(C))<k$.
\end{proof}

For a mixed graph $(M,\omega)$  with a   meta-weight function
and a real  $k\geq 0$,  
let $\MC(M,\omega,k)\subseteq 2^{V(M)\cup E(M)}$ denote 
the family of $k$-connected subsets $X\subseteq V(M)\cup E(M)$ with
$\omega_X(V(X))\geq k$.

\begin{lem}   \label{lem:transitive}
  For a mixed graph $(M,\omega)$  with a monotone meta-weight function
  and
 a real  $k\geq 0$, let $\MC=\MC(M,\omega,k)$.
Then $ \MC$ is   transitive.  
 \end{lem}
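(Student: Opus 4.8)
The plan is to verify the transitivity condition directly from the definition: given three subsets $X,Y,Z\in\MC=\MC(M,\omega,k)$ with $Z\subseteq X\cap Y$, I must show $X\cup Y\in\MC$, i.e., that $X\cup Y$ is $k$-connected and satisfies $\omega_{X\cup Y}(V(X\cup Y))\geq k$. The strategy is to reduce the first part to \lemref{connectivity} and to obtain the second part from monotonicity.

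First I would observe that $X,Y\in\MC$ means both $X$ and $Y$ are $k$-connected, and $Z\in\MC$ means $\omega_Z(V(Z))\geq k$. To apply \lemref{connectivity} to $X$ and $Y$, I need $\omega_{X\cap Y}(V(X\cap Y))\geq k$. Since $Z\subseteq X\cap Y$, we have $V(Z)\subseteq V(X\cap Y)$, so by non-negativity of $\omega$ (it maps into $\mathbb{R}_+$) together with monotonicity — applied to $Z\subseteq X\cap Y$ — we get $\omega_{X\cap Y}(V(X\cap Y))\geq \omega_{X\cap Y}(V(Z))\geq \omega_Z(V(Z))\geq k$. Hence \lemref{connectivity} applies and $X\cup Y$ is $k$-connected.

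Second, I would check $\omega_{X\cup Y}(V(X\cup Y))\geq k$. Since $X\subseteq X\cup Y$, monotonicity gives $\omega_{X\cup Y}(a)\geq \omega_X(a)$ for every element $a$; summing over $a\in V(X)$ and using $V(X)\subseteq V(X\cup Y)$ together with non-negativity yields $\omega_{X\cup Y}(V(X\cup Y))\geq \omega_{X\cup Y}(V(X))\geq \omega_X(V(X))\geq k$, where the last inequality holds because $X\in\MC$. Combining the two parts, $X\cup Y\in\MC$, which establishes transitivity.

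I do not anticipate a genuine obstacle here; the content is essentially bookkeeping with monotonicity and non-negativity, and the one substantive ingredient — that $k$-connectivity is preserved under union when the intersection is ``heavy enough'' — is exactly \lemref{connectivity}, already proved. The only point requiring a little care is making sure the hypothesis $\omega_{X\cap Y}(V(X\cap Y))\geq k$ of \lemref{connectivity} is correctly derived from the weaker-looking hypothesis $Z\subseteq X\cap Y$ with $Z\in\MC$; this is where monotonicity (applied to $Z\subseteq X\cap Y$, not to $X\cap Y\subseteq X$) does the work, and one should note that both arguments of $\omega$ — the ``context'' set and the summation set — change, so monotonicity in the context variable and monotonicity (really non-negativity and the convention $\omega(B)=\sum_{a\in B}\omega(a)$) in the summation set are both used.
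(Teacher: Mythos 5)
Your proposal is correct and takes essentially the same route as the paper: reduce the $k$-connectivity of $X\cup Y$ to \lemref{connectivity} after using monotonicity to upgrade $\omega_Z(V(Z))\geq k$ to $\omega_{X\cap Y}(V(X\cap Y))\geq k$, and then get $\omega_{X\cup Y}(V(X\cup Y))\geq k$ by another monotonicity chain. The only (immaterial) difference is that for the second inequality you anchor on $\omega_X(V(X))\geq k$ whereas the paper anchors on $\omega_Z(V(Z))\geq k$; your write-up also makes explicit the step $\omega_{X\cap Y}(V(X\cap Y))\geq \omega_Z(V(Z))$ that the paper leaves implicit when it invokes \lemref{connectivity}.
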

\begin{proof}
Let  $Z,X,Y\in \MC$ such that $Z\subseteq X\cap Y$, where
 $\omega_{X\cup Y}(V(X\cup Y))\geq \omega_{X\cup Y}(V(Z))
 \geq \omega_{Z}(V(Z))\geq k$.
 By $\omega_{Z}(V(Z))\geq k$ and Lemma~\ref{lem:connectivity}, $X\cup Y$ is $k$-connected.
 Since $\omega_{X\cup Y}(V(X\cup Y))\geq k$, it holds that $X\cup Y\in \MC$.  
 Therefore $\MC$ is transitive.
\end{proof} 

\subsubsection{Construction of Monotone Meta-weight Functions}
\label{sec:app.ext.const}

This part shows a concrete method of constructing a monotone meta-weight function
from a mixed graph with a standard weight function on the vertex and edge sets.
We also present how to construct oracles L$_1$ and L$_2$
that are required when we apply the enumeration algorithm in \secref{trav}
to the corresponding transitive system. 

Let $M$ be a mixed graph and $w:V(M)\cup E(M)\to \mathbb{R}_+$
 be a weight function.
We define a {\em coefficient function} to be 
  $\gamma=(\alpha,\overline{\alpha},\alpha^+,\alpha^-,\beta)$  that consists of  
  functions \\
\quad 
$\alpha :  E(M)\to \mathbb{R}_+$, 
$\overline{\alpha} : \overline{E}(M)\to \mathbb{R}_+$, 
$\alpha^+,\alpha^- : \vec{E}(M)\to \mathbb{R}_+$, and \\
\quad  $\beta: V(M)\cup E(M) \to \mathbb{R}_+$. \\
We call $\gamma$ {\em monotone} if \\
\quad $1\geq \alpha(e)\geq \overline{\alpha}(e)\geq \beta(e)$
     for each undirected edge $e\in \overline{E}(M)$, \\
\quad  
$1\geq \alpha(e)\geq \alpha^+(e)\geq \beta(e)$
  for each directed edge $e\in \vec{E}(M)$;   \\
\quad 
$1\geq \alpha(e)\geq \alpha^-(e)\geq \beta(e)$ 
 for each directed edge $e\in \vec{E}(M)$; 
and \\
\quad 
$1\geq \beta(v)$ for each vertex $v\in V(M)$. \\
We call a tuple $(M,w,\gamma)$ a {\em system},
and define a  meta-weight function
$\omega: 2^{V(M)\cup E(M)}\times (V(M)\cup E(M))\to \mathbb{R}_+$
 to the system 
so that, for each subset $X\subseteq V(M)\cup E(M)$,   
$\omega_X:V(M)\cup E(M) \to\mathbb{R}_+$ is given by \\
\[
\omega_X(v)= \left \{\begin{array} {r l}
            w(v) & \mbox{if $v\in V(X)$,} \\ 
            \beta(v)w(v)  & \mbox{if $v\in V(M)\setminus V(X)$,}
              \end{array}
\right. \]
\[
\omega_X(e)= \left \{\begin{array} {r l}
            w(e) & \mbox{if $e \in E(M)\cap X$,} \\ 
            \alpha(e)w(e) & \mbox{if $e \in E(V(X),V(X))\setminus X$,} \\ 
            \overline{\alpha}(e)w(e) & 
                       \mbox{if $e\in \overline{E}(V(X),V(M)\setminus V(X))$,} \\ 
            \alpha^+(e)w(e) & \mbox{if $e\in \vec{E}(V(X),V(M)\setminus V(X))$,} \\ 
            \alpha^-(e)w(e) & \mbox{if $e\in \vec{E}(V(M)\setminus V(X), V(X))$,} \\ 
            \beta(e)w(e)  & \mbox{if $e \in E(V\setminus V(X),V\setminus V(X))$.}
              \end{array}
\right. \]
We call a system $(M,w,\gamma)$ {\em monotone} if $\gamma$ is monotone. 

\begin{lem}   \label{lem:monotone}
For a monotone system $(M,w,\gamma)$, the corresponding meta-weight function
$\omega: 2^{V(M)\cup E(M)}\times (V(M)\cup E(M))\to \mathbb{R}_+$ 
  is monotone. 
 \end{lem}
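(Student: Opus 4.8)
The plan is to unwind the definition of monotonicity of $\omega$ to a pointwise comparison: fix two subsets $X\subseteq Y\subseteq V(M)$ and an element $a\in V(M)\cup E(M)$, and show $\omega_Y(a)\ge\omega_X(a)$. The crucial structural observation is that $X\subseteq Y$ implies $V(X)\subseteq V(Y)$, so each element falls into one of the six edge-cases (or two vertex-cases) for $X$ and into a ``no-earlier'' case for $Y$; the cases for $Y$ are monotone in the sense that enlarging the subset only moves an element into a case with a larger coefficient. I would organize the proof by the type of $a$: first the vertex case, then the undirected-edge case, then the directed-edge case.

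For a vertex $v$, either $v\in V(X)$, in which case $v\in V(Y)$ too and $\omega_X(v)=w(v)=\omega_Y(v)$; or $v\notin V(X)$, in which case $\omega_X(v)=\beta(v)w(v)$, while $\omega_Y(v)$ is either $w(v)$ (if $v\in V(Y)$) or $\beta(v)w(v)$, and since $\beta(v)\le 1$ and $w(v)\ge 0$ we get $\omega_Y(v)\ge\beta(v)w(v)=\omega_X(v)$. For an edge $e$, I would note that the six cases in the definition of $\omega_X(e)$ are totally ordered by their coefficients according to monotonicity of $\gamma$ — roughly $1\ge\alpha\ge\{\overline\alpha\text{ or }\alpha^+\text{ or }\alpha^-\}\ge\beta$ — and that as we pass from $X$ to the larger set $Y$, the endpoints of $e$ that were ``outside $V(X)$'' may move ``inside $V(Y)$,'' which can only promote $e$ to a case of equal or larger coefficient (e.g.\ from the $\beta(e)w(e)$ case to one of the cut cases, or from a cut case to the $\alpha(e)w(e)$ case, or from $\alpha(e)w(e)$ to $w(e)$ if $e$ itself is added). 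Multiplying the relevant coefficient inequality by $w(e)\ge 0$ then yields $\omega_Y(e)\ge\omega_X(e)$. The undirected and directed subcases are handled identically using the respective chains $1\ge\alpha(e)\ge\overline\alpha(e)\ge\beta(e)$ and $1\ge\alpha(e)\ge\alpha^{\pm}(e)\ge\beta(e)$ from the definition of a monotone coefficient function.

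The only mild subtlety — and the step I would be most careful with — is verifying that every transition of $e$'s case induced by enlarging $X$ to $Y$ really does go ``up'' the coefficient order, rather than, say, from $\vec E(V(X),V(M)\setminus V(X))$ to $\vec E(V(M)\setminus V(Y),V(Y))$ in a way that swaps $\alpha^+$ for $\alpha^-$; but since both $\alpha^+(e)$ and $\alpha^-(e)$ lie between $\beta(e)$ and $\alpha(e)$, and a directed edge with one endpoint newly inside $V(Y)$ cannot have its other endpoint leave $V(Y)$, any such move still lands in a case whose coefficient is $\ge\beta(e)$ and, more precisely, can only increase because at least one endpoint entered $V(Y)$. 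I would therefore argue it cleanly by tracking the pair $(\text{is an endpoint of }e\text{ in }V(X)?,\ \text{both? one? none?})$ and observing this ``inclusion status'' is monotone under $X\subseteq Y$; combined with $e\in X\Rightarrow e\in Y$, this pins down that $\omega_Y(e)$'s coefficient dominates $\omega_X(e)$'s. Since all coefficients and weights are non-negative, no sign issues arise, and the lemma follows.
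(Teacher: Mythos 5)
Your proof is correct and follows essentially the same route as the paper's: a pointwise case analysis on the type of $a$ and on which branch of the definition of $\omega_X(a)$ applies, reducing each comparison to one of the coefficient inequalities in the definition of a monotone coefficient function. The paper makes this concrete by writing $\omega_Y(a)-\omega_X(a)$ explicitly as a non-negative quantity $\Delta\cdot w(a)$ times the count of newly covered endpoints, whereas you argue more abstractly via the partial order of coefficients; your extra check that a directed edge cannot swap between the $\alpha^+$ and $\alpha^-$ cases (since $V(X)\subseteq V(Y)$ forbids an endpoint from leaving) is exactly the right thing to verify and is implicitly reflected in the paper's list of admissible $\Delta$ values.
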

\begin{proof} 
  Let $X\subseteq  Y\subseteq V(M)\cup E(M)$,
  where $V(X)\subseteq V(Y)$ holds. 
  It suffices to show that $\omega_Y(a)\geq \omega_X(a)$ 
  for any element $a\in V(M)\cup E(M)$.
  For each vertex $v\in V(M)$, we see that  
  $\omega_Y(v)
  =\omega_X(v)+|\{v\}\cap (V(Y)\setminus V(X))|(1-\beta(v))w(v)
  \geq \omega_X(v)$. 
    For each edge $e\in E(M)$ with end-vertices $u$ and $v$, we see that
  (i)
    $\omega_Y(e)=\omega_X(e)+(1-\alpha)|\{e\}\cap(Y\setminus X)|w(e)\ge\omega_X(e)$
    if $u,v\in V(X)$;
  and
(ii)
$\omega_Y(e)=\omega_X(e)+\Delta|\{u,v\}\cap (V(Y)\setminus V(X))| w(e)\geq \omega_X(e)$
  otherwise,
where  
$\Delta$ is one of
$1-\overline{\alpha}(e)$, $1- \alpha^+(e)$, $1-\alpha^-(e)$, 
$\alpha(e)-\overline{\alpha}(e)$, $\alpha(e)- \alpha^+(e)$,
$\alpha(e)-\alpha^-(e)$, 
$(\alpha(e)-\beta(e))/2$,
$\overline{\alpha}(e)-\beta(e)$,
$\alpha^+(e)-\beta(e)$, $\alpha^-(e)-\beta(e)$ and
$(1-\beta(e))/2$.
\end{proof}

For a system $(M,w,\gamma)$ on a mixed graph $M$ with  
 $n$ vertices and $m$ edges 
and a real $k\geq 0$, let $\mathrm{tm}(n,m,k)$ and $\mathrm{sp}(n,m,k)$ respectively 
 denote the time and space complexities for testing if $\mu(u,v;X)<k$
holds or not for two vertices $u,v\in V(M)$ and
 a subset $X\subseteq V(M)\cup E(M)$.

\begin{lem}   \label{lem:flow}   
For a monotone system $(M,w,\gamma)$, let  $\omega$ be the corresponding 
 monotone meta-weight function.
\begin{enumerate}
\item[{\rm (i)}]   
$\mathrm{tm}(n,m,k)=O(mn\log n)$ and $\mathrm{sp}(n,m,k)=O(n+m)$; and 
\item[{\rm (ii)}]  
 Let  $X\subseteq Y\subseteq V(M)\cup E(M)$ be non-empty subsets such that 
 $\omega_X(V(X))\geq k$ and $\mu(u,u';Y)\geq k$ for all vertices $u,u'\in V(X)$.  
Given a vertex $t\in V(Y)\setminus V(X)$, 
 whether  there is a vertex $u\in V(X)$   such that
$\mu(u,t;Y)<k$ or not can be tested
 in $\mathrm{tm}(n,m,k)$ time and $\mathrm{sp}(n,m,k)$ space.  
\end{enumerate}
\end{lem}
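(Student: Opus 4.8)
\textbf{Proof plan for Lemma~\ref{lem:flow}.}

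\medskip

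\noindent The plan is to establish part (i) by exhibiting an explicit minimum-cut computation for the quantity $\mu(u,v;X)$ and then to reduce part (ii) to a single such computation on an auxiliary graph. For part (i), I would first observe that, for fixed $X$, the value $\omega_X(\varepsilon(C))$ of an $s,t$-cut $C=(S,T)$ decomposes additively over the edges in $E(S,T)$ and the vertices in $R=V(M)\setminus(S\cup T)$, with each element contributing one of the finitely many coefficients prescribed by the definition of $\omega_X$ (namely $w$, $\alpha w$, $\overline{\alpha}w$, $\alpha^+w$, $\alpha^-w$, or $\beta w$, each scaled value being nonnegative and determined locally by whether the endpoints lie in $V(X)$). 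The standard trick is to build a capacitated directed graph $M'$ on vertex set obtained by splitting each vertex $v\notin V(u,v\text{-pair})$ into an in-copy and an out-copy joined by an arc of capacity equal to $v$'s vertex-weight contribution, replacing undirected edges by antiparallel arcs, and assigning each arc the appropriate $\omega_X$-coefficient times $w$; then $\mu(u,v;X)$ equals the minimum $u\text{-out},v\text{-in}$ cut value in $M'$, computable by a max-flow algorithm. Using a max-flow algorithm whose running time is $O(mn\log n)$ on a graph with $O(n)$ vertices and $O(m)$ edges (e.g.\ the Goldberg--Tarjan type bound), and noting that $M'$ has $O(n)$ vertices and $O(n+m)=O(m)$ arcs, gives $\mathrm{tm}(n,m,k)=O(mn\log n)$; the graph $M'$ and the flow data structures occupy $O(n+m)$ space, giving $\mathrm{sp}(n,m,k)=O(n+m)$.

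\medskip

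\noindent For part (ii), the goal is to decide, for a given $t\in V(Y)\setminus V(X)$, whether $\min_{u\in V(X)}\mu(u,t;Y)<k$, and to do so within a single call to the machinery of part (i) rather than $|V(X)|$ separate calls. The key idea is that $V(X)$ can be contracted into a single super-source: since the hypothesis guarantees $\mu(u,u';Y)\ge k$ for all $u,u'\in V(X)$, no minimum $u,t$-cut over $u\in V(X)$ separates two vertices of $V(X)$ from each other on the source side below the threshold $k$, so identifying all of $V(X)$ to one vertex $s^\ast$ and computing $\mu(s^\ast,t;Y)$ in the contracted graph yields exactly $\min_{u\in V(X)}\mu(u,t;Y)$ as far as the comparison with $k$ is concerned. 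Here the condition $\omega_X(V(X))\ge k$ is what lets us safely treat the contracted blob as ``saturated'' — any cut that would place some vertex of $V(X)$ into the deleted-vertex region $R$ already costs at least $\omega_X(V(X))\ge k$ by nonnegativity and monotonicity of $\omega$, exactly mirroring the argument in the proof of \lemref{connectivity}. After the contraction we invoke a single minimum-cut computation as in part (i), so the time and space bounds carry over verbatim as $\mathrm{tm}(n,m,k)$ and $\mathrm{sp}(n,m,k)$.

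\medskip

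\noindent I expect the main obstacle to be the bookkeeping in part (ii): one must argue carefully that contracting $V(X)$ does not change the answer to the decision problem, i.e.\ that a cheapest $(s^\ast,t)$-cut in the contracted graph corresponds to a cheapest $(u,t)$-cut for some $u\in V(X)$ in the original, \emph{and} that no spurious cut of value $<k$ is introduced by the contraction. The subtlety is that $\omega_Y$ assigns coefficients based on membership in $V(X)$ versus $V(Y)$, and one must check that these coefficients are consistent with, and not decreased by, the contraction — here monotonicity of $\omega$ and the hypotheses $\omega_X(V(X))\ge k$ and $\mu(u,u';Y)\ge k$ do all the work, in a manner parallel to the case analysis in \lemref{connectivity}. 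Once that correspondence is pinned down, part (ii) is an immediate corollary of part (i), and the complexity statement follows.
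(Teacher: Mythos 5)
Part~(i) of your plan is fine and matches the paper's intent (reduce $\mu(s,t;X)$ to a vertex-capacitated minimum cut via vertex splitting and run a max-flow routine). The gap is in part~(ii), and it is not mere bookkeeping: the contraction idea is not correct as stated. When you identify all of $V(X)$ into a single super-source $s^\ast$, every $s^\ast,t$-cut $(S,T)$ in the contracted graph corresponds to a cut in the original $(M,\omega_Y)$ that places \emph{all} of $V(X)$ on the source side $S$. But a cheapest $u,t$-cut $(S,T)$ witnessing $\mu(u,t;Y)<k$ is only forbidden by the hypothesis $\mu(u,u';Y)\geq k$ from putting vertices of $V(X)$ into $T$; it may freely put a proper, nonempty subset of $V(X)$ into the deleted region $R=V(M)\setminus(S\cup T)$. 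Your justification that ``any cut that would place some vertex of $V(X)$ into $R$ already costs at least $\omega_X(V(X))\geq k$'' is false: nonnegativity and monotonicity give $\omega_Y(\varepsilon(C))\geq\omega_Y(R\cap V(X))\geq\omega_X(R\cap V(X))$, which can be strictly less than $\omega_X(V(X))$ when $R\cap V(X)\subsetneq V(X)$. The bound $\geq k$ you invoke (which does mirror the proof of \lemref{connectivity}) is available only when \emph{all} of $V(X)$ lies in $R$. Consequently, after contraction, a cut witnessing $\mu(u,t;Y)<k$ for some $u$ may simply have no counterpart of cost $<k$ in the contracted graph: moving the $R$-vertices of $V(X)$ back into $S$, as contraction forces, can introduce new crossing edges whose capacity pushes the cut value to $\geq k$.

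The paper instead \emph{adds} a new vertex $s^\ast$ and arcs $(s^\ast,u)$ of capacity $k$ to each $u\in V(X)$, leaving $V(X)$ intact. This preserves the freedom to place any subset of $V(X)$ into $R$. The claim to prove is then $\mu(s^\ast,t;Y)\geq k$ in $(M^\ast,\omega_Y)$ iff $\mu(u,t;Y)\geq k$ for all $u\in V(X)$; in the direction $\mu(s^\ast,t;Y)<k$, the capacity-$k$ arcs rule out $V(X)\cap T\neq\emptyset$, the hypothesis $\omega_X(V(X))\geq k$ rules out $V(X)\subseteq R$, so some $u\in V(X)\cap S$ exists, and dropping $s^\ast$ from $S$ gives a genuine $u,t$-cut of value $<k$ in $(M,\omega_Y)$. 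You need this argument in place of contraction to close the gap.
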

\begin{proof} 
(i) The problem of computing $\mu(s,t;X)$ can be formulated
as a problem of finding a maximum flow in a weighted graph $(M,\omega_X)$ 
with an edge-capacity $\omega_X(e)$, $e\in E(M)$ and
a vertex-capacity $\omega_X(v)$, $v\in V(M)$, 
and $\mu(s,t;X)$  can be computed in $O(mn\log n)$ time and $O(n+m)$ space
by using the maximum flow algorithm \cite{AMO89,AMO93}.
Hence
$\mathrm{tm}(n,m,k)=O(mn\log n)$ and $\mathrm{sp}(n,m,k)=O(n+m)$.

(ii) Let  $t\in V(Y)\setminus V(X)$. 
 To find a vertex $u\in V(X)$ with $\mu(u,t;Y)<k$ if any by using (i) only once,
 we augment the weighted graph $(M,\omega_Y)$ into another weighted graph
  $(M^*,\omega_Y)$ 
 with a new vertex $s^*$
 and $|V(X)|$ new directed edges $e_u=(s^*,u)$, $u\in V(X)$
such that $\omega_Y(e_u):=k$.
We claim that $\mu(u,t;Y)\geq k$ for all vertices $u\in V(X)$ in $(M,\omega_Y)$ 
if and only if 
 $\mu(s^*,t;Y)\geq k$ in  $(M^*,\omega_Y)$. 

First consider the case of $\mu(s^*,t;Y)< k$ in $(M^*, \omega_Y)$; i.e., 
the graph $(M^*,\omega_Y)$ has an $s^*,t$-cut $C^*=(S,T)$ 
with $\omega_Y(\varepsilon(C^*))<k$, where $s^*\in S$ and $t\in T$.
Let $R=V(M^*)\setminus (S\cup T)$, where $R=V(M)\setminus (S\cup T)$.
Note that $X\subseteq S\cup R$, since otherwise $u\in T\cap V(X)$
 would mean that
$e_u=(s^*,u)\in E(S,T)$ and 
$\omega_Y(\varepsilon(C^*))\geq  \omega_Y(e_u)= k$,
contradicting that $\omega_Y(\varepsilon(C^*))<k$.
Also $S\cap V(X)\neq\emptyset$, 
since otherwise $V(X)\subseteq R$ would
mean that $\omega_Y(\varepsilon(C^*))\geq  \omega_Y(R)
\geq \omega_X(V(X))\geq k$,
contradicting that $\omega_Y(\varepsilon(C^*))<k$.
Let $u\in S\cap V(X)$. 
Then $C=(S\setminus\{s^*\},T)$ is a  $u,t$-cut in $(M,\omega_Y)$
with $\omega_Y(\varepsilon(C))\leq \omega_Y(\varepsilon(C^*))<k$.
This means that $\mu(u,t;Y)<k$. 

Next consider the case of $\mu(s^*,t;Y)\geq k$  in $(M^*, \omega_Y)$.
In this case, we show that  $\mu(u,t;Y)\geq k$ for all  vertices $u\in V(X)$. 
To derive a contradiction, assume that $\mu(u,t;Y)<k$
 for some vertex $u\in V(X)$; i.e.,
the graph $(M, \omega_Y)$ has a $u,t$-cut $C=(S,T)$ with $\omega_Y(\varepsilon(C))<k$.
Note that $T\cap V(X)=\emptyset$, since otherwise $u'\in  T\cap V(X)$ would 
contradict the assumption that $\mu(u,u';Y)\geq k$ holds 
 for all vertices $u,u'\in V(X)$.
Then $C'=(S'=S\cup \{s^*\},T)$ is an $s^*,t$-cut in $(M^*, \omega_Y)$,
and satisfies $\omega_Y(\varepsilon(C'))=\omega_Y(\varepsilon(C))<k$
since $T\cap V(X)=\emptyset$.
This, however, contradicts that $\mu(s^*,t;Y)\geq k$ holds  in $(M^*, \omega_Y)$. 

 By the claim, it suffices to test if $\mu(s^*,t;Y)\geq k$ or not  in 
$\mathrm{tm}(n,m,k)$ time and $\mathrm{sp}(n,m,k)$ space. 
\end{proof}

We denote by $\MC(M,w,\gamma,k)$ 
the family of $k$-connected sets
$X$ with $\omega_X(V(X))\ge k$
in a system $(M,w,\gamma)$. 
By  Lemmas~\ref{lem:transitive} and \ref{lem:monotone},
$\MC(M,w,\gamma,k)$ is transitive.
Let $\Lambda\subseteq V(M)\cup E(M)$ be a  subset.
Let   $\MC(M,w,\gamma,k,\Lambda)$ 
denote the family of components $X\in \MC(M,w,\gamma,k)$ 
such that $X\subseteq \Lambda$,
where we see that $\MC(M,w,\gamma,k,\Lambda)$ is also transitive.
We consider how to construct oracles $\mathrm{L}_1$ and
 $\mathrm{L}_2$ to the transitive system.
For two non-empty subsets $X\subseteq Y\subseteq \Lambda$,
let $\Max(Y)$ denote the family of maximal subsets
$Z\in \MC(M,w,\gamma,k,\Lambda)$
such that $Z\subseteq Y$, and
let $C_k(X;Y)$ denote
a maximal set $X^*\in \Max(Y)$ such that 
$X\subseteq X^*$; and
$C_k(X;Y)\triangleq \emptyset$
if no such set $X^*$ exists. 

\begin{lem}   \label{lem:maximal}   
For a monotone system  $(M,w,\gamma,\Lambda)$, 
let $\omega$ denote the corresponding  monotone meta-weight function.
Let $X\subseteq Y\subseteq  \Lambda$ be non-empty subsets  such that 
$\omega_X(V(X))\geq k$. 
Then 
\begin{enumerate}
\item[{\rm (i)}]  
 $X^*=C_k(X;Y)$  is uniquely determined;
\item[{\rm (ii)}] 
If there are vertices $u\in V(X)$ and $v\in V(Y)$ such that 
$\mu(u,v;Y)<k$, then $v\not\in V(X^*)$; 
\item[{\rm (iii)}]
Assume that  $\mu(u,v;Y)\geq k$ for all vertices $u\in V(X)$ 
and $v\in V(Y)\setminus V(X)$. 
Then $C_k(X;Y)=Y$ if  $\mu(u,u';Y)\geq k$ for all vertices $u,u'\in  V(X)$;
and  $C_k(X;Y)=\emptyset$ otherwise; and 
\item[{\rm (iv)}]
Finding $C_k(X;Y)$ can be done
 in $O(|Y|^2 \mathrm{tm}(n,m,k))$ time and $O(\mathrm{sp}(n,m,k) +|Y|)$ space.  
\end{enumerate}
\end{lem}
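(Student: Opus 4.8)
The plan is to establish the four parts essentially in the order stated, since each later part leans on the earlier ones. For part~(i), I would argue exactly as in \lemref{unique}: the family $\MC(M,w,\gamma,k,\Lambda)$ is transitive (by Lemmas~\ref{lem:transitive} and~\ref{lem:monotone}, intersecting with $\Lambda$ preserves transitivity), so if $X^*_1,X^*_2\in\Max(Y)$ both contain $X$, then $X\subseteq X^*_1\cap X^*_2$ forces $X^*_1\cup X^*_2\in\MC(M,w,\gamma,k,\Lambda)$; since this union is still $\subseteq Y$, maximality gives $X^*_1=X^*_2$. The only subtlety to check is that $\Max(Y;X)$ is nonempty, i.e. that $X$ itself sits inside some maximal component of $Y$ — this follows from $\omega_X(V(X))\geq k$ together with $X$ being $k$-connected (if $|V(X)|=1$, vacuously; otherwise $X\in\MC(M,w,\gamma,k,\Lambda)$ directly if $X$ is $k$-connected, and in any case $X$ is contained in at least one component, hence in a maximal one).

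For part~(ii), suppose $u\in V(X)$, $v\in V(Y)$ with $\mu(u,v;Y)<k$, and assume for contradiction that $v\in V(X^*)$. Then $u,v\in V(X^*)$, and since $X^*$ is $k$-connected we would need $\mu(u,v;X^*)\geq k$ (as $|V(X^*)|\geq 2$). But $X^*\subseteq Y$, so by monotonicity of $\omega$ we have $\omega_{X^*}(a)\leq\omega_Y(a)$ for every element $a$, hence $\mu(u,v;X^*)\leq\mu(u,v;Y)<k$ — here I would spell out that for any fixed $s,t$-cut $C$, $\omega_{X^*}(\varepsilon(C))\leq\omega_Y(\varepsilon(C))$, so the minimum over cuts can only decrease. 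This contradiction gives $v\notin V(X^*)$.

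Part~(iii) is where the real content is, and I expect it to be the main obstacle. Assume $\mu(u,v;Y)\geq k$ for all $u\in V(X)$ and all $v\in V(Y)\setminus V(X)$. If additionally $\mu(u,u';Y)\geq k$ for all $u,u'\in V(X)$, then combining the two hypotheses gives $\mu(u,v;Y)\geq k$ for \emph{every} pair $u,v\in V(Y)$, so $Y$ itself is $k$-connected; moreover $\omega_Y(V(Y))\geq\omega_Y(V(X))\geq\omega_X(V(X))\geq k$ by monotonicity, so $Y\in\MC(M,w,\gamma,k,\Lambda)$, and being the largest subset of itself it equals $C_k(X;Y)$. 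Conversely, if some pair $u,u'\in V(X)$ has $\mu(u,u';Y)<k$, then by part~(ii) applied with $v=u'$ we get $u'\notin V(X^*)=V(C_k(X;Y))$; but $u'\in V(X)\subseteq V(X^*)$ whenever $C_k(X;Y)\neq\emptyset$, a contradiction, so $C_k(X;Y)=\emptyset$. The delicate point is handling the $\emptyset$ convention and the case $|V(X)|=1$ cleanly: when $|V(X)|=1$ the condition ``$\mu(u,u';Y)\geq k$ for all $u,u'\in V(X)$'' is vacuous, and I must confirm the stated dichotomy still reads correctly (it does, since then every pair in $V(Y)$ is covered by the first hypothesis).

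Finally, for part~(iv), I would turn part~(iii) into an algorithm: iteratively shrink $Y$. Starting from $Y_0=Y$, repeatedly test — using \lemref{flow}(ii), one max-flow computation per round — whether there is a vertex $t\in V(Y_i)\setminus V(X)$ with $\mu(u,t;Y_i)<k$ for some $u\in V(X)$; if so, delete such a $t$ (and all incident edges, and then restrict to $\Lambda$) to form $Y_{i+1}$, whose element set has strictly smaller vertex count. This loop runs at most $|V(Y)|=O(|Y|)$ times. When it halts, the surviving set $Y'$ satisfies the hypothesis of part~(iii), so one further check of whether $\mu(u,u';Y')\geq k$ for all $u,u'\in V(X)$ — at most $O(|V(X)|^2)=O(|Y|^2)$ max-flow tests, or fewer with a source-merging trick as in \lemref{flow}(ii) — decides whether $C_k(X;Y)=Y'$ or $\emptyset$. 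I should also verify the monotonicity invariant $\omega_{X}(V(X))\geq k$ persists (it does, since $X$ is untouched) and that each deleted $t$ genuinely lies outside $V(C_k(X;Y))$ by part~(ii), so $Y'\supseteq C_k(X;Y)$; combined with $Y'$ being $k$-connected (when nonempty after the final check) this pins down $Y'=C_k(X;Y)$. Counting $O(|Y|)$ rounds at $\mathrm{tm}(n,m,k)$ each for the shrinking phase plus $O(|Y|^2)$ flow tests, or simply bounding everything by $O(|Y|^2\,\mathrm{tm}(n,m,k))$ time, and reusing $O(\mathrm{sp}(n,m,k))$ working space plus $O(|Y|)$ to store the current set, gives the claimed bounds. $\Box$
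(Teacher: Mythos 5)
Parts (i), (ii), and (iv) are close enough to the paper's proof that I will only flag minor issues there. The serious problem is in part (iii).

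In your forward direction of (iii) you write that ``combining the two hypotheses gives $\mu(u,v;Y)\geq k$ for every pair $u,v\in V(Y)$.'' That does not follow. Hypothesis (a) handles pairs $u\in V(X)$, $v\in V(Y)\setminus V(X)$; hypothesis (b) handles pairs $u,u'\in V(X)$. Pairs $s,t\in V(Y)\setminus V(X)$ are covered by neither, and the minimum-cut function $\mu(\cdot,\cdot;Y)$ does \emph{not} in general satisfy a triangle inequality through a pivot vertex, because an $s,t$-cut $C=(S,T)$ may place the pivot in the ``removed'' set $R=V(M)\setminus(S\cup T)$ rather than in $S$ or $T$, and $\omega_Y(R)$ need not be $\geq k$ unless you argue it is. This is exactly where the actual work lies: the paper shows that for an $s,t$-cut with $s,t\in V(Y)\setminus V(X)$ and $\omega_Y(\varepsilon(C))<k$, \emph{every} vertex of $V(X)$ must land in $R$ (otherwise $C$ would be a $u,t$- or $s,u$-cut for some $u\in V(X)$, contradicting hypothesis (a)), and then monotonicity plus the assumption $\omega_X(V(X))\geq k$ forces $\omega_Y(\varepsilon(C))\geq\omega_Y(R)\geq\omega_Y(V(X))\geq\omega_X(V(X))\geq k$, a contradiction. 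You flagged (iii) as ``where the real content is'' but then elided precisely that content; the claim as written is a gap.

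Two smaller points. In part (i) you treat nonemptiness of $\Max(Y;X)$ as a subtlety to resolve, and you resolve it by saying ``in any case $X$ is contained in at least one component, hence in a maximal one.'' That last clause is false: the lemma does not assume $X$ is $k$-connected, and if it is not (e.g.\ $\mu(u,u';Y)<k$ for some $u,u'\in V(X)$) then no component of $Y$ contains $X$. No repair is needed, because the definition already sets $C_k(X;Y)=\emptyset$ in that case and uniqueness is vacuous there---the paper simply proves uniqueness when two maximal sets exist, as you also do. Just drop the false clause. In part (iv), ``one max-flow computation per round'' is not what \lemref{flow}(ii) delivers: it tests a single target $t$ against all of $V(X)$ in one flow call, so a round that scans all candidate $t\in V(Y')\setminus V(X)$ costs $O(|Y|)$ flow calls, and with $O(|Y|)$ rounds you get $O(|Y|^2\,\mathrm{tm}(n,m,k))$, which is the bound you do eventually state; tighten the bookkeeping so the intermediate claims match the conclusion.
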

\begin{proof} 
(i) To derive a contradiction, assume that 
 there are two maximal sets $X_1,X_2\in \Max(Y)$
 such that $X\subseteq X_1\cap X_2$. 
From this and the monotonicity of  $\omega$,
it holds that  
$\omega_{X_1\cup X_2}(V(X_1\cup X_2))\geq$
$\omega_{X_1\cap X_2}(V(X_1\cap X_2))$
$\geq \omega_X(V(X))\geq k$.
From this and Lemma~\ref{lem:connectivity}, 
 $X_1\cup X_2$ is also $k$-connected and  
   $X_1\cup X_2\in \Max(Y)$, contradicting
   the maximality of $X_1$ and $X_2$. 
   Therefore $C_k(X;Y)$ is unique.
   
 (ii) When $C_k(X;Y)=\emptyset$, $v\not\in C_k(X;Y)$ is trivial.
 Assume that $C_k(X;Y)=X^*\in \Max(Y)$. 
By the monotonicity of  $\omega$  and $X^*\subseteq Y$,  
it holds that  $\mu(u,v;X^*)\leq  \mu(u,v;Y)  <k$.
Hence $u,v\in V(X^*)$ would contradict the $k$-connectivity of $X^*$. 
Since $u\in  V(X^*)$, we have $v\not\in V(X^*)$. 

(iii)  Obviously if $\mu(u,u';Y)< k$ for some vertices $u,u'\in V(X)$,
then  no subset $Y'$ of $Y$ with $X\subseteq Y'$  can be $k$-connected,
and $C_k(X;Y)=\emptyset$. 
Assume that $\mu(u,u';Y)\geq k$ for all vertices $u,u'\in V(X)$.  
By the monotonicity of  $\omega$  and $X\subseteq Y$,  
it holds that  $\omega_{Y}(V(Y))\geq \omega_X(V(X))\geq k$.
To prove that $C_k(X;Y)=Y$, it suffices to show 
that $\mu(u,v;Y)\geq k$ for all pairs of vertices $u,v\in V(Y)$.
  By assumption, $\mu(u,v;Y)\geq k$ for all vertices $u\in V(X)$ and $v\in V(Y)$.
  To derive a contradiction, assume that there is a pair of vertices
   $s,t\in V(Y)\setminus V(X)$
  with $\mu(s,t;Y)< k$; i.e.,
  there is an $s,t$-cut $C=(S,T)$ with $\omega_Y(\varepsilon(C))<k$.
  Let $R=V(M)\setminus S\cup T$. 
We observe that $V(X)\subseteq R$, since
  $u\in V(X)\cap S$ (resp.,  $u\in V(X)\cap T$) would imply
  that $C$ is a $u,t$-cut (resp., $s,u$-cut), contradicting that 
  $\mu(u,v;Y)\geq k$ for all vertices $v\in V(Y)\setminus V(X)$. 
By the monotonicity of $\omega$ and 
$V(X)\subseteq R$, it would hold that
   $k\leq \omega_X(V(X))\leq \omega_Y(R)\leq \omega_Y(\varepsilon(C))<k$,
a contradiction. 
 
(iv) We can find $C_k(X;Y)$ as follows.
Based on (ii), we first remove the set $Z_V$ of
 all vertices $t\in V(M)\cap(Y\setminus X)$ such that $\mu(u,t;Y)<k$ 
for some vertex $u\in V(X)$ 
and the set $Z_E$ of 
 all edges $e\in E(M)\cap(Y\setminus X)$ such that $\mu(u,t;Y)<k$ 
for some vertices $u\in V(X)$ and $t\in V(\{e\})$
so that $C_k(X;Y)=C_k(X;Y')$ holds for $Y'=Y\setminus (Z_V\cup Z_E)$.
For a fixed vertex $t\in  V(M)\cap(Y\setminus X)$ 
or $t\in V(\{e\})$ with an edge $e\in E(M)\cap(Y\setminus X)$, 
we can test if there is a  vertex $u\in X$
such that $\mu(u,t;Y)<k$ or not in $O(\mathrm{tm}(n,m,k))$ time 
and $O(\mathrm{sp}(n,m,k))$ space
by Lemma~\ref{lem:flow}(ii).
Hence finding such a set $Z_V\cup Z_E$
 takes $O(|Y\setminus X|\mathrm{tm}(n,m,k))$ time 
and $O(\mathrm{sp}(n,m,k)+|Z_V\cup Z_E|)$ space.
We repeat the above procedure until there is no pair 
of vertices $u\in V(X)$ and $v\in V(Y')\setminus V(X)$
after executing at most $|Y\setminus X|$ repetitions taking 
$O(|Y\setminus X|^2\mathrm{tm}(n,m,k))$ time and
 $O(\mathrm{sp}(n,m,k)+|Y\setminus X|)$ space.

\begin{algorithm}[h]
  \caption{{\sc Maximal}$(X;Y)$: 
  Finding the maximal set in  $\MC(M,w,\gamma,k,\Lambda)$
   that contains   a specified set  }
  \label{alg:maximal_specified}
  \begin{algorithmic}[1]
\Require A monotone system $(M,w,\gamma)$, a real $k\geq 0$,
a subset $\Lambda\subseteq V(M)\cup E(M)$ and non-empty 
subsets $X\subseteq Y\subseteq \Lambda$ 
such that $\omega_X(V(X))\geq k$.
\Ensure    $C_k(X;Y)$
 \State   $Y':=Y$; 
\While {there are vertices $u\in V(X)$ and $t\in V(Y')\setminus V(X)$ 
such that 
$\mu(u,t;Y')<k$}
\State   $Z_V:= \{ t\in V(M)\cap(Y'\setminus X) \mid \mbox{$\mu(u,t;Y')<k$
  for some vertex $u\in V(X)$}\}$;
\State   $Z_E:= \{ e\in E(M)\cap(Y'\setminus X) \mid   \mu(u,t;Y')<k$  \par
    \hskip\algorithmicindent  for some vertices $u\in V(X)$ and $t\in V(\{e\})\}$;
\State $Y':=Y'\setminus (Z_V\cup Z_E)$
\EndWhile;
\If {$\mu(u,u';Y')\geq k$ for all vertices $u,u'\in V(X)$}
  \State Output $Y'$ as $C_k(X;Y)$
\Else
  \State Output $\emptyset$ as $C_k(X;Y)$
\EndIf
  \end{algorithmic}\end{algorithm}

Based on (iii), we finally conclude that $C_k(X;Y)=Y'$ ($C_k(X;Y)=\emptyset$) 
 if there is not pair of  vertices $u,u'\in V(X)$ such that $\mu(u,u';Y')<k$ 
 (resp., otherwise), which takes  
$O(|X|^2\mathrm{tm}(n,m,k))$ time and $O(\mathrm{sp}(n,m,k))$ space
 by Lemma~\ref{lem:flow}(i).

An entire algorithm is described in Algorithm~\ref{alg:maximal_specified}.
The time and space complexities are then 
$O(|Y|^2\mathrm{tm}(n,m,k))$ time and $O(\mathrm{sp}(n,m,k) +|Y|)$, respectively. 
\end{proof}

 By the lemma, oracle $\mathrm{L}_1(X;Y)$  to a monotone system
 $(M,w,\gamma)$ runs 
   in $\theta_{\mathrm{1,t}}=O(|Y|^2\mathrm{tm}(n,m,k))$ time and  
$\theta_{\mathrm{1,s}}=O(\mathrm{sp}(n,m,k) +|Y|)$ space.

\bigskip
\noindent
{\bf $k$-core} 
For a   system $(M,w,\gamma,\Lambda)$,  
we define a {\em $k$-core} of a subset $Y\subseteq  \Lambda$ 
to be a  subset $Z$ of $Y$
such that   $\omega_Z(V(Z))\geq k$ and any proper subset $Z'$ of $Z$
satisfies  $\omega_{Z'}(V(Z'))< k$.

\begin{lem}   \label{lem:core}   
Let $(M,w,\gamma,\Lambda)$ be a monotone system, and $Y$
be a subset of $\Lambda$.  
For   the family $\mathcal{K}$ of all $k$-cores of $Y$,
it holds that  $\Max(Y)= \bigcup_{Z\in \mathcal{K}} \{C_k(Z;Y)\}$
and $|\Max(Y)|\leq |\mathcal{K}|$. 
Given $\mathcal{K}$, $\Max(Y)$ can be obtained
in $O(|\mathcal{K}|( |Y|^2 \mathrm{tm}(n,m,k)+|Y|\log |\mathcal{K}|))$ time and 
$O(\mathrm{sp}(n,m,k)+|\mathcal{K}|\cdot|Y|)$ space. 
\end{lem}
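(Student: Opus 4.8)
The plan is to prove the identity $\Max(Y)=\bigcup_{Z\in\mathcal{K}}\{C_k(Z;Y)\}$ by double inclusion, to read off $|\Max(Y)|\le|\mathcal{K}|$ from the fact that $Z\mapsto C_k(Z;Y)$ then maps $\mathcal{K}$ onto $\Max(Y)$, and finally to append the stated time and space bounds by invoking \lemref{maximal}(iv) once for each $k$-core.

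For $\Max(Y)\subseteq\bigcup_{Z\in\mathcal{K}}\{C_k(Z;Y)\}$, I would take $W\in\Max(Y)$, so that $W\subseteq Y$, $W$ is $k$-connected, and $\omega_W(V(W))\ge k$. Using the monotonicity and non-negativity of $\omega$, the map $Z'\mapsto\omega_{Z'}(V(Z'))$ is nondecreasing with respect to inclusion, so the family $\{Z'\mid\omega_{Z'}(V(Z'))\ge k\}$ is upward closed; since $W$ belongs to it, $W$ contains an inclusion-minimal subset $Z$ with $\omega_Z(V(Z))\ge k$, and as $Z\subseteq W\subseteq Y$ this $Z$ is a $k$-core of $Y$, i.e.\ $Z\in\mathcal{K}$. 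Then $W\in\Max(Y)$ is a set containing $Z$ with $\omega_Z(V(Z))\ge k$, so by the uniqueness in \lemref{maximal}(i) it equals $C_k(Z;Y)$, which lies in the right-hand side. Conversely, for any $Z\in\mathcal{K}$ the definition of $C_k$ together with \lemref{maximal}(i) gives that $C_k(Z;Y)$ is either a member of $\Max(Y)$ or the empty set, the latter occurring exactly when no $k$-connected subset of $Y$ contains $Z$ (which is possible because a $k$-core need not itself be $k$-connected); discarding such empty values, every $C_k(Z;Y)$ that is produced lies in $\Max(Y)$. This establishes the identity, and combined with the first inclusion it shows that $Z\mapsto C_k(Z;Y)$ is a surjection of $\mathcal{K}$ onto $\Max(Y)$, whence $|\Max(Y)|\le|\mathcal{K}|$.

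For the complexity, given $\mathcal{K}$ I would run \algref{maximal_specified} to compute $C_k(Z;Y)$ for each $Z\in\mathcal{K}$; by \lemref{maximal}(iv) one such computation takes $O(|Y|^2\,\mathrm{tm}(n,m,k))$ time and $O(\mathrm{sp}(n,m,k)+|Y|)$ space, with the working space reused across the $|\mathcal{K}|$ calls. This yields a list of at most $|\mathcal{K}|$ subsets of $Y$, from which I would delete the empty set and remove duplicates by sorting; since each set has size at most $|Y|$, one comparison costs $O(|Y|)$ and the sort uses $O(|\mathcal{K}|\log|\mathcal{K}|)$ comparisons, for $O(|\mathcal{K}|\,|Y|\log|\mathcal{K}|)$ time in total. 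Adding the two contributions gives the claimed $O(|\mathcal{K}|(|Y|^2\,\mathrm{tm}(n,m,k)+|Y|\log|\mathcal{K}|))$ time, and storing the resulting sets gives $O(\mathrm{sp}(n,m,k)+|\mathcal{K}|\cdot|Y|)$ space.

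The only genuine obstacle here is conceptual care rather than calculation: one has to justify that a $k$-connected component with $\omega_W(V(W))\ge k$ really does contain a $k$-core — this is precisely the upward-closedness observation — and one must not overlook that a $k$-core can fail to be $k$-connected, so that $C_k(Z;Y)$ may legitimately be empty and must be excluded from the union; once these two points are settled, everything else is a direct appeal to \lemref{maximal}.
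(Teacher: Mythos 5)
Your proof is correct and follows essentially the same approach as the paper: show every $X\in\Max(Y)$ contains a $k$-core $Z$ and equals $C_k(Z;Y)$ by the uniqueness in \lemref{maximal}(i), note each nonempty $C_k(Z;Y)$ lies in $\Max(Y)$, then apply \lemref{maximal}(iv) once per $k$-core with duplicate removal for the complexity. You are somewhat more explicit than the paper in two useful places — justifying via monotonicity of $\omega$ that a minimal subset with $\omega_Z(V(Z))\ge k$ exists inside each $X\in\Max(Y)$, and flagging that $C_k(Z;Y)$ may be $\emptyset$ and such values must be discarded — but these are clarifications of the paper's argument rather than a different route.
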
    
\begin{proof}
Clearly each set $X\in  \Max(Y)$ satisfies $\omega_X(V(X))\geq k$ and
contains a $k$-core $Z\in \mathcal{K}$, where 
$C_k(Z;Y)\neq\emptyset$ and $C_k(Z;Y)=X$ holds by the uniqueness in 
Lemma~\ref{lem:maximal}(i).
Therefore $\Max(Y)=\bigcup_{Z\in \mathcal{K}} \{C_k(Z;Y)\}$,
from which  $|\Max(Y)|\leq |\mathcal{K}|$ follows. 
 Given $\mathcal{K}$, we compute $C_k(Z;Y)$ for each set   $Z\in \mathcal{K}$
 taking  $O(|Y|^2 \mathrm{tm}(n,m,k))$ time and 
$O(\mathrm{sp}(n,m,k)+|Y|)$ space 
 by Lemma~\ref{lem:maximal}(iv).
 We can test if the same set $X\in  \Max(Y)$ has been generated or not
 in $O(|Y|\log |\mathcal{K}|)$ time and $O(|\mathcal{K}|\cdot|Y|)$ space. 
 Therefore $\mathcal{X}$ can be constructed  
in $O(|\mathcal{K}|( |Y|^2 \mathrm{tm}(n,m,k)+|Y|\log |\mathcal{K}|))$ time and 
$O(\mathrm{sp}(n,m,k)+|\mathcal{K}|\cdot|Y|)$ space.  
\end{proof}

By the lemma, oracle $\mathrm{L}_2(Y)$  to a monotone system 
$(M,w,\gamma,\Lambda)$ runs 
in $\theta_{\mathrm{2,t}}=O(|\mathcal{K}|( |Y|^2 \mathrm{tm}(n,m,k)
+|Y|\log |\mathcal{K}|))$ time and  
$\theta_{\mathrm{2,s}}=O(\mathrm{sp}(n,m,k)+|\mathcal{K}|\cdot|Y|)$ space,
where we assume that the family $\mathcal{K}$ of $k$-cores of 
$Y$ is given as input.

\section{Connector Enumeration Problem}
\label{sec:app.conn}

This section treats   the   case 
where $G$ is an undirected graph 
and $\MC$ is the family of all vertex subsets that induce connected subgraphs. 
 
\subsection{Problem Description}
Assume that we are given a tuple $(G,I,\sigma)$ with
an undirected graph $G$,
a set $I$ of items,
and a function $\sigma:V(G)\to 2^I$.
For a subset $X\subseteq V(G)$, 
let  $\Item(X)$ denote the common item set $\bigcap_{u\in X}\sigma(u)$.
A subset $X\subseteq V(G)$ such that $G[X]$ is connected
called a {\em connector},
if  for any vertex $v\in V(G)\setminus X$,
$G[X\cup\{v\}]$ is not connected or 
$\Item(X\cup\{v\})\subsetneq \Item(X)$; i.e., 
there is  no proper superset $Y$ of $X$ 
such that $G[Y]$ is connected and $\Item(Y)=\Item(X)$. 

The problem of enumerating all connectors
is called the {\em connector enumeration problem}
 in the literature~\cite{HMSN2.2018,HMSN.2019,O.2017,SSF.2010},
which has applications in biology. 

\figref{instance} illustrates  a brief example of an 
instance of the connector enumeration problem.

We show that \thmref{main} yields the first polynomial-delay algorithm
 for the connector enumeration problem.

\begin{figure}[t!]
  \centering
  \begin{tikzpicture}
    \node [circle,draw,fill=white] (v1) at (0,2) [label=left:{1,2,3}] {$v_1$};
    \node [circle,draw,fill=white] (v2) at (2,2) [label=right:{1,3}] {$v_2$};
    \node [circle,draw,fill=white] (v3) at (0,0) [label=left:{1,2}] {$v_3$};
    \node [circle,draw,fill=white] (v4) at (2,0) [label=right:3] {$v_4$};
    \draw (v1) -- (v2);
    \draw (v1) -- (v3);
    \draw (v2) -- (v3);
    \draw (v3) -- (v4);
  \end{tikzpicture}
  \caption{An instance of the connector enumeration problem: 
  it has connectors $\{v_1\}$, $\{v_4\}$, $\{v_1,v_2\}$, $\{v_1,v_3\}$,
   $\{v_1,v_2,v_3\}$,
    and $\{v_1,v_2,v_3,v_4\}$, where an item is represented by an integer}
  \label{fig:instance}
\end{figure}
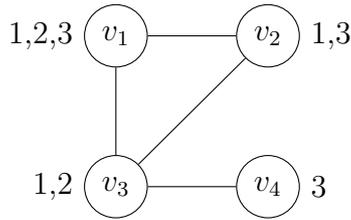

\subsection{Background}
\paragraph{Application.}
A graph with an item set, or an attributed graph,
is useful to represent many existing networks
such as social networks and biological networks. 
Some papers in the literature
have reported applications of the connector enumeration problem in biology. 
Seki and Sese~\cite{SS.2008} considered
a biological network such that a vertex corresponds to a gene
and an edge represents a protein-protein interaction between genes. 
A gene produces RNAs under a certain condition, and the phenomenon 
is called gene expression.
A condition at which gene expression occurs is given to a vertex as an item.
A biologist is particularly interested in a large-sized connector 
with a large common item set,
that is, a large connected set of genes that make expressions simultaneously
under common (possibly complex) conditions.

More recently, Alokshiya et al.~\cite{ASA.2019}
proposed a new algorithm for enumerating all connected induced subgraphs (CISs)
of a given (non-attributed) graph.
They applied the algorithm to find
{\em maximal cohesive patterns} in
BIOGRID protein-protein interaction network~\cite{BIOGRID.2015},
where a maximal cohesive pattern is defined as 
a connector that is maximal among those $X$ satisfying $|\Item(X)|\ge\theta$
for a threshold $\theta$. 


\paragraph{Related studies.}
The connector enumeration problem is a generalization of
the frequent item set mining problem~\cite{AIS.1993}, 
a well-known problem in data mining, such that $G$ is
a clique and a vertex corresponds to a transaction.

For an attributed graph, community detection~\cite{LSHZ.2018} and
frequent subgraph mining~\cite{IWM.2000} are among significant
 graph mining problems.
The latter asks to enumerate all subgraphs
that appear in a given set of attributed graphs ``frequently,''
where the graph isomorphism is defined by taking into account the items.
For the problem, 
gSpan~\cite{YH.2002} should be one of the most successful algorithms. 
The algorithm enumerates all frequent subgraphs
by growing up a search tree.
In the search tree,
a node in a depth $d$ corresponds to a subgraph
that consists of $d$ vertices, and
a node $u$ is the parent of a node $v$
if the subgraph for $v$ is obtained by adding one vertex to
the subgraph for $u$.

For the connector enumeration problem, 
Sese et al.~\cite{SSF.2010} proposed the first algorithm, named COPINE,
which explores the search space by utilizing
the similar search tree as gSpan. 
Okuno et al.~\cite{OHNYS.2014,OHNYS.2016} and Okuno~\cite{O.2017}
studied parallelization of COPINE. 
No algorithm with a theoretical time bound had been known 
until Haraguchi et al.~\cite{HMSN2.2018,HMSN.2019} proposed 
an output-polynomial algorithm, named COOMA, based on 
  a dynamic programming method. 

\subsection{Formulation by Transitive System}
Let us consider formulating
the connector enumeration problem
by means of a transitive system.
For a given instance $(G,I,\sigma)$ of the connector enumeration problem, 
let $\MC_G$ denote the family of
all vertex subsets $X$ such that
the induced subgraph $G[X]$ is connected,
where we regard $G[X]$ with $|X|=1$ (resp., $X=\emptyset$)
as connected (resp., disconnected). 
We see that $(V(G),\MC_G)$ is a transitive system
since, for any $X,Y\in\MC_G$,
$G[X\cup Y]$ is connected whenever $G[X\cap Y]$ is connected. 

Let $n=|V(G)|$ and $m=|E(G)|$. 
We can implement the oracles L$_1$ and L$_2$
so that they run in $O(n+m)$ time and space
(i.e., $\theta_{i,\mathrm{t}}=O(n+m)$, $i=1,2$, and 
$\theta_{i,\mathrm{s}}=O(n+m)$, $i=1,2$)
since they are realized by
conventional graph search (e.g., DFS or BFS). 
We can take the upper bound $\delta(Y)=|Y|$, 
which exactly satisfies our assumption that
$\delta(X)\le\delta(Y)$ holds for subsets $X\subseteq Y\subseteq V$. 

For any $X\subseteq V$,
the followings are equivalent:
\begin{itemize}
  \item $X$ is a connector
    for $(G,I,\sigma)$; and
  \item $X$ is a solution for the instance $(V(G),\MC_G,I,\sigma)$.
\end{itemize}
In \figref{hasse},
we show the Hasse diagram
of the transitive system $(V(G),\MC_G)$
for the instance $(G,I,\sigma)$ in \figref{instance},
along with the solutions for $(V(G),\MC_G,I,\sigma)$.

\begin{figure}[t!]
  \centering
  \begin{tikzpicture}
    \node [rectangle,draw,fill=lightgray,very thick] (S1234) at (3,4.5) [label=above:{$\emptyset$}] {$v_1,v_2,v_3,v_4$};
    \node [rectangle,draw,fill=lightgray,very thick] (S123) at (1,3) [label=above:{1}] {$v_1,v_2,v_3$};
    \node [rectangle,draw,fill=white] (S134) at (3,3) [label=above left:{$\emptyset$}] {$v_1,v_3,v_4$};
    \node [rectangle,draw,fill=white] (S234) at (5,3) [label=above:{$\emptyset$}] {$v_2,v_3,v_4$};
    \node [rectangle,draw,fill=lightgray,very thick] (S12) at (0,1.5) [label=left:{1,3}]{$v_1,v_2$};
    \node [rectangle,draw,fill=lightgray,very thick] (S13) at (2,1.5) [label=left:{1,2}] {$v_1,v_3$};
    \node [rectangle,draw,fill=white] (S23) at (4,1.5) [label=right:{1}] {$v_2,v_3$};
    \node [rectangle,draw,fill=white] (S34) at (6,1.5) [label=right:{$\emptyset$}] {$v_3,v_4$};
    \node [rectangle,draw,fill=lightgray,very thick] (S1) at (0,0) [label=left:{1,2,3}] {$v_1$};
    \node [rectangle,draw,fill=white] (S2) at (2,0) [label=left:{1,3}] {$v_2$};
    \node [rectangle,draw,fill=white] (S3) at (4,0) [label=right:{1,2}] {$v_3$};
    \node [rectangle,draw,fill=lightgray,very thick] (S4) at (6,0) [label=right:{3}] {$v_4$};
    \draw (S1234) -- (S123);
    \draw (S1234) -- (S134);
    \draw (S1234) -- (S234);
    \draw (S123) -- (S12);
    \draw (S123) -- (S13);
    \draw (S123) -- (S23);
    \draw (S134) -- (S13);
    \draw (S134) -- (S34);
    \draw (S234) -- (S23);
    \draw (S234) -- (S34);
    \draw (S12) -- (S1);
    \draw (S13) -- (S1);
    \draw (S23) -- (S2);
    \draw (S34) -- (S3);
    \draw (S12) -- (S2);
    \draw (S13) -- (S3);
    \draw (S23) -- (S3);
    \draw (S34) -- (S4);
  \end{tikzpicture}
  \caption{Hasse diagram of the transitive system $(V,\MC_G)$
    of the instance $(V(G),\MC_G,I,\sigma)$ from \figref{instance},
    where common item sets are indicated by integers and solutions 
    are indicated by shade}
  \label{fig:hasse}
\end{figure}
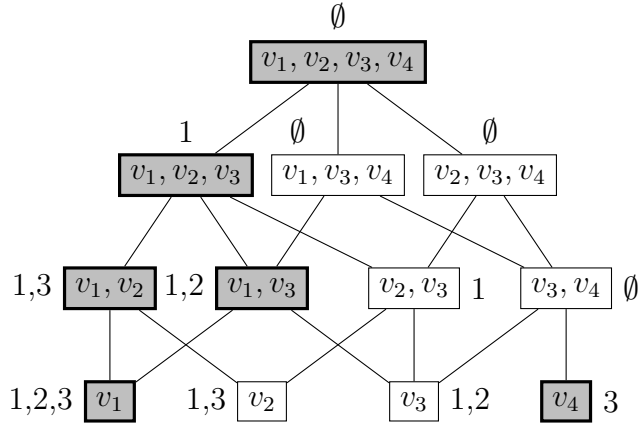
 
\begin{thm}   \label{thm:connect}
  Given an instance $(G,I,\sigma)$
  of the connector enumeration problem in a graph $G$,
   all connectors can be enumerated
  in $O(q^2(n+m)n)$ 
  delay and   $O((q+n+m)n)$ space,
  where $n=|V(G)|$, $m=|E(G)|$ and $q=|I|$. 
\end{thm}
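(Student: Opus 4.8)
The plan is to apply Theorem~\ref{thm:main} directly to the transitive system $(V(G),\MC_G)$ associated with the instance $(G,I,\sigma)$, after recording the relevant oracle complexities. First I would invoke the observation already made in this subsection: the equivalence ``$X$ is a connector for $(G,I,\sigma)$'' $\iff$ ``$X$ is a solution for $(V(G),\MC_G,I,\sigma)$,'' so enumerating all connectors is the same task as enumerating all solutions in $\MS$. Since we want \emph{all} connectors with no size restriction, I would take the volume function $\rho(X)=|X|$ (or any $\rho$ that is positive on every nonempty set), so that every solution is $\rho$-positive and $\theta_{\rho,\mathrm{t}}=\theta_{\rho,\mathrm{s}}=O(n)$; this is exactly the situation noted after \lemref{main_poly}.

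Next I would pin down the oracle parameters for this particular system. As stated just above the theorem, the oracle $\mathrm{L}_1(X,Y)$ — which must return the unique $C(X;Y)\in\Max(X;Y)$, i.e.\ the connected component of $G[Y]$ containing $X$ when $X$ lies in a single such component, and $\emptyset$ otherwise — is realized by a single graph search (DFS/BFS) on $G[Y]$, giving $\theta_{1,\mathrm{t}}=O(n+m)$ and $\theta_{1,\mathrm{s}}=O(n+m)$; similarly $\mathrm{L}_2(Y)$, which returns all connected components of $G[Y]$, runs in $\theta_{2,\mathrm{t}}=O(n+m)$ and $\theta_{2,\mathrm{s}}=O(n+m)$. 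For the bound on $|\Max(X)|$ I would use $\delta(X)=|X|$ (the number of connected components of an induced subgraph on $X$ is at most $|X|$), which satisfies the required monotonicity $\delta(X)\le\delta(Y)$ for $X\subseteq Y$, and in particular $\delta(V)=n$.

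Finally I would substitute these values into the delay bound $O\big(q\theta_{2,\mathrm{t}}+(q(n+\theta_{1,\mathrm{t}})+\theta_{\rho,\mathrm{t}})q\delta(V)\big)$ from \thmref{main}. With $\theta_{2,\mathrm{t}}=O(n+m)$, $\theta_{1,\mathrm{t}}=O(n+m)$, $\theta_{\rho,\mathrm{t}}=O(n)$, $\delta(V)=n$, the first term is $O(q(n+m))$ and the second is $O\big((q(n+n+m)+n)qn\big)=O(q^2(n+m)n)$, which dominates, yielding the claimed $O(q^2(n+m)n)$ delay. Likewise the space bound $O\big((q+n+\theta_{1,\mathrm{s}}+\theta_{2,\mathrm{s}}+\theta_{\rho,\mathrm{s}})n\big)$ becomes $O\big((q+n+(n+m)+(n+m)+n)n\big)=O((q+n+m)n)$. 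I do not anticipate a genuine obstacle here; the only point requiring a little care is confirming that the naive graph-search implementations of $\mathrm{L}_1$ and $\mathrm{L}_2$ genuinely return objects matching the specification in \secref{prel} (in particular that $\mathrm{L}_1$ returns $\emptyset$ exactly when $X$ is not contained in a single connected component of $G[Y]$), but this is routine since $\MC_G$ is the connected-subgraph system and its maximal components inside any $Y$ are precisely the connected components of $G[Y]$.
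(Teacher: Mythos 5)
Your proposal is correct and follows essentially the same route as the paper: both identify connectors with solutions of the instance $(V(G),\MC_G,I,\sigma)$, record the oracle/$\delta$ bounds $\theta_{1,\mathrm{t}}=\theta_{2,\mathrm{t}}=\theta_{1,\mathrm{s}}=\theta_{2,\mathrm{s}}=O(n+m)$, $\delta(V)=n$, and substitute into \thmref{main}. The only difference is that you spell out the choice $\rho(X)=|X|$ (with $\theta_{\rho,\mathrm{t}}=\theta_{\rho,\mathrm{s}}=O(n)$) to make ``all connectors'' fit the $\rho$-positive framework, which the paper handles implicitly via the remark following \lemref{main_poly}.
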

\begin{proof}
  The connector enumeration problem for $(G,I,\sigma)$
  is solved by enumerating all solutions for the instance $(V(G),\MC_G,I,\sigma)$.
  For the transitive system $(V(G),\MC_G)$,
  we see that $\theta_{i,\mathrm{t}}=O(n+m)$, $i=1,2$, 
  $\theta_{i,\mathrm{s}}=O(n+m)$, $i=1,2$, and  
  $\delta(Y)=O(|Y|)=O(n)$.
  By \thmref{main}, we can enumerate all solutions in $\MS$
  in $O(q^2(n+m)n)$ 
  delay and in $O((q+n+m)n)$ space. 
\end{proof}

\subsection{Enumerating Connectors under Various Connectivity Conditions}
\label{sec:app.ext}

In addition to the system $(V(G),\MC_G)$,
we may obtain an alternative  
transitive system by selecting a different notion of connectivity such as
the edge- or vertex-connectivity on a digraph or an undirected graph.
This section presents  two examples of transitive systems
based on high graph connectivity
using the  result in Section~\ref{sec:app.ext.mixed}.

\paragraph{Edge- and Vertex-Connectivity in Mixed Graph} 
Let $G$ be a mixed graph with $n$ vertices and $m$ edges.  
We define a {\em path}  from a vertex $u$ to a vertex $v$ 
(or a {\em $u,v$-path})
 in $G$   to be a subgraph $P$ of $G$ such that
$V(P)=\{v_1~(=u),v_2,\ldots,v_p~(=v)\}$,
$E(P)=\{e_1,e_2,\ldots,e_{p-1}\}$ 
and $e_i=v_iv_{i+1}\in \overline{E}(P)$ or
 $e_i=(v_i,v_{i+1})\in \vec{E}(P)$. 
Let $s,t\in V(G)$ be two vertices in $G$. 
Let $\lambda(s,t;G)$ denote the minimum size $|F|$ of 
a subset $F\subseteq E(G)$ so that the graph $G-F$ 
obtained from $G$ by removing edges in $F$ 
has no $s,t$-path.
Let $\kappa(s,t;G)$ denote the minimum size $|S|$ of
a subset $S\subseteq E(G)\cup (V(G)\setminus\{s,t\})$ 
 to be removed from $G$ so that the graph $G-S$ 
obtained from $G$ by removing vertices and edges in $S$ 
has no $s,t$-path,
where such a minimum subset $S$ can be chosen so that 
$S\setminus E(s,t)\subseteq V(G)$.
By Menger's theorem~\cite{Me27}, 
$\lambda(s,t;G)$ (resp., $\kappa(s,t;G)$) is equal to 
the maximum number of edge-disjoint (resp., internally disjoint) $s,t$-paths.
We can test whether 
$\lambda(s,t;G)\geq k$ (resp., $\kappa(s,t;G)\geq k$)
or not  in $O(\min\{k,n\}m)$
 (resp., $O(\min\{k,n^{1/2}\}m)$) time \cite{AMO89,AMO93}.
A graph $G$ is called {\em $k$-edge-connected} if 
$|V(G)|\geq 1$ and 
 $\lambda(u,v;G)\geq k$ 
 for any two vertices $u,v\in V(G)$. 
A graph $G$ is called {\em $k$-vertex-connected} if 
$|V(G)|\geq k+1$ and  $\kappa(u,v;G)\geq k$ 
 for any two vertices $u,v\in V(G)$. 
In the following, we show two examples of transitive systems
based on graph connectivity. 

\subsection{Vertex Subsets Highly-connected over the Entire Graph}

Given a mixed graph $G$, we define ``$k$-connected set'' based
on the connectivity of the entire graph $G$. 
Let us call a subset $X\subseteq V(G)$ {\em $k$-edge-connected}
if $|X|=1$ or for any two vertices $u,v\in X$, 
 $\lambda(u,v;G)\geq k$.
 Let $\MC_{k,\mathrm{edge}}$ denote the family
 of $k$-edge-connected sets in $G$. 
Let us call a subset $X\subseteq V(G)$ {\em $k$-vertex-connected}
if $|X|\geq k$ or for any two vertices $u,v\in X$, 
 $\kappa(u,v;G)\geq k$.
 Let $\MC_{k,\mathrm{vertex}}$ denote the family
 of $k$-vertex-connected sets in $G$.

\begin{lem}   \label{lem:construct_oracle_entire}   
Let $G$ be a mixed graph   and $k\geq 0$ be an integer,
  where $n=|V(G)|$ and $m=|E(G)|$. 
\begin{enumerate}
\item[{\rm (i)}] 
The family $\MC=\MC_{k,\mathrm{edge}}$   is   transitive.
For any non-empty subsets $X\subseteq Y\subseteq V(G)$,
it holds $|\Max(Y)|\leq |Y|$,
and
oracles 
$\mathrm{L}_1(X;Y)$ and $\mathrm{L}_2(Y)$ run 
in   $O(n^2)$  time and space
after an $O(n^2\min\{k,n\}m)$-time and $O(n^2)$-space preprocessing; and 
\item[{\rm (ii)}] 
The family  $\MC=\MC_{k,\mathrm{vertex}}$ is   transitive.
For any non-empty subsets $X\subseteq Y\subseteq V(G)$,  
it holds $|\Max(Y)|\leq {{|Y|}\choose{k}}$, 
oracle  $\mathrm{L}_1(X;Y)$ runs in $O(n^2)$ time and $O(n^2)$ space, and 
oracle $\mathrm{L}_2(Y)$ runs  in  $O(|Y|^{k}n^2)$ time and $O(|Y|^{k}n)$ space,  
after an $O(n^2\min\{k,n^{1/2}\}m)$-time and $O(n^2)$-space preprocessing.
\end{enumerate}
\end{lem}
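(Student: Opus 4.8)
The plan is to reduce everything to the general machinery of Section~\ref{sec:app.ext.mixed}. First I would observe that $k$-edge-connectivity of a vertex subset $X$, in the sense defined here (``$|X|=1$ or $\lambda(u,v;G)\ge k$ for all $u,v\in X$''), is exactly the notion of $k$-connectedness obtained from a monotone system $(M,w,\gamma)$ by taking $M=G$, $w\equiv 1$ on edges, $w\equiv 0$ on vertices, and a coefficient function $\gamma$ with $\alpha\equiv 1$ (so that an edge contributes its full weight to every cut regardless of membership in $X$) and $\beta\equiv 0$ on vertices (so vertices are free); this $\gamma$ is monotone, and one checks that $\mu(u,v;X)=\lambda(u,v;G)$ is then independent of $X$. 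Consequently $\MC_{k,\mathrm{edge}}=\MC(M,w,\gamma,k)$, which is transitive by Lemmas~\ref{lem:transitive} and \ref{lem:monotone}. For the vertex-connected case I would instead encode vertex removals by putting unit weight on vertices as well and choosing $\gamma$ so that a vertex outside $V(X)$ still contributes to a cut (again $\beta$ chosen to make $\mu(u,v;X)$ equal to the $X$-independent quantity $\kappa(u,v;G)$), so that $\MC_{k,\mathrm{vertex}}$ is likewise of the form $\MC(M,w,\gamma,k)$ and hence transitive; alternatively one can give a direct two-line transitivity argument: if $Z\subseteq X\cap Y$ with $Z,X,Y$ all $k$-vertex-connected, then for $u\in X$, $v\in Y$ any $u,v$-path can be routed through a common vertex of $Z$, and $k$ internally-disjoint paths survive, so $X\cup Y$ is $k$-vertex-connected.

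Next I would bound $|\Max(Y)|$. For the edge case, since $\lambda(\cdot,\cdot;G)$ does not depend on the chosen subset, being $k$-edge-connected is a property that depends only on which pairs of vertices of $Y$ are $k$-edge-connected in $G$; the relation ``$\lambda(u,v;G)\ge k$'' restricted to a maximal such set is transitive (this is essentially Lemma~\ref{lem:connectivity} specialized, or Menger), so the $k$-edge-connected subsets of $Y$ that are maximal form an antichain of equivalence-type classes, giving $|\Max(Y)|\le|Y|$. For the vertex case a maximal $k$-vertex-connected subset of $Y$ is determined once we fix $k$ of its vertices (any $k$-subset of $V(X)$ determines, via the ``grow the unique maximal $k$-connected superset'' operator $C_k(\cdot;Y)$, the whole component, since distinct maximal components cannot share $k$ vertices — two components sharing $\ge k$ vertices would have a $k$-connected union by transitivity), whence $|\Max(Y)|\le\binom{|Y|}{k}$.

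For the oracles I would invoke Lemma~\ref{lem:maximal} and Lemma~\ref{lem:core}, but optimized using the fact that the connectivity values are subset-independent: in a preprocessing step compute $\lambda(u,v;G)$ (resp.\ $\kappa(u,v;G)$) for all $\binom{n}{2}$ pairs, each in $O(\min\{k,n\}m)$ (resp.\ $O(\min\{k,n^{1/2}\}m)$) time by Menger/max-flow \cite{AMO89,AMO93}, storing the result in an $n\times n$ table; this is the claimed $O(n^2\min\{k,n\}m)$ (resp.\ $O(n^2\min\{k,n^{1/2}\}m)$) time and $O(n^2)$ space. After that, $C_k(X;Y)$ is obtained by Lemma~\ref{lem:maximal}: repeatedly delete from $Y$ every vertex not $k$-connected (by table lookup) to all of $X$, then test pairwise $k$-connectivity within the result — all in $O(n^2)$ time and space with the table in hand, giving $\mathrm{L}_1$. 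For $\mathrm{L}_2$ in the edge case, the at most $|Y|$ maximal components are recovered by taking, for each $v\in Y$, the set of $u\in Y$ $k$-connected to $v$ and keeping the maximal ones, in $O(n^2)$ total; in the vertex case I would enumerate the $\binom{|Y|}{k}$ candidate $k$-cores (the $k$-subsets of $V(Y)$, which here play the role of $k$-cores since $\omega_Z(V(Z))\ge k$ reduces to $|Z|\ge k$), apply $C_k(\cdot;Y)$ to each via Lemma~\ref{lem:core}, and deduplicate, for $O(|Y|^k n^2)$ time and $O(|Y|^k n)$ space.

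\textbf{Main obstacle.} The delicate point is getting the coefficient functions $\gamma$ exactly right so that the quantity $\mu(u,v;X)$ produced by the meta-weight machinery coincides with the \emph{$X$-independent} graph parameters $\lambda(u,v;G)$ and $\kappa(u,v;G)$ — in particular handling the vertex-capacitated cut formulation for $\kappa$ (which the $\varepsilon(C)=E(S,T)\cup(V(M)\setminus(S\cup T))$ definition is built for) and checking monotonicity of the chosen $\gamma$; once that identification is clean, transitivity is immediate from the earlier lemmas and the complexity bookkeeping is routine table-lookup accounting. I would also need to be slightly careful that the degenerate cases in the definitions (``$|X|=1$'' for edge-connectivity, ``$|X|\ge k$'' for vertex-connectivity) match the ``$|V(X)|=1$ or $\mu\ge k$'' convention of Section~\ref{sec:app.ext.mixed}, which is where the $\binom{|Y|}{k}$ versus $|Y|$ discrepancy between the two parts originates.
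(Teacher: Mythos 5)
Your overall strategy is exactly the paper's: instantiate the mixed-graph/meta-weight machinery of Section~\ref{sec:app.ext.mixed} with a suitable $(M,w,\gamma)$, identify $\MC_{k,\mathrm{edge}}$ and $\MC_{k,\mathrm{vertex}}$ with families of the form $\MC(M,w,\gamma,k,\Lambda)$, precompute an $n\times n$ table of $\lambda(u,v;G)$ (resp.\ $\kappa(u,v;G)$) values (the paper packages this as an ``auxiliary graph'' $G^*_{k,\mathrm{edge}}$ or $G^*_{k,\mathrm{vertex}}$, but it is the same object), bound $|\Max(Y)|$ via the $k$-core count of Lemma~\ref{lem:core} (trivially $\le|Y|$ in the edge case because $\lambda\ge k$ is an equivalence relation; $\le\binom{|Y|}{k}$ in the vertex case because two maximal components sharing $k$ vertices would merge by Lemma~\ref{lem:connectivity}), and realize L$_1$, L$_2$ by table lookup inside Lemmas~\ref{lem:maximal} and~\ref{lem:core}.

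There is, however, a concrete error in your instantiation for the edge case, precisely at the spot you flagged as delicate. You take $w\equiv 0$ on vertices and $\beta\equiv 0$, reasoning that ``vertices are free'' in edge-connectivity. This breaks the construction in two ways. First, membership in $\MC(M,w,\gamma,k,\Lambda)$ requires $\omega_X(V(X))\ge k$, and with $w(v)=0$ you get $\omega_X(V(X))=0$, so for every $k\ge1$ the family $\MC(M,w,\gamma,k,\Lambda)$ is empty and the claimed identity $\MC_{k,\mathrm{edge}}=\MC(M,w,\gamma,k,\Lambda)$ fails. Second, recall $\varepsilon(C)=E(S,T)\cup R$ with $R=V(M)\setminus(S\cup T)$ also contributing vertex weights; if vertices are free, a cut can ``route around'' edges by dumping intermediate vertices into $R$ at zero cost, so $\mu(u,v;X)$ strictly underestimates $\lambda(u,v;G)$ (e.g.\ on a path $u$--$a$--$v$ the cut $S=\{u\},T=\{v\},R=\{a\}$ has $\omega_X(\varepsilon(C))=0$). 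The correct choice, which the paper uses, is the opposite: $\beta\equiv1$ everywhere and $w(v):=k$ on every vertex. Then $\omega_X(V(X))=k|X|\ge k$ holds automatically for any nonempty $X$, and any cut with $R\ne\emptyset$ costs at least $k$, which forces the threshold test $\mu(u,v;X)\ge k$ to coincide with $\lambda(u,v;G)\ge k$ (minimum cuts below $k$ must be pure edge cuts). With this correction your plan goes through exactly as the paper does it; the vertex case you already have right (unit weights on vertices, $\beta\equiv1$, $\alpha$-coefficients all $1$), and the remainder of your complexity bookkeeping matches the paper's.
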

\begin{proof}  
Let
  $(M,w,\gamma,k,\Lambda)$  be  a  system such that 
 a mixed graph $M:=G$,     $\Lambda:=V(G)$, and 
 a weight function $w$ and
  a coefficient function 
  $\gamma=(\alpha,\overline{\alpha},\alpha^+,\alpha^-,\beta)$
  such that 
  $\alpha(e):=\overline{\alpha}(e):=\alpha^+(e):=\alpha^-(e):=1$
  for each edge $e\in E(G)$,    and 
  $\beta(a):=1$ for each element $a\in V(G)\cup E(G)$,
  where we see that   $\gamma$ is monotone and 
  the family $\MC(M,w,\gamma,k,\Lambda)$ is transitive
  by Lemmas~\ref{lem:transitive} and \ref{lem:monotone}. 
  
  (i)  We set weight $w$ so that $w(e):=1$  for each edge $e\in E(G)$
   and $w(v):=k$  for each vertex $v\in V(G)$.
We claim that 
 $\MC_{k,\mathrm{edge}}$ is equal to $\MC(M,w,\gamma,k,\Lambda)$,
 where the latter is the family of non-empty subsets  
  $X\subseteq \Lambda$ with 
 $\omega_X(V(X))\geq k$ such that
  $|V(X)|=1$ or $\mu(u,v;X)\geq k$ for each pair of vertices $u,v\in V(X)$.  
 Note that $\omega_X(V(X))\geq w(V(X))=k|X|$ for any non-empty set
 $X\subseteq \Lambda$.
 Then every set $X\subseteq V(G)$ with $|X|=1$ belongs
 to both 
 $\MC_{k,\mathrm{edge}}$ and $\MC(M,w,\gamma,k,\Lambda)$.
Let $X$ be a subset of $V(G)$ with  $|V(X)|=|X|\geq 2$. 
By definition of coefficient function $\gamma$ and weight $w$ in $G$,
we see that 
 $\mu(u,v;X)=\lambda(u,v;G)$ holds 
for any two vertices $u,v\in V(X)$.  
 This means that 
 $\MC_{k,\mathrm{edge}}=\MC(M,w,\gamma,k,\Lambda)$,
 proving the claim. 
 
  We define the  {\em auxiliary graph} 
 $G_{k,\mathrm{edge}}^*$    to be an undirected graph  
 such that\\
~~ $V(G_{k,\mathrm{edge}}^*)=V(G)$,\\
~~ $E(G_{k,\mathrm{edge}}^*)
=\{uv\mid u,v\in V(G)\mbox{ such that $\lambda(u,v;G)\geq k$ 
and $\lambda(v,u;G)\geq k$} \}$.\\
 We can construct $G_{k,\mathrm{edge}}^*$     in $O(n^2\min\{k,n\}m)$
  time and $O(n^2)$ space. 
Observe that a non-empty  subset
  $X\subseteq V(G)$ belongs to $\MC_{k,\mathrm{edge}}$ 
 if and only if $w(X)\geq k$ and $X$ forms a clique in $G_{k,\mathrm{edge}}^*$.
 For edge-connectivity, we easily see that 
 $\lambda(x,y;G), \lambda(y,x;G),$ $\lambda(y,z;G), \lambda(z,y;G)\geq k$ 
 imply $\lambda(x,z;G), \lambda(z,x;G)\geq k$.
 Hence $G_{k,\mathrm{edge}}^*$ is a disjoint union of cliques,
 and for $\MC=\MC_{k,\mathrm{edge}}$,  the family $\Max(Y)$  
  is also a disjoint union of cliques in the induced subgraph 
  $G_{k,\mathrm{edge}}^*[Y]$.
  This means that  
  $|\Max(Y)|\leq |Y|$ holds and $\Max(Y)$
  is
 found in $O(n^2)$ time as the set of connected components in
   $G_{k,\mathrm{edge}}^*$.
 For $\MC=\MC_{k,\mathrm{edge}}$, 
 $\mathrm{L}_1(X;Y)$ and $\mathrm{L}_2(Y)$ run 
   in   $O(n^2)$  time and space
   after an $O(n^2\min\{k,n\}m)$-time and $O(n^2)$-space preprocessing.
 
 (ii)  
We set  weight $w$ so that $w(e):=1$ for each edge $e\in E(G)$
 and $w(v):=1$ for each vertex $v\in V(G)$.  
We claim that   $\MC_{k,\mathrm{vertex}}$ is equal to  
$\MC(M,w,\gamma,k,\Lambda)$. 
Note that $\omega_X(V(X))= w(V(X))=|X|$
for any non-empty set  $X\subseteq \Lambda$.
Let $X$ be a subset of $V(G)$ with  $|V(X)|=|X|<k$. 
Then $X$ is not $k$-vertex-connected in $G$ and
$X$ is not $k$-connected in the system $(M,w,\gamma,k,\Lambda)$. 
Let $X$ be a subset of $V(G)$ with  $|V(X)|=|X|\geq k$. 
By definition of coefficient function $\gamma$ and weight $w$ in $G$,
we see that 
 $\mu(u,v;X)=\kappa(u,v;G)$ holds 
for any two vertices $u,v\in V(X)$.  
 This means that 
 $\MC_{k,\mathrm{vertex}}=\MC(M,w,\gamma,k,\Lambda)$,
 proving the claim. 

 We define the  {\em auxiliary graph} 
  $G_{k,\mathrm{vertex}}^*$  to be an undirected graph 
 such that\\
~~ $V(G_{k,\mathrm{vertex}}^*)=V(G)$,\\
~~ $E(G_{k,\mathrm{vertex}}^*)
=\{uv\mid u,v\in V(G)\mbox{ such that $\kappa(u,v;G)\geq k$
 and $\kappa(v,u;G)\geq k$} \}$.\\
We can construct $G_{k,\mathrm{vertex}}^*$ in 
 $O(n^2\min\{k,n^{1/2}\}m)$ time and $O(n^2)$ space. 
Observe that a non-empty  subset  $X\subseteq V(G)$ 
belongs to $\MC_{k,\mathrm{vertex}}$ 
 if and only if $w(X)\geq k$ and $X$ forms a clique in $G_{k,\mathrm{vertex}}^*$.
 
Let $\MC=\MC_{k,\mathrm{vertex}}$.
For   subsets $X\subseteq Y\subseteq V(G)$ such that $|X|\geq k$,
 a   maximal set $Z \in\Max(Y)$ with $X\subseteq Z$
 is the unique set $C_k(X;Y)$ by Lemma~\ref{lem:maximal}.
 Hence  $C_k(X;Y)$ can be found 
 in $O(n^2)$ time and  space  by constructing
  the unique maximal clique containing $X$
 in the induced subgraph $G_{k,\mathrm{vertex}}^*[Y]$. 
 Let $\mathcal{K}$ be the family of $k$-cores; i.e., 
 subsets of exactly $k$ vertices in $Y$,
 which can be constructed in $O(|Y|^k)$ time. 
 By Lemma~\ref{lem:core}, $|\Max(Y)|\leq |\mathcal{K}|={{|Y|}\choose{k}}$ holds,
 and  we can construct  $\Max(Y)$
 by computing $C_k(Z;Y)$ for all sets $Z\in \mathcal{K}$, taking 
  $O(|Y|^{k}n^2)$ time and $O(|Y|^{k}n)$ space.
\end{proof}

Using \thmref{main} and \lemref{construct_oracle_entire}, 
we have the following theorem
on the time delay and the space complexity
of enumeration of connectors that
are $k$-edge-connected or $k$-vertex-connected.

\begin{thm}
  \label{thm:entire}
  Let $(G,I,\sigma)$ be an instance on a mixed graph $G$ 
  and $k\ge0$ be an integer,
  where   $n=|V(G)|$, $m=|E(G)|$, and $q=|I|$. 
\begin{enumerate}
\item[{\rm (i)}] 
All  $k$-edge-connected connectors can be enumerated 
  in $O(q^2n^3)$  delay and $O(qn+n^3)$ space,
  after an $O(n^2\min\{k,n\}m)$-time and $O(n^2)$-space preprocessing.
\item[{\rm (ii)}] 
All  $k$-vertex-connected connectors can be enumerated 
  in $O(q^2n^{k+2})$ delay and  $O(qn+n^{k+2})$ space,
  after an $O(n^2\min\{k,n^{1/2}\}m)$-time
   and $O(n^2)$-space preprocessing.
\end{enumerate}
\end{thm}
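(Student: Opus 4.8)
The plan is to deduce both parts by instantiating \thmref{main} with the two transitive systems supplied by \lemref{construct_oracle_entire}. First I would record the reduction: a solution of the instance $(V(G),\MC_{k,\mathrm{edge}},I,\sigma)$ (resp.\ $(V(G),\MC_{k,\mathrm{vertex}},I,\sigma)$) is, by definition of ``solution,'' a $k$-edge-connected (resp.\ $k$-vertex-connected) vertex set that is inclusion-wise maximal with respect to its common item set, which is exactly what is meant here by a $k$-edge-connected (resp.\ $k$-vertex-connected) connector; and \lemref{construct_oracle_entire} guarantees that $\MC_{k,\mathrm{edge}}$ and $\MC_{k,\mathrm{vertex}}$ are transitive. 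Since we want \emph{all} such connectors, I would take the trivial volume function $\rho\equiv 1$, so that every solution is $\rho$-positive and $\theta_{\rho,\mathrm{t}}=\theta_{\rho,\mathrm{s}}=O(1)$; then ``all $\rho$-positive solutions'' in \thmref{main} coincides with the full output we want.

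For part~(i), \lemref{construct_oracle_entire}(i) gives $\theta_{1,\mathrm{t}}=\theta_{2,\mathrm{t}}=O(n^2)$, $\theta_{1,\mathrm{s}}=\theta_{2,\mathrm{s}}=O(n^2)$, and $|\Max(Y)|\le|Y|\le n$, so we may take the non-decreasing bound $\delta(V)=O(n)$. Plugging these into the delay expression $O\big(q\theta_{2,\mathrm{t}}+(q(n+\theta_{1,\mathrm{t}})+\theta_{\rho,\mathrm{t}})q\delta(V)\big)$ of \thmref{main} gives $O\big(qn^2+(q(n+n^2)+1)qn\big)=O(q^2n^3)$, and plugging them into the space expression $O\big((q+n+\theta_{1,\mathrm{s}}+\theta_{2,\mathrm{s}}+\theta_{\rho,\mathrm{s}})n\big)$ gives $O\big((q+n^2)n\big)=O(qn+n^3)$. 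The $O(n^2\min\{k,n\}m)$-time, $O(n^2)$-space construction of the auxiliary graph $G_{k,\mathrm{edge}}^*$ in \lemref{construct_oracle_entire}(i) is performed once and reused by every oracle invocation, which accounts for the stated preprocessing cost.

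For part~(ii), \lemref{construct_oracle_entire}(ii) gives $\theta_{1,\mathrm{t}}=O(n^2)$, $\theta_{1,\mathrm{s}}=O(n^2)$, $\theta_{2,\mathrm{t}}=O(|Y|^{k}n^2)=O(n^{k+2})$, $\theta_{2,\mathrm{s}}=O(|Y|^{k}n)=O(n^{k+1})$, and $|\Max(Y)|\le\binom{|Y|}{k}$, so we may take the non-decreasing bound $\delta(V)=O(n^{k})$. The delay expression of \thmref{main} then becomes $O\big(qn^{k+2}+(q(n+n^2)+1)qn^{k}\big)=O(q^2n^{k+2})$, and the space expression becomes $O\big((q+n+n^2+n^{k+1}+1)n\big)=O(qn+n^{k+2})$ (the dominant oracle space being $\theta_{2,\mathrm{s}}=O(n^{k+1})$). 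The $O(n^2\min\{k,n^{1/2}\}m)$-time, $O(n^2)$-space construction of $G_{k,\mathrm{vertex}}^*$ is again run once up front, giving the claimed preprocessing bound.

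Since \lemref{construct_oracle_entire} already carries out all the substantive work --- designing and analyzing oracles $\mathrm{L}_1$ and $\mathrm{L}_2$ from the auxiliary graphs, and bounding $|\Max(Y)|$ via the $k$-core machinery --- no genuine obstacle remains. The only points that need a little care are (a) making sure the chosen $\delta$ is the correct upper bound on $|\Max(Y)|$ in each case, in particular that in the vertex case it matches the $\binom{|Y|}{k}$ many $k$-element cores of $Y$ that the $\mathrm{L}_2$ oracle internally enumerates, and (b) the routine simplification of the compound delay and space bounds of \thmref{main} under the substituted parameters and $\theta_{\rho,\mathrm{t}}=\theta_{\rho,\mathrm{s}}=O(1)$.
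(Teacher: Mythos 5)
Your proposal is correct and mirrors the paper's own proof: both obtain the complexity bounds by substituting the oracle running times, oracle space bounds, and the bound $\delta$ on $|\Max(\cdot)|$ supplied by Lemma~\ref{lem:construct_oracle_entire} into the delay and space expressions of Theorem~\ref{thm:main}, with $\theta_{\rho,\mathrm{t}}=\theta_{\rho,\mathrm{s}}=O(1)$ since no nontrivial volume constraint is imposed, and attributing the preprocessing cost to the one-time construction of the auxiliary graphs $G^*_{k,\mathrm{edge}}$ and $G^*_{k,\mathrm{vertex}}$. The only cosmetic difference is that you spell out the identification of solutions of the instance $(V(G),\MC_{k,\mathrm{edge}},I,\sigma)$ (resp.\ $(V(G),\MC_{k,\mathrm{vertex}},I,\sigma)$) with $k$-edge- (resp.\ $k$-vertex-)connected connectors and use the constant volume function $\rho\equiv 1$, whereas the paper simply remarks that the volume function plays no role here.
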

\begin{proof}
  Recall that, for $Y\subseteq V$,
  $\delta(Y)$ denotes an upper bound on $|\MC_{\max}(Y)|$.
  In both (i) and (ii), $\theta_{\rho,\mathrm{t}}$ and
   $\theta_{\rho, \mathrm{s}}$ can be regarded 
   as $O(1)$ since the volume function is not used 
   anywhere in this context. 
  
  (i) By \lemref{construct_oracle_entire}(i), 
  we have $\theta_{1,\mathrm{t}}=\theta_{2,\mathrm{t}}=O(n^2)$
  and $\theta_{1,\mathrm{s}}=\theta_{2,\mathrm{s}}=O(n^2)$,
  and we can set $\delta(Y)=n$ for any $Y\subseteq V$.
  By \thmref{main}, we have the time delay 
  $O\big(q\theta_{2,\mathrm{t}} 
  + (q(n+\theta_{1,\mathrm{t}})+\theta_{\rho,\mathrm{t}})q 
  \delta(V)\big)
  =O(q^2n^3)$
  and the space complexity 
  $O\big((q+n+\theta_{1,\mathrm{s}}+\theta_{2,\mathrm{s}}
  + \theta_{\rho,\mathrm{s}}) n\big)
  =O(qn+n^3)$. 
  
  (ii) By \lemref{construct_oracle_entire}(ii), 
  we have $\theta_{1,\mathrm{t}}=O(n^2)$, 
  $\theta_{2,\mathrm{t}}=O(n^{k+2})$,
  $\theta_{1,\mathrm{s}}=O(n^2)$, and 
  $\theta_{2,\mathrm{s}}=O(n^{k+1})$,
  and we can set $\delta(Y)=n^k$ for any $Y\subseteq V$.
  By \thmref{main}, we have the time delay 
  $O\big(q\theta_{2,\mathrm{t}} 
  +(q(n+\theta_{1,\mathrm{t}})+\theta_{\rho,\mathrm{t}})q
  \delta(V)\big)
  =O(q^2n^{k+2})$
  and the space complexity 
  $O\big((q+n+\theta_{1,\mathrm{s}}+\theta_{2,\mathrm{s}}
  + \theta_{\rho,\mathrm{s}}) n\big)
  =O(qn+n^{k+2})$. 

  For preprocessing, 
  the time and space complexities are immediate from \lemref{construct_oracle_entire}
  both for (i) and (ii).
\end{proof}

\subsection{Highly-connected Induced Subgraphs}

Given a mixed graph $G$, we define a ``$k$-connected set'' $X$ based
on the connectivity of the induced graph $G[X]$. 
Define  $\MC_{k,\mathrm{edge}}^{\mathrm{in}}$ to be the family
(resp.,  $\MC_{k,\mathrm{vertex}}^{\mathrm{in}}$) 
of subsets $X\in V(G)$ such that 
the induced graph $G[X]$ is $k$-edge-connected (resp., $k$-vertex-connected).

\begin{lem}   \label{lem:construct_oracle_inner}
Let $G$ be a mixed graph   and $k\geq 0$ be an integer,
  where $n=|V(G)|$ and $m=|E(G)|$. 
\begin{enumerate}
\item[{\rm (i)}] 
The family $\MC=\MC_{k,\mathrm{edge}}^{\mathrm{in}}$ is transitive.  
For any non-empty subsets $X\subseteq Y\subseteq V(G)$,
it holds $|\Max(Y)|\leq |Y|$,   oracle 
$\mathrm{L}_1(X;Y)$ runs  in  $O(|Y|^2\min\{k\!+\!1,n\}m)$ 
time and $O(n^2)$ space, 
and $\mathrm{L}_2(Y)$ runs in $O(|Y|^3\min\{k\!+\!1,n\}m)$  
time and $O(n^2)$ space. 
\item[{\rm (ii)}] 
The family $\MC=\MC_{k,\mathrm{vertex}}^{\mathrm{in}}$ is transitive.  
For any non-empty subsets $X\subseteq Y\subseteq V(G)$,
it holds $|\Max(Y)|\leq {{|Y|}\choose{k}}$,   
oracle  $\mathrm{L}_1(X;Y)$ runs in  $O(|Y|^2\min\{k\!+\!1,n^{1/2}\}m)$ 
time and $O(n^2)$ space, and 
oracle $\mathrm{L}_2(Y)$ runs in 
$O(|Y|^{k+2}\min\{k\!+\!1,n^{1/2}\}m)$
time and $O(|Y|^{k}n)$ space.
\end{enumerate}
\end{lem}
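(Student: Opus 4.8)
The plan is to derive both parts from the construction of \secref{app.ext.const}: I will instantiate a monotone system $(M,w,\gamma,k,\Lambda)$ whose family $\MC(M,w,\gamma,k,\Lambda)$ is \emph{literally} $\MC_{k,\mathrm{edge}}^{\mathrm{in}}$, respectively $\MC_{k,\mathrm{vertex}}^{\mathrm{in}}$, so that transitivity is immediate from \lemref{transitive} and \lemref{monotone}, and then feed the resulting parameters into \lemref{maximal} and \lemref{core}.

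First I would take $M:=G$, $\Lambda:=V(G)$, and the coefficient function $\gamma=(\alpha,\overline{\alpha},\alpha^+,\alpha^-,\beta)$ with $\alpha(e):=1$ and $\overline{\alpha}(e):=\alpha^+(e):=\alpha^-(e):=\beta(e):=0$ for each edge $e\in E(G)$, and $\beta(v):=0$ for each vertex $v\in V(G)$; this $\gamma$ is monotone since $1\ge1\ge0\ge0$. For part~(i) I set $w(e):=1$ for $e\in E(G)$ and $w(v):=k$ for $v\in V(G)$; for part~(ii) I set $w(e):=w(v):=1$. The point of these choices is that, for a vertex set $X\subseteq V(G)$, the induced weight $\omega_X$ gives weight $1$ to exactly the edges of $G[X]$ and $0$ to all other edges, and gives weight $w(v)$ to each $v\in V(X)$ and $0$ to every other vertex; hence for any $s,t$-cut $C=(S,T)$ with middle set $R$, the cost $\omega_X(\varepsilon(C))$ equals the number of edges of $G[X]$ between $S\cap X$ and $T\cap X$ plus $\sum_{v\in R\cap X}w(v)$. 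In case~(i), where $w(v)=k$, placing a vertex of $X$ in $R$ already costs $k$, so every cut of cost $<k$ has $R\cap X=\emptyset$, and then its cost is the size of an edge cut of $G[X]$ separating $u$ from $v$; since conversely a minimum edge cut of $G[X]$ yields such a cut with $R=\emptyset$, we get $\mu(u,v;X)\ge k$ if and only if $\lambda(u,v;G[X])\ge k$. In case~(ii), where $w(v)=1$, the cost $\omega_X(\varepsilon(C))=|E_{G[X]}(S\cap X,T\cap X)|+|R\cap X|$ is exactly the cost of a mixed vertex/edge cut of $G[X]$ separating $u$ from $v$, so $\mu(u,v;X)=\kappa(u,v;G[X])$. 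Since $\omega_X(V(X))=k|X|\ge k$ holds always in case~(i) and $\omega_X(V(X))=|X|$ in case~(ii), comparing with the definition of $\MC(M,w,\gamma,k,\Lambda)$ and treating the boundary conventions ($|V(X)|=1$, and $|X|\le k$ in the vertex case, where $G[X]$ cannot be $k$-vertex-connected) directly, we obtain $\MC(M,w,\gamma,k,\Lambda)=\MC_{k,\mathrm{edge}}^{\mathrm{in}}$, respectively $=\MC_{k,\mathrm{vertex}}^{\mathrm{in}}$; transitivity then follows from \lemref{transitive} and \lemref{monotone}.

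Next I would determine $\mathrm{tm}(n,m,k)$ for this system. Deciding $\mu(u,v;X)<k$ reduces, in case~(i), to deciding $\lambda(u,v;G[X])<k$ --- a unit-edge-capacity max-flow problem, since any cut that uses a vertex of $X$ costs at least $k$ --- and, in case~(ii), to deciding $\kappa(u,v;G[X])<k$, which after the standard vertex splitting is again a unit-capacity max-flow problem. These are decided in $O(\min\{k+1,n\}m)$ time, respectively $O(\min\{k+1,n^{1/2}\}m)$ time, and $O(n+m)$ space \cite{AMO89,AMO93}, so $\mathrm{tm}(n,m,k)=O(\min\{k+1,n\}m)$ in case~(i), $O(\min\{k+1,n^{1/2}\}m)$ in case~(ii), and $\mathrm{sp}(n,m,k)=O(n+m)$ in both. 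I would then identify the $k$-cores of a set $Y\subseteq\Lambda$: in case~(i) a subset $Z\subseteq Y$ satisfies $\omega_Z(V(Z))=k|Z|\ge k$ iff $|Z|\ge1$, so the $k$-cores are exactly the singletons and $|\mathcal{K}|=|Y|$; in case~(ii), $\omega_Z(V(Z))=|Z|\ge k$ iff $|Z|\ge k$, so the $k$-cores are exactly the $k$-element subsets of $Y$ and $|\mathcal{K}|=\binom{|Y|}{k}$. Substituting these facts into \lemref{core} gives $|\Max(Y)|\le|\mathcal{K}|$, i.e.\ $|\Max(Y)|\le|Y|$, respectively $|\Max(Y)|\le\binom{|Y|}{k}$; substituting $\mathrm{tm}$ and $\mathrm{sp}$ into \lemref{maximal}(iv) gives the claimed bounds for $\mathrm{L}_1(X;Y)=C_k(X;Y)$; and substituting $|\mathcal{K}|$, $\mathrm{tm}$, $\mathrm{sp}$ into \lemref{core}, with $\mathcal{K}$ itself enumerable in $O(|Y|)$, respectively $O(|Y|^k k)$, time, gives the claimed bounds for $\mathrm{L}_2(Y)$, after using $|Y|\le n$ and $m\le n^2$.

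The main obstacle is the identification step of the second paragraph: choosing $w$ and $\gamma$ so that $\omega_X$ exactly erases everything outside $G[X]$ while keeping $\omega_X(V(X))\ge k$ automatic in case~(i), and then proving carefully that the resulting $\mu(u,v;X)$ agrees, in the ``$\ge k$'' sense, with the edge-connectivity of $G[X]$, and equals the mixed vertex/edge connectivity of $G[X]$ in case~(ii) --- with due attention to the boundary conventions so that the constructed family is exactly the intended one, not merely a close relative. Once that is secured, the remaining steps are mechanical substitutions into the already-proved Lemmas~\ref{lem:transitive}, \ref{lem:monotone}, \ref{lem:maximal} and \ref{lem:core}.
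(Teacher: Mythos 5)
Your proposal is correct and follows essentially the same route as the paper: you instantiate the general construction of Section~\ref{sec:app.ext.const} with $M:=G$, $\Lambda:=V(G)$, the coefficient function $\alpha\equiv1$, $\overline{\alpha}\equiv\alpha^+\equiv\alpha^-\equiv\beta\equiv0$, and weights $w(e):=1$, $w(v):=k$ (resp.\ $w(v):=1$), verify that $\mu(u,v;X)$ agrees with $\lambda(u,v;G[X])$ (resp.\ $\kappa(u,v;G[X])$) at the threshold $k$, identify the $k$-cores as singletons (resp.\ $k$-subsets), and then plug into Lemmas~\ref{lem:transitive}, \ref{lem:monotone}, \ref{lem:maximal}(iv) and \ref{lem:core}. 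If anything, your threshold-only comparison ($\mu(u,v;X)\ge k\iff\lambda(u,v;G[X])\ge k$ rather than literal equality) is slightly more careful than the paper's wording in case~(i), though both yield the same conclusion.
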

\begin{proof} 
Let  $(M,w,\gamma,k,\Lambda)$  be  a  system such that 
 a mixed graph $M:=G$, $\Lambda:=V(G)$, a weight function $w$ and
  a coefficient function
   $\gamma=(\alpha,\overline{\alpha},\alpha^+,\alpha^-,\beta)$
  such that 
  $\alpha(e):=1$ and $\overline{\alpha}(e):=\alpha^+(e):=\alpha^-(e):=0$
  for each edge $e\in E(G)$, 
  and 
  $\beta(a):=0$ for each element $a\in V(G)\cup E(G)$,
  where we see that   $\gamma$ is monotone and 
  the family $\MC(M,w,\gamma,k,\Lambda)$ is transitive
  by Lemmas~\ref{lem:transitive} and \ref{lem:monotone}. 
 
 (i) We set  weight $w$ so that $w(e):=1$ for each edge $e\in E(G)$ 
 and $w(v):=k$ for each vertex $v\in V(G)$. 
We claim that 
   $\MC_{k,\mathrm{edge}}^{\mathrm{in}}$ is 
equal to  $\MC(M,w,\gamma,k,\Lambda)$,
 where the latter is the family of non-empty subsets  
  $X\subseteq \Lambda$ with 
 $\omega_X(V(X))\geq k$ such that
  $|V(X)|=1$ or $\mu(u,v;X)\geq k$ for each pair of vertices $u,v\in V(X)$.  
 Note that $\omega_X(V(X))\geq w(V(X))=k|X|$ for any non-empty set
 $X\subseteq \Lambda$.
 Then every set $X\subseteq V(G)$ with $|X|=1$ belongs
 to both 
 $\MC_{k,\mathrm{edge}}^{\mathrm{in}}$ and $\MC(M,w,\gamma,k,\Lambda)$.
Let $X$ be a subset of $V(G)$ with  $|V(X)|=|X|\geq 2$. 
By definition of coefficient function $\gamma$ and weight $w$ in $G$,
we see that 
 $\mu(u,v;X)=\lambda(u,v;G[X])$ holds 
for any two vertices $u,v\in V(X)$.  
Hence $G[X]$ is  a $k$-edge-connected graph if and only if 
$\mu(u,v;X)=\lambda(u,v;G[X])\geq k$ 
for any two vertices $u,v\in V(X)$. 
 This means that 
 $\MC_{k,\mathrm{edge}}^{\mathrm{in}}=\MC(M,w,\gamma,k,\Lambda)$,
 proving the claim. 

Whether $\mu(s,t;X)\geq k$ 
 (i.e.,   $\lambda(s,t;G[X]),\lambda(t,s;G[X])\geq k$) or not 
 for a subset $X\subseteq V(G)$
 can be tested in $O(\min\{k,n\}m)$  time \cite{AMO89,AMO93}.
By Lemma~\ref{lem:maximal}(iv),   
 $\mathrm{L}_1(X;Y)$ runs  in  $O(|Y|^2\min\{k\!+\!1,n\}m)$ 
  time and $O(n^2)$ space. 
 The family $\mathcal{K}$ of $k$-cores  $Z\subseteq Y$ is
  $\{\{v\}\mid v\in Y\}$. 
By Lemma~\ref{lem:core},
$|\Max(Y)|\leq |\mathcal{K}|\leq |Y|$ 
and  $\mathrm{L}_2(Y)$ runs in  $O(|Y|^3\min\{k\!+\!1,n\}m)$ time 
and $O(n^2)$ space.

(ii)   We set  weight $w$ so that $w(e):=1$  
for each edge $e\in E(G)$ and
 $w(v):=1$  for each vertex $v\in V(G)$. 
We claim that   $\MC_{k,\mathrm{vertex}}^{\mathrm{in}}$ is equal to  
$\MC(M,w,\gamma,k,\Lambda)$. 
Note that $\omega_X(V(X))= w(V(X))=|X|$
 for any non-empty set  $X\subseteq \Lambda$.
By definition of coefficient function $\gamma$ and weight $w$ in $G$,
we see that 
 $\mu(u,v;X)=\kappa(u,v;G[X])$ holds 
for any two vertices $u,v\in X$.  
In particular, if $|X|\leq k$ then 
$\mu(u,v;X)=\kappa(u,v;G[X])<k$. 
Let $X$ be a subset of $V(G)$ with  $|V(X)|=|X|\leq k$. 
Then $G[X]$ is not a $k$-vertex-connected graph and
$X$ is not $k$-connected in the system $(M,w,\gamma,k,\Lambda)$. 
Let $X$ be a subset of $V(G)$ with  $|V(X)|=|X|\geq k+1$. 
Then $G[X]$ is a $k$-vertex-connected graph 
if and only if 
$\mu(u,v;X)=\kappa(u,v;G[X])\geq k$ 
for any two vertices $u,v\in V(X)$. 
 This means that 
 $\MC_{k,\mathrm{vertex}}^{\mathrm{in}}=\MC(M,w,\gamma,k,\Lambda)$,
 proving the claim. 

 Whether $\mu(s,t;X)\geq k$  
  (i.e.,   $\kappa(s,t;G[X]),\kappa(t,s;G[X])\geq k$) or not 
 for a subset $X\subseteq V(G)$
  can be tested in $O(\min\{k,n^{1/2}\}m)$  time and $O(n+m)$ space \cite{AMO89,AMO93}.
By Lemma~\ref{lem:maximal}(iv),   
 $\mathrm{L}_1(X;Y)$ runs in  $O(|Y|^2\min\{k\!+\!1,n^{1/2}\}m)$  time
 and $O(n^2)$ space. 
 The family $\mathcal{K}$ of $k$-cores $Z\subseteq Y$ is ${{Y}\choose{k}}$. 
By Lemma~\ref{lem:core}, 
$|\Max(Y)|\leq |\mathcal{K}|\leq  {{|Y|}\choose{k}}$ 
and   $\mathrm{L}_2(Y)$ runs  in 
 $O(|Y|^{k+2}\min\{k\!+\!1,n^{1/2}\}m)$ time and $O(|Y|^{k}n)$ space.
\end{proof} 

Again, using \thmref{main} and \lemref{construct_oracle_inner}, 
we have the following theorem
on the time delay and the space complexity
of enumeration of connectors such that
the induced subgraphs are $k$-edge-connected or $k$-vertex-connected.

\begin{thm}  \label{thm:inner}
  Let $(G,I,\sigma)$ be an instance on a mixed graph $G$ and $k\ge0$ be an integer,
  where   $n=|V|$, $m=|E|$, and $q=|I|$. 
\begin{enumerate}
\item[{\rm (i)}] 
 All connectors that induce   $k$-edge-connected subgraphs can be enumerated 
  in $O(\min\{k\!+\!1,n\}q^2n^3m)$  delay and $O(qn+n^3)$ space. 
\item[{\rm (ii)}] 
  All connectors that induce  $k$-vertex-connected subgraphs can be enumerated  
  in $O(\min\{k\!+\!1,n^{1/2}\}q^2n^{k+2}m)$  delay and 
   $O(qn+n^{k+2})$ space. 
\end{enumerate}
\end{thm}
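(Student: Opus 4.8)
The plan is to derive both parts directly from \thmref{main} by plugging in the oracle complexities established in \lemref{construct_oracle_inner}, exactly mirroring the proof of \thmref{entire}. First I would recall that connectors that induce $k$-edge-connected (resp.\ $k$-vertex-connected) subgraphs are precisely the solutions of the instance $(V(G),\MC,I,\sigma)$ where $\MC=\MC_{k,\mathrm{edge}}^{\mathrm{in}}$ (resp.\ $\MC=\MC_{k,\mathrm{vertex}}^{\mathrm{in}}$); this follows from the definitions in \secref{app.conn} combined with the transitivity of these families asserted in \lemref{construct_oracle_inner}. Since no volume function is needed here, I would set $\theta_{\rho,\mathrm{t}}=\theta_{\rho,\mathrm{s}}=O(1)$ as in \thmref{entire}.

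For part (i), I would substitute the bounds from \lemref{construct_oracle_inner}(i): $\theta_{1,\mathrm{t}}=O(|Y|^2\min\{k{+}1,n\}m)$, $\theta_{2,\mathrm{t}}=O(|Y|^3\min\{k{+}1,n\}m)$, $\theta_{1,\mathrm{s}}=\theta_{2,\mathrm{s}}=O(n^2)$, and $\delta(Y)\le|Y|$, so that $\delta(V)=O(n)$ and all $|Y|$-factors are $O(n)$. Feeding these into the delay bound $O\big(q\theta_{2,\mathrm{t}}+(q(n+\theta_{1,\mathrm{t}})+\theta_{\rho,\mathrm{t}})q\delta(V)\big)$ of \thmref{main}, the dominant term is $q^2 n \cdot \theta_{1,\mathrm{t}} = O(\min\{k{+}1,n\}q^2n^3m)$, and the $q\theta_{2,\mathrm{t}}=O(\min\{k{+}1,n\}qn^3m)$ term is absorbed; the space bound $O\big((q+n+\theta_{1,\mathrm{s}}+\theta_{2,\mathrm{s}}+\theta_{\rho,\mathrm{s}})n\big)$ becomes $O((q+n+n^2)n)=O(qn+n^3)$.

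For part (ii), I would instead use \lemref{construct_oracle_inner}(ii): $\theta_{1,\mathrm{t}}=O(|Y|^2\min\{k{+}1,n^{1/2}\}m)$, $\theta_{2,\mathrm{t}}=O(|Y|^{k+2}\min\{k{+}1,n^{1/2}\}m)$, $\theta_{1,\mathrm{s}}=O(n^2)$, $\theta_{2,\mathrm{s}}=O(|Y|^k n)$, and $\delta(Y)\le\binom{|Y|}{k}=O(|Y|^k)$, so $\delta(V)=O(n^k)$ and $\theta_{2,\mathrm{s}}=O(n^{k+1})$. In the delay bound, the term $q\theta_{2,\mathrm{t}}=O(\min\{k{+}1,n^{1/2}\}qn^{k+2}m)$ and the term $q^2\delta(V)\theta_{1,\mathrm{t}}=O(\min\{k{+}1,n^{1/2}\}q^2n^{k+2}m)$ are of the same order up to the extra $q$ factor, giving $O(\min\{k{+}1,n^{1/2}\}q^2n^{k+2}m)$ overall; the space becomes $O((q+n+n^2+n^{k+1})n)=O(qn+n^{k+2})$. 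The only mild obstacle is bookkeeping: confirming that each occurrence of $|Y|$ in the oracle complexities is bounded by $n$ and that the $\theta_{2,\mathrm{t}}$ term never dominates $q^2\delta(V)\theta_{1,\mathrm{t}}$ in the final expressions, but this is routine arithmetic of the same kind carried out in \thmref{entire}.
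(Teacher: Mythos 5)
Your proposal is correct and follows essentially the same route as the paper: both proofs substitute the oracle complexity bounds from Lemma~\ref{lem:construct_oracle_inner} (with $|Y|\le n$) and $\theta_{\rho,\mathrm{t}}=\theta_{\rho,\mathrm{s}}=O(1)$ directly into the delay and space formulas of Theorem~\ref{thm:main}, setting $\delta(V)=n$ in case (i) and $\delta(V)=n^{k}$ in case (ii). The arithmetic and the identification of connectors with solutions of the instance $(V(G),\MC,I,\sigma)$ match the paper's argument.
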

\begin{proof}
In both (i) and (ii), $\theta_{\rho,\mathrm{t}}$ and 
$\theta_{\rho, \mathrm{s}}$ can be regarded as $O(1)$
 since the volume function is not used anywhere in this context.

  (i) By \lemref{construct_oracle_inner}(i), 
  we have $\theta_{1,\mathrm{t}}=O(\min\{k\!+\!1,n\}n^2m)$,
  $\theta_{2,\mathrm{t}}=O(\min\{k\!+\!1,n\}n^3m)$,
  and $\theta_{1,\mathrm{s}}=\theta_{2,\mathrm{s}}=O(n^2)$,
  and we can set $\delta(Y)=n$ for any $Y\subseteq V$.
  By \thmref{main}, we have the time delay
  $O\big(q\theta_{2,\mathrm{t}} 
  + (q(n+\theta_{1,\mathrm{t}})+\theta_{\rho,\mathrm{t}})q
  \delta(V)\big)
  =O(\min\{k\!+\!1,n\}q^2n^3m)$
  and the space complexity
  $O\big((q+n+\theta_{1,\mathrm{s}}+\theta_{2,\mathrm{s}}
  + \theta_{\rho,\mathrm{s}}) n\big)
  =O(qn+n^3)$. 
  
  (ii) By \lemref{construct_oracle_inner}(ii), 
  we have $\theta_{1,\mathrm{t}}=O(\min\{k\!+\!1,n^{1/2}\}n^2m)$,
  $\theta_{2,\mathrm{t}}=O(\min\{k\!+\!1,n^{1/2}\}n^{k+2}m)$,
  $\theta_{1,\mathrm{s}}=O(n^2)$, and
   $\theta_{2,\mathrm{s}}=O(n^{k+1})$,
  and we can set $\delta(Y)=n^k$ for any $Y\subseteq V$.
  By \thmref{main}, we have the time delay 
  $O\big(q\theta_{2,\mathrm{t}} 
  + (q(n+\theta_{1,\mathrm{t}})+\theta_{\rho,\mathrm{t}})q
  \delta(V)\big)
  =O(\min\{k\!+\!1,n^{1/2}\}q^2n^{k+2}m)$
  and the space complexity 
  $O\big((q+n+\theta_{1,\mathrm{s}}+\theta_{2,\mathrm{s}}
  + \theta_{\rho,\mathrm{s}}) n\big)
  =O(qn+n^{k+2})$. 
\end{proof}

\section{Enumerating Connected Subgraphs}
\label{sec:app.subset}

As we observed in \secref{enum},
we can enumerate all components in a given transitive system
$(V,\MC)$.
This approach can be applied to enumeration
of vertex subsets that induce subgraphs under
various connectivity conditions. 

\subsection{Connected Induced Subgraphs (CISs)}

For an undirected graph $G$,
there are some studies on enumeration of CISs. 
In the seminal paper on reverse search~\cite{AF.1996}, 
Avis and Fukuda showed that
all CISs are enumerable in output-polynomial time and in polynomial space.
Their algorithm is immediately turned into a polynomial-delay algorithm
whose time complexity is $O(n)$, where $n=|V(G)|$.
Uno~\cite{U.2015} showed that all CISs are enumerable
in $O(1)$ time for each solution,
using the analysis technique called {\em Push Out Amortization}. 
Alokshiya et al.~\cite{ASA.2019} proposed a new linear delay 
algorithm and showed its empirical efficiency 
by experimental comparison with other algorithms. 
 
The above mentioned algorithms are specialized
to the task of enumerating all CISs. 
Our algorithm is so general that
it is applicable to the task by taking
the transitive system $(V(G),\MC_G)$.
Recall that, for $(V(G),\MC_G)$,
we can implement the oracles L$_1$ and L$_2$
so that $\theta_{i,\mathrm{t}}=O(n+m)$, $i=1,2$, 
$\theta_{i,\mathrm{s}}=O(n+m)$, $i=1,2$, 
and $\theta_{\rho,\mathrm{t}}=\theta_{\rho,\mathrm{s}}=O(1)$. 
 \corref{comp} implies that 
  all components in $(V(G),\MC_G)$ can be enumerated 
in $O(n^3(n+m))$ delay and $O(n(n+m))$ space. 

\subsection{$k$-Edge- and $k$-Vertex-Connected Induced Subgraphs}

For a mixed graph $G$, a subgraph $G'$ with $V(G')\subseteq V(G)$ and
 $E(G')\subseteq E(G)$ is {\em spanning} if $V(G')=V(G)$. 
There is some literature on enumeration of spanning subgraphs
 that are $k$-edge- or $k$-vertex-connected.
Khachiyan et al.~\cite{KBBEGM.2006} showed that, when
$G$ is undirected,  
all minimal 2-vertex-connected spanning subgraphs
are enumerable in incremental polynomial time. 
Boros et al.~\cite{BBEGMR.2007} extended the result
so that all minimal $k$-edge-connected (resp., $k$-vertex-connected)
 spanning subgraphs
can be enumerated in incremental polynomial time for any $k$
(resp., a constant $k$). 
Nutov~\cite{N.2009} showed that,
whether $G$ is undirected or directed,
minimal undirected Steiner networks,
and minimal $k$-vertex-connected
and $k$-outconnected spanning subgraphs
are enumerable in incremental polynomial time. 
%
Recently,
Yamanaka et al.~\cite{YMN.2019}
proposed a reverse search algorithm
that enumerates all $k$-edge-connected spanning subgraphs
of an undirected graph in polynomial delay for any $k$.

By \corref{comp} and \lemref{construct_oracle_inner},
we can enumerate all vertex subsets
that induce $k$-edge- and $k$-vertex-connected
subgraphs in a given  mixed graph $G$
since they constitute components of the transitive systems
$(V(G),\MC_{k,\mathrm{edge}}^{\mathrm{in}})$ and
 $(V(G),\MC_{k,\mathrm{vertex}}^{\mathrm{in}})$, respectively.

\begin{thm}
  \label{thm:inner_subgraph}
  Let $G$ be  a mixed graph and $k\ge0$ be an integer,
  where $n=|V(G)|$ and $m=|E(G)|$. 
\begin{enumerate}
\item[{\rm (i)}] 
All vertex subsets that induce $k$-edge-connected subgraphs can be enumerated 
  in $O(\min\{k\!+\!1,n\}n^5m)$  delay and   $O(n^3)$ space. 
\item[{\rm (ii)}] 
All vertex subsets that induce $k$-vertex-connected subgraphs can be enumerated  
  in $O(\min\{k\!+\!1,n^{1/2}\}n^{k+4}m)$   delay and  $O(n^{k+2})$ space. 
\end{enumerate}
\end{thm}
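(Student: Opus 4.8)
The plan is to derive this theorem as an immediate consequence of the component-enumeration bound \corref{comp}, applied to the two transitive systems $(V(G),\MC_{k,\mathrm{edge}}^{\mathrm{in}})$ and $(V(G),\MC_{k,\mathrm{vertex}}^{\mathrm{in}})$ introduced just before \lemref{construct_oracle_inner}. The first point to record is that, by the very definition of these families, a vertex subset $X\subseteq V(G)$ induces a $k$-edge-connected (resp.\ $k$-vertex-connected) subgraph exactly when $X$ is a component of the corresponding system; hence enumerating the desired subsets coincides with enumerating all components, and \corref{comp} applies. Since no output restriction is imposed, I would take the volume function to be trivial, so that $\theta_{\rho,\mathrm{t}}=\theta_{\rho,\mathrm{s}}=O(1)$. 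The rest is a matter of substituting the oracle complexities and the bound on $|\Max(\cdot)|$ from \lemref{construct_oracle_inner}, using $|Y|\le n$ uniformly.

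For part (i), \lemref{construct_oracle_inner}(i) lets me set $\theta_{1,\mathrm{t}}=O(\min\{k\!+\!1,n\}n^2m)$, $\theta_{2,\mathrm{t}}=O(\min\{k\!+\!1,n\}n^3m)$, $\theta_{1,\mathrm{s}}=\theta_{2,\mathrm{s}}=O(n^2)$, and $\delta(V)=n$. Plugging these into the delay expression $O\big(n\theta_{2,\mathrm{t}}+(n^2+n\theta_{1,\mathrm{t}}+\theta_{\rho,\mathrm{t}})n\delta(V)\big)$ of \corref{comp}, the summand $n\theta_{1,\mathrm{t}}\cdot n\delta(V)$ will dominate and collapse everything to $O(\min\{k\!+\!1,n\}n^5m)$, while the space expression $O\big((n+\theta_{1,\mathrm{s}}+\theta_{2,\mathrm{s}}+\theta_{\rho,\mathrm{s}})n\big)$ collapses to $O(n^3)$.

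For part (ii), \lemref{construct_oracle_inner}(ii) gives $\theta_{1,\mathrm{t}}=O(\min\{k\!+\!1,n^{1/2}\}n^2m)$, $\theta_{2,\mathrm{t}}=O(\min\{k\!+\!1,n^{1/2}\}n^{k+2}m)$, $\theta_{1,\mathrm{s}}=O(n^2)$, $\theta_{2,\mathrm{s}}=O(n^{k+1})$, and $\delta(V)=\binom{n}{k}=O(n^{k})$. The same substitution into \corref{comp} again leaves $n\theta_{1,\mathrm{t}}\cdot n\delta(V)$ as the dominant term, yielding delay $O(\min\{k\!+\!1,n^{1/2}\}n^{k+4}m)$ and space $O(n^{k+2})$.

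I do not expect a genuine obstacle: the substance is already packaged in \corref{comp} and \lemref{construct_oracle_inner}, and what remains is essentially bookkeeping. The only place requiring care is the arithmetic — bounding $|Y|$ by $n$ consistently, carrying the $\min\{k\!+\!1,\cdot\}$ factors correctly through the products, and comparing the handful of additive terms in the \corref{comp} bound to see which one dominates in each of the delay and space estimates.
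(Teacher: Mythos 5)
Your proposal is correct and follows exactly the route the paper takes: the paper also derives \thmref{inner_subgraph} immediately from \corref{comp} combined with \lemref{construct_oracle_inner}, treating the desired vertex subsets as the components of $(V(G),\MC_{k,\mathrm{edge}}^{\mathrm{in}})$ and $(V(G),\MC_{k,\mathrm{vertex}}^{\mathrm{in}})$ and substituting $|Y|\le n$, $\delta(V)=n$ or $\binom{n}{k}$, and the trivial volume function. Your bookkeeping of the dominant terms in the delay and space bounds matches as well.
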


\subsection{Subgraphs Induced by Edges}

Let  $G$ be a mixed graph.
For an edge subset $F\subseteq E(G)$,
let $G[F]$ denote the subgraph $H$ induced by $F$;
i.e., $V(H)=V(F)$ and $E(H)=F$.  
Define  $\mathcal{E}_{k,\mathrm{edge}}^{\mathrm{in}}$ to be the family
(resp.,  $\mathcal{E}_{k,\mathrm{vertex}}^{\mathrm{in}}$) 
of subsets $F\in E(G)$ such that 
the induced graph $G[F]$ is $k$-edge-connected (resp., $k$-vertex-connected).
Analogously with \lemref{construct_oracle_inner}, 
we obtain the next result.
 
\begin{lem}   \label{lem:construct_oracle_edge}   
Let $G$ be a mixed graph with $n$ vertices and
$m$ edges and $k\geq 0$ be an integer.
 Then: 
\begin{enumerate}
\item[{\rm (i)}] 
The family $\MC=\mathcal{E}_{k,\mathrm{edge}}^{\mathrm{in}}$ is transitive.  
For any non-empty subsets $X\subseteq Y\subseteq E(G)$,
it holds $|\Max(Y)|\leq |Y|$,   oracle 
$\mathrm{L}_1(X;Y)$ for a subset $X\subseteq Y$
runs  in  $O(|Y|^2\min\{k+1,n\}m)$  
time and $O(n^2)$ space, 
and $\mathrm{L}_2(Y)$ runs in $O(|Y|^3\min\{k+1,n\}m)$   
time and $O(n^2)$ space.  
\item[{\rm (ii)}] 
The family $\MC=\mathcal{E}_{k,\mathrm{vertex}}^{\mathrm{in}}$ is transitive.  
For any non-empty subsets $X\subseteq Y\subseteq E(G)$,
it holds $|\Max(Y)|\leq {{|Y|}\choose{k}}$,   
oracle  $\mathrm{L}_1(X;Y)$ runs in  $O(|Y|^2\min\{k+1,n^{1/2}\}m)$  
time and $O(n^2)$ space, and 
oracle $\mathrm{L}_2(Y)$ runs in 
$O(|Y|^{k+2}\min\{k+1,n^{1/2}\}m)$ 
time and $O(|Y|^{k}n)$ space.
\end{enumerate}
\end{lem}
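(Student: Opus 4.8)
The plan is to reuse the proof of \lemref{construct_oracle_inner} almost verbatim, with the edge set replacing the vertex set as the ground set $\Lambda$. Concretely, I would form the monotone system $(M,w,\gamma,k,\Lambda)$ with $M:=G$ and $\Lambda:=E(G)$, and pick the coefficient function $\gamma=(\alpha,\overline{\alpha},\alpha^+,\alpha^-,\beta)$ so that everything outside the current set is discarded: $\alpha(e):=\overline{\alpha}(e):=\alpha^+(e):=\alpha^-(e):=\beta(e):=0$ for each $e\in E(G)$ and $\beta(v):=0$ for each $v\in V(G)$. This $\gamma$ is monotone, so $\MC(M,w,\gamma,k,\Lambda)$ is transitive by \lemref{transitive} and \lemref{monotone}. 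For part~(i) take $w(e):=1$ for $e\in E(G)$ and $w(v):=k$ for $v\in V(G)$; for part~(ii) take $w\equiv 1$. With these weights $\omega_X$ is supported exactly on the edge-induced subgraph $G[X]$ and its endpoint set $V(X)$: an edge carries weight $w(e)$ if it lies in $X$ and $0$ otherwise, and a vertex carries weight $w(v)$ if it is an endpoint of some edge of $X$ and $0$ otherwise.

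The central step is to identify the families: $\mathcal{E}_{k,\mathrm{edge}}^{\mathrm{in}}=\MC(M,w,\gamma,k,\Lambda)$ for~(i) and $\mathcal{E}_{k,\mathrm{vertex}}^{\mathrm{in}}=\MC(M,w,\gamma,k,\Lambda)$ for~(ii). Since $G$ has no self-loops, every non-empty $X\subseteq E(G)$ has $|V(X)|\ge 2$, so the ``$|V(X)|=1$'' clause in the definition of $k$-connectedness of a set is vacuous, and $\omega_X(V(X))=k|V(X)|\ge k$ in~(i) while $\omega_X(V(X))=|V(X)|$ in~(ii). For endpoints $u,v\in V(X)$ one then checks that, in case~(i), a minimum $u,v$-cut of $(M,\omega_X)$ may be taken to consist of edges only (each interior endpoint of $G[X]$ costs $k$), so $\mu(u,v;X)\ge k\iff\lambda(u,v;G[X])\ge k$; and in case~(ii), $\mu(u,v;X)=\kappa(u,v;G[X])$ exactly. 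From these the two set identities follow as in \lemref{construct_oracle_inner}, including the treatment of the cardinality thresholds ($G[X]$ with $|V(X)|\le k$ is not $k$-vertex-connected).

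Given the identification, the oracle complexities are read off mechanically. Whether $\mu(s,t;X)\ge k$ holds is decided by a bounded max-flow computation in $O(\min\{k+1,n\}m)$ time for~(i) and $O(\min\{k+1,n^{1/2}\}m)$ time for~(ii), using $O(n+m)$ space; substituting this for $\mathrm{tm}(n,m,k)$ in \lemref{maximal}(iv) gives the stated bounds for $\mathrm{L}_1(X;Y)=C_k(X;Y)$. For $\mathrm{L}_2(Y)$ I would apply \lemref{core} with an appropriate family $\mathcal{K}$ of $k$-cores of $Y$: in~(i), since $w(v)=k$ forces $\omega_{\{e\}}(V(\{e\}))=2k\ge k$, the singletons $\{e\}$ with $e\in Y$ are exactly the $k$-cores, so $|\Max(Y)|\le|\mathcal{K}|=|Y|$ and $\mathrm{L}_2(Y)$ runs in $O(|Y|^3\min\{k+1,n\}m)$ time; in~(ii) I would use that each $k$-vertex-connected $X$ is connected with $|V(X)|\ge k+1$ and hence contains $k$ edges of a spanning tree of $G[X]$, which span at least $k+1$ endpoints, so taking $\mathcal{K}$ to be the $k$-element edge subsets of $Y$ with at least $k$ endpoints covers every member of $\Max(Y)$ and has $|\mathcal{K}|\le\binom{|Y|}{k}$, yielding the stated time and space bounds for $\mathrm{L}_2(Y)$.

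The only step that is not pure bookkeeping is the family identity $\mathcal{E}^{\mathrm{in}}=\MC(M,w,\gamma,k,\Lambda)$: one must pin down precisely how $\mu(u,v;X)$ relates to $\lambda(u,v;G[X])$ and $\kappa(u,v;G[X])$ on the edge-induced subgraph and handle the small-cardinality boundary correctly; everything else transfers directly from the proof of \lemref{construct_oracle_inner}.
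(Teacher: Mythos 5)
Your proposal matches the paper's own proof essentially step for step: same choice of monotone system $(M,w,\gamma,k,\Lambda)$ with $\Lambda:=E(G)$ and all coefficients zero, the same two weight functions for (i) and (ii), the same identification $\mathcal{E}^{\mathrm{in}}=\MC(M,w,\gamma,k,\Lambda)$ via the relationship between $\mu(u,v;X)$ and $\lambda(u,v;G[X])$ or $\kappa(u,v;G[X])$, and the same appeal to \lemref{maximal}(iv) and \lemref{core} for the oracle bounds. Your slight refinements — stating only the equivalence $\mu(u,v;X)\geq k\iff\lambda(u,v;G[X])\geq k$ rather than exact equality in (i), and restricting $\mathcal{K}$ to $k$-edge subsets that have at least $k$ endpoints in (ii) — are both sound and if anything a bit more careful than the paper's wording, but they do not change the route.
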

\begin{proof} 
Let  $(M,w,\gamma,k,\Lambda)$  be  a  system such that 
 a mixed graph $M:=G$, $\Lambda:=E(G)$, a weight function $w$ and
  a coefficient function
   $\gamma=(\alpha,\overline{\alpha},\alpha^+,\alpha^-,\beta)$
  such that 
  $\alpha(e):=\overline{\alpha}(e):=\alpha^+(e):=\alpha^-(e):=0$
  for each edge $e\in E(G)$, 
  and 
  $\beta(a):=0$ for each element $a\in V(G)\cup E(G)$,
  where we see that   $\gamma$ is monotone and 
  the family $\MC(M,w,\gamma,k,\Lambda)$ is transitive
  by Lemmas~\ref{lem:transitive} and \ref{lem:monotone}. 
 
 (i) We set  weight $w$ so that $w(e):=1$ for each edge $e\in E(G)$ 
 and $w(v):=k$ for each vertex $v\in V(G)$. 
We claim that 
   $\mathcal{E}_{k,\mathrm{edge}}^{\mathrm{in}}$ is 
equal to  $\MC(M,w,\gamma,k,\Lambda)$,
 where the latter is the family of non-empty subsets  
  $X\subseteq \Lambda$ with 
 $\omega_X(V(X))\geq k$ such that
  $|V(X)|=1$ or $\mu(u,v;X)\geq k$ for each pair of vertices $u,v\in V(X)$.  
 Note that $|V(X)|\geq 2$ and
  $\omega_X(V(X))= w(V(X))=k|V(X)|$ for any non-empty set
 $X\subseteq \Lambda=E(G)$.
 %
Let $X$ be a non-empty subset of $E(G)$, 
where  $|V(X)|\geq 2$ and   $\omega_X(V(X))\geq k$. 
By definition of coefficient function $\gamma$ and weight $w$ in $G$,
we see that 
 $\mu(u,v;X)=\lambda(u,v;G[X])$ holds 
for any two vertices $u,v\in V(X)$.  
Hence $G[X]$ is  a $k$-edge-connected graph if and only if 
$\mu(u,v;X)=\lambda(u,v;G[X])\geq k$ 
for any two vertices $u,v\in V(X)$. 
 This means that 
 $\mathcal{E}_{k,\mathrm{edge}}^{\mathrm{in}}
 =\MC(M,w,\gamma,k,\Lambda)$,
 proving the claim. 

Whether $\mu(s,t;X)\geq k$ 
 (i.e.,   $\lambda(s,t;G[X]),\lambda(t,s;G[X])\geq k$) or not 
 for a subset $X\subseteq E(G)$
  can be tested in $O(\min\{k,n\}m)$  time \cite{AMO89,AMO93}.
By Lemma~\ref{lem:maximal}(iv),   
 $\mathrm{L}_1(X;Y)$ runs  in  $O(|Y|^2\min\{k+1,n\}m)$ 
  time and $O(n^2)$ space. 
 The family $\mathcal{K}$ of $k$-cores  $Z\subseteq Y$ is
  $\{\{v\}\mid v\in Y\}$. 
By Lemma~\ref{lem:core},
$|\Max(Y)|\leq |\mathcal{K}|\leq |Y|$ 
and  $\mathrm{L}_2(Y)$ runs in  $O(|Y|^3\min\{k+1,n\}m)$ time 
and $O(n^2)$ space.

(ii)  We set  weight $w$ so that $w(e):=1$  
for each edge $e\in E(G)$ and
 $w(v):=1$  for each vertex $v\in V(G)$.
We claim that  $\mathcal{E}_{k,\mathrm{vertex}}^{\mathrm{in}}$ is equal to  
$\MC(M,w,\gamma,k,\Lambda)$. 
Note that  $|V(X)|\geq 2$ and
$\omega_X(V(X))= w(V(X))=|X|$
 for any non-empty set  $X\subseteq \Lambda$.
By definition of coefficient function $\gamma$ and weight $w$ in $G$,
we see that 
 $\mu(u,v;X)=\kappa(u,v;G[X])$ holds 
for any two vertices $u,v\in V(X)$.  
In particular, if $|V(X)|\leq k$ then 
$\mu(u,v;X)=\kappa(u,v;G[X])<k$. 
Let $X$ be a subset of $E(G)$ with  $|V(X)| \leq k$. 
Then $G[X]$ is not a $k$-vertex-connected graph and
$X$ is not $k$-connected in the system $(M,w,\gamma,k,\Lambda)$. 
Let $X$ be a subset of $E(G)$ with  $|V(X)|\geq k+1$. 
Then $G[X]$ is a $k$-vertex-connected graph 
if and only if 
$\mu(u,v;X)=\kappa(u,v;G[X])\geq k$ 
for any two vertices $u,v\in V(X)$. 
 This means that 
 $\mathcal{E}_{k,\mathrm{vertex}}^{\mathrm{in}}
 =\MC(M,w,\gamma,k,\Lambda)$,
 proving the claim.  

 Whether $\mu(s,t;X)\geq k$  
  (i.e.,   $\kappa(s,t;G[X]),\kappa(t,s;G[X])\geq k$) or not 
 for a subset $X\subseteq E(G)$
 can be tested in $O(\min\{k,n^{1/2}\}m)$  time and $O(n+m)$ space \cite{AMO89,AMO93}.
By Lemma~\ref{lem:maximal}(iv),   
 $\mathrm{L}_1(X;Y)$ runs in  $O(|Y|^2\min\{k+1,n^{1/2}\}m)$  time
 and $O(n^2)$ space. 
 The family $\mathcal{K}$ of $k$-cores $Z\subseteq Y$ is ${{Y}\choose{k}}$. 
By Lemma~\ref{lem:core}, 
$|\Max(Y)|\leq |\mathcal{K}|\leq  {{|Y|}\choose{k}}$ 
and   $\mathrm{L}_2(Y)$ runs  in 
 $O(|Y|^{k+2}\min\{k+1,n^{1/2}\}m)$ time and $O(|Y|^{k}n)$ space.
\end{proof} 

By \corref{comp} and \lemref{construct_oracle_edge},
we can enumerate all edge subsets
that induce $k$-edge- and $k$-vertex-connected
subgraphs in a given mixed graph $G$
since they constitute components of the transitive systems
$(E(G),\mathcal{E}_{k,\mathrm{edge}}^{\mathrm{in}})$ and
 $(E(G),\mathcal{E}_{k,\mathrm{vertex}}^{\mathrm{in}})$, respectively.

\begin{thm}
  \label{thm:edge_induced_subgraph}
  Let $G$ be a mixed graph and $k\ge1$ be an integer,
  where $n=|V(G)|$ and $m=|E(G)|$. 
\begin{enumerate}
\item[{\rm (i)}] 
All edge subsets that induce $k$-edge-connected subgraphs 
can be enumerated 
  in $O(\min\{k+1,n\}m^6)$  delay and   $O(mn^2)$ space. 
\item[{\rm (ii)}] 
All edge subsets that induce $k$-vertex-connected subgraphs can be enumerated  
  in $O(\min\{k+1,n^{1/2}\}m^{k+5})$   delay and  $O(m^{k+1}n)$ space. 
\end{enumerate}
\end{thm}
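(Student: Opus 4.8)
The plan is to obtain both parts by instantiating \corref{comp} with the transitive systems $(E(G),\mathcal{E}_{k,\mathrm{edge}}^{\mathrm{in}})$ and $(E(G),\mathcal{E}_{k,\mathrm{vertex}}^{\mathrm{in}})$, exactly in the way \thmref{inner_subgraph} was derived from \corref{comp} and \lemref{construct_oracle_inner}; here \lemref{construct_oracle_edge} supplies the transitivity of these systems and the running times of the oracles $\mathrm{L}_1,\mathrm{L}_2$. The one bookkeeping point that needs care is that the ground set of each system is now $E(G)$, so the quantity playing the role of ``$n$'' in the statement of \corref{comp} is $m=|E(G)|$, whereas the ``$n$'' occurring inside the oracle bounds of \lemref{construct_oracle_edge} is the vertex count $|V(G)|$; I will keep these two uses distinct throughout. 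Since we want \emph{every} edge subset inducing the prescribed kind of subgraph, no volume function is imposed, so $\theta_{\rho,\mathrm{t}}=\theta_{\rho,\mathrm{s}}=O(1)$, and the choices $\delta(Y)=|Y|$ and $\delta(Y)=\binom{|Y|}{k}$ used below are non-decreasing, as \corref{comp} requires.

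For part (i), put $\MC=\mathcal{E}_{k,\mathrm{edge}}^{\mathrm{in}}$ on the ground set $E(G)$ of size $m$. By \lemref{construct_oracle_edge}(i), and using $|Y|\le m$, we may take $\delta(E(G))=m$, $\theta_{1,\mathrm{t}}=O(\min\{k\!+\!1,n\}m^{3})$, $\theta_{2,\mathrm{t}}=O(\min\{k\!+\!1,n\}m^{4})$, and $\theta_{1,\mathrm{s}}=\theta_{2,\mathrm{s}}=O(n^{2})$. Substituting these, together with $m$ in place of the ``$n$'' of \corref{comp}, into the delay bound $O\big(m\theta_{2,\mathrm{t}}+(m^{2}+m\theta_{1,\mathrm{t}}+\theta_{\rho,\mathrm{t}})\,m\,\delta(E(G))\big)$ and keeping the dominant term (the one arising from $m\theta_{1,\mathrm{t}}\cdot m\,\delta(E(G))$) yields $O(\min\{k\!+\!1,n\}m^{6})$; substituting into the space bound $O\big((m+\theta_{1,\mathrm{s}}+\theta_{2,\mathrm{s}}+\theta_{\rho,\mathrm{s}})m\big)$ yields $O(mn^{2})$.

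For part (ii), put $\MC=\mathcal{E}_{k,\mathrm{vertex}}^{\mathrm{in}}$ on the ground set $E(G)$ of size $m$. By \lemref{construct_oracle_edge}(ii), and using $|Y|\le m$, we may take $\delta(E(G))=\binom{m}{k}=O(m^{k})$, $\theta_{1,\mathrm{t}}=O(\min\{k\!+\!1,n^{1/2}\}m^{3})$, $\theta_{2,\mathrm{t}}=O(\min\{k\!+\!1,n^{1/2}\}m^{k+3})$, $\theta_{1,\mathrm{s}}=O(n^{2})$, and $\theta_{2,\mathrm{s}}=O(m^{k}n)$. Substituting these and $m$ for ``$n$'' into the same two bounds of \corref{comp} and simplifying gives delay $O(\min\{k\!+\!1,n^{1/2}\}m^{k+5})$ and space $O(m^{k+1}n)$.

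Since the argument is a direct substitution into an already-proved bound, there is no genuine obstacle; the only things to watch are the two already noted — not conflating the ``ground-set $n$'' (here $m$) of \corref{comp} with the ``vertex $n$'' appearing in the oracle costs, and correctly identifying the asymptotically largest additive term after substitution (the term carrying $\theta_{1,\mathrm{t}}$ and $\delta(E(G))$ for the delay, and the term carrying $\theta_{2,\mathrm{s}}$ for the space), absorbing the remaining lower-order contributions into the stated forms.
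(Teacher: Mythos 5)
Your proposal is correct and follows exactly the paper's route: the paper likewise states the theorem as an immediate consequence of \corref{comp} together with \lemref{construct_oracle_edge}, treating $E(G)$ as the ground set of size $m$ and plugging the oracle bounds with $|Y|\le m$ into the delay and space formulas. Your explicit substitutions (and the care to distinguish the ``ground-set size'' $m$ from the vertex count $n$ that appears inside the oracle costs) are exactly the bookkeeping left implicit in the paper.
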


Define a volume function $\rho:V(G)\cup E(G)\to \mathbb{R}$
so that $\rho(X):=|V(X)|- |V(G)|+1$ for each subset $X\subseteq V(G)\cup E(G)$.
For a subset $X\subseteq E(G)$, 
the graph $G[X]$ is a spanning subgraph of $G$ if and only if
$\rho(X)> 0$. 
We see that 
$\theta_{\rho,\mathrm{t}}=\theta_{\rho,\mathrm{s}}=O(n)$. 
 
Similarly to \thmref{edge_induced_subgraph},
we can enumerate all edge subsets
that induce $k$-edge- and $k$-vertex-connected
spanning subgraphs in a given  mixed graph $G$
since they constitute $\rho$-positive components of the transitive systems
$(E(G),\mathcal{E}_{k,\mathrm{edge}}^{\mathrm{in}})$ and
 $(E(G),\mathcal{E}_{k,\mathrm{vertex}}^{\mathrm{in}})$, respectively.

\begin{thm}
  \label{thm:spanning_subgraph}
  Let $G$ be  a mixed graph and $k\ge1$ be an integer,
  where $n=|V(G)|$ and $m=|E(G)|$. 
\begin{enumerate}
\item[{\rm (i)}] 
All edge subsets that induce $k$-edge-connected spanning subgraphs 
can be enumerated 
  in $O(\min\{k+1,n\}m^6)$  delay and   $O(mn^2)$ space. 
\item[{\rm (ii)}] 
All edge subsets that induce $k$-vertex-connected spanning subgraphs can be enumerated  
  in $O(\min\{k+1,n^{1/2}\}m^{k+5})$   delay and  $O(m^{k+1}n)$ space. 
\end{enumerate}
\end{thm}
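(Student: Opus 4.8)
The plan is to derive this statement as a direct corollary of \corref{comp} applied to the two transitive systems $(E(G),\mathcal{E}_{k,\mathrm{edge}}^{\mathrm{in}})$ and $(E(G),\mathcal{E}_{k,\mathrm{vertex}}^{\mathrm{in}})$, each equipped with the volume function $\rho(X)=|V(X)|-|V(G)|+1$ introduced just above. First I would check that $\rho$ really is a volume function: for $X\subseteq Y\subseteq E(G)$ we have $V(X)\subseteq V(Y)$, hence $\rho(X)\le\rho(Y)$; and testing $\rho(X)>0$ only requires counting $|V(X)|$, so $\theta_{\rho,\mathrm{t}}=\theta_{\rho,\mathrm{s}}=O(n)$. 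Since $G[X]$ is spanning exactly when $V(X)=V(G)$, i.e.\ when $\rho(X)>0$, the $\rho$-positive components of $(E(G),\mathcal{E}_{k,\mathrm{edge}}^{\mathrm{in}})$ (resp.\ of $(E(G),\mathcal{E}_{k,\mathrm{vertex}}^{\mathrm{in}})$) are precisely the edge subsets inducing $k$-edge-connected (resp.\ $k$-vertex-connected) spanning subgraphs. Transitivity of both families, together with oracles $\mathrm{L}_1,\mathrm{L}_2$ of the stated costs, has already been established in \lemref{construct_oracle_edge} (via Lemmas~\ref{lem:transitive} and \ref{lem:monotone}), so \corref{comp} is applicable.

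The remaining work is purely a substitution of the oracle bounds of \lemref{construct_oracle_edge} into the delay and space formulas of \corref{comp}, with the crucial bookkeeping point that the ``$n$'' of \corref{comp} is the ground-set size $|E(G)|=m$, while the $n$ appearing inside the oracle costs is $|V(G)|$; also $\delta(E(G))\le|E(G)|=m$ in case (i) and $\delta(E(G))\le\binom{m}{k}=O(m^{k})$ in case (ii). For (i), plugging $\theta_{1,\mathrm{t}}=O(|Y|^{2}\min\{k\!+\!1,n\}m)=O(\min\{k\!+\!1,n\}m^{3})$, $\theta_{2,\mathrm{t}}=O(\min\{k\!+\!1,n\}m^{4})$, $\theta_{1,\mathrm{s}}=\theta_{2,\mathrm{s}}=O(n^{2})$ and $\delta(E(G))=m$ into $O\big(m\theta_{2,\mathrm{t}}+(m^{2}+m\theta_{1,\mathrm{t}}+\theta_{\rho,\mathrm{t}})m\,\delta(E(G))\big)$ collapses, after discarding lower-order summands, to $O(\min\{k\!+\!1,n\}m^{6})$ delay, while $O\big((m+\theta_{1,\mathrm{s}}+\theta_{2,\mathrm{s}}+\theta_{\rho,\mathrm{s}})m\big)$ gives $O(mn^{2})$ space. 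For (ii), using instead $\theta_{1,\mathrm{t}}=O(\min\{k\!+\!1,n^{1/2}\}m^{3})$, $\theta_{2,\mathrm{t}}=O(\min\{k\!+\!1,n^{1/2}\}m^{k+3})$, $\theta_{2,\mathrm{s}}=O(m^{k}n)$ and $\delta(E(G))=O(m^{k})$, the same substitution yields delay $O(\min\{k\!+\!1,n^{1/2}\}m^{k+5})$ and space $O(m^{k+1}n)$.

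I do not anticipate a real obstacle: the statement is a mechanical consequence of \corref{comp} and \lemref{construct_oracle_edge}, matching \thmref{edge_induced_subgraph} exactly. The only place that demands care is tracking which of the several summands in the delay bound of \corref{comp} dominates after all substitutions, and keeping the two roles of the symbol $n$ (vertex count vs.\ ground-set cardinality) separate; in particular one verifies that the $\rho$-related terms $\theta_{\rho,\mathrm{t}},\theta_{\rho,\mathrm{s}}=O(n)$ are absorbed by the other contributions, so that imposing the spanning condition costs nothing asymptotically. \quad $\Box$
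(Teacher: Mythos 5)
Your proposal is correct and follows exactly the route the paper intends: apply \corref{comp} to the transitive systems $(E(G),\mathcal{E}_{k,\mathrm{edge}}^{\mathrm{in}})$ and $(E(G),\mathcal{E}_{k,\mathrm{vertex}}^{\mathrm{in}})$ with the volume function $\rho(X)=|V(X)|-|V(G)|+1$, plugging in the oracle costs from \lemref{construct_oracle_edge}; the paper leaves this substitution implicit ("Similarly to \thmref{edge_induced_subgraph}\ldots"), and your write-up, including the careful separation of $n=|V(G)|$ from the ground-set size $m$, fills in precisely that computation.
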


\begin{table}[b!]
  \centering
  \caption{Complexity of enumerating connectors $X$ that satisfy 
  several connectivity requirements}
  \label{tab:result}
  \begin{tabular}{ccll}
    \hline
    \multicolumn{1}{l}{\bf Theorem} &
    \multicolumn{1}{c}{\bf  Requirement} &
    \multicolumn{1}{c}{\bf Delay} &
    \multicolumn{1}{c}{\bf Space} \\
    \hline
    \ref{thm:connect} & $G[X]$ is connected & $O(q^2(n+m)n)$ & $O((q+n+m)n)$\\
    \ref{thm:entire}(i) & $X$ is $k$-edge-connected & $O(q^2n^3)$ & $O(qn+n^3)$ \\
    %
    %
    \ref{thm:entire}(ii) & $X$ is $k$-vertex-connected & 
    $O(q^2n^{k+2})$ & $O(qn+n^{k+2})$\\
    \ref{thm:inner}(i) & $G[X]$ is $k$-edge-connected & 
    $O(\min\{k\!+\!1,n\}q^2n^3m)$ & $O(qn+n^3)$ \\
    \ref{thm:inner}(ii) & $G[X]$ is $k$-vertex-connected & 
    $O(\min\{k\!+\!1,n^{1/2}\}q^2n^{k+2}m)$ & $O(qn+n^{k+2})$ \\
    \hline
  \end{tabular}
\end{table}

\begin{table}[b!]
  \centering
  \caption{Complexity of enumerating vertex subsets $X$ or
  edge subsets $F$ that 
  satisfy several connectivity requirements}
  \label{tab:result2}
  \begin{tabular}{c c ll}
    \hline
    \multicolumn{1}{l}{\bf Theorem} &
    \multicolumn{1}{c}{\bf  Requirement} &
    \multicolumn{1}{c}{\bf Delay} &
    \multicolumn{1}{c}{\bf Space} \\
    \hline
    \ref{thm:inner_subgraph}(i) & $X\subseteq V(G)$, $G[X]$ is $k$-edge-connected 
    & $O(\min\{k\!+\!1,n\}n^5m)$ & $O(n^3)$\\
    \ref{thm:inner_subgraph}(ii) &$X\subseteq V(G)$, $G[X]$ is $k$-vertex-connected 
    & $O(\min\{k\!+\!1,n^{1/2}\}n^{k+4}m)$ & $O(n^{k+2})$ \\
    \ref{thm:edge_induced_subgraph}(i) &$F\subseteq E(G)$, 
    $G[F]$ is $k$-edge-connected & $O(\min\{k\!+\!1,n\}m^6)$ & $O(mn^2)$ \\
    \ref{thm:edge_induced_subgraph}(ii) &$F\subseteq E(G)$, 
    $G[F]$ is $k$-vertex-connected  & $O(\min\{k\!+\!1,n^{1/2}\}m^{k+4})$ & $O(m^{k+1}n)$ \\
    \ref{thm:spanning_subgraph}(i) &$F\subseteq E(G)$, 
    $(V(G),F)$ is $k$-edge-connected & $O(\min\{k\!+\!1,n\}m^6)$ & $O(mn^2)$ \\
    \ref{thm:spanning_subgraph}(ii) &$F\subseteq E(G)$, 
    $(V(G),F)$ is $k$-vertex-connected  & $O(\min\{k\!+\!1,n^{1/2}\}m^{k+4})$ & $O(m^{k+1}n)$ \\
    \hline
  \end{tabular}
\end{table}

\section{Concluding Remarks}
\label{sec:conc}
The main contribution of the paper is \thmref{main}.
To prove the theorem,
we have presented a family-tree based enumeration algorithm
that achieves the required complexity in Sections~\ref{sec:defn} and \ref{sec:trav}.

Once we define a transitive system $(V,\MC)$
such that $\delta(X)\le\delta(Y)$ holds for any $X\subseteq Y\subseteq V$
and design two oracles L$_1$ and L$_2$ for it,
we can enumerate all solutions
in an instance $(V,\MC,I,\sigma)$ for arbitrary $I$ and $\sigma$
 in the stated computational complexity.  
In particular, if the time (resp., space) complexity of the two oracles
is polynomially bounded, the algorithm achieves polynomial-delay
(resp., polynomial space). 

We presented some application results in
Sections~\ref{sec:app.conn} and \ref{sec:app.subset}.
In \secref{app.conn},
we obtained the first polynomial-delay algorithm
for the connector enumeration problem,
even when a stronger connectivity condition is imposed
on a connector (i.e., $k$-edge-connectivity for any $k$ and $k$-vertex-connectivity 
for a fixed $k$). 
In \secref{app.subset},
we showed that all vertex subsets that induce $k$-edge-connected
 (resp., $k$-vertex-connected)
subgraphs are enumerable in polynomial delay for any $k$ (resp., a fixed $k$). 
We summarize the computational complexity in Table 1 and 2. 
We could improve complexity bounds for respective cases,
which are left for future work.

Our next issue is to show the effectiveness of
the family-tree based algorithm 
by solving real instances of enumeration problems
concerning a transitive system. 
The connector enumeration problem has applications in biology,
as mentioned in \secref{app.conn}, and
we are to pursuit further applications in such fields as chemistry. 
We have already developed an implementation of the algorithm
for this problem
and observed its efficiency in comparison with previous algorithms,
COOMA and COPINE~\cite{HN.2019}.


\begin{thebibliography}{10}

\bibitem{AIS.1993}
R.~Agrawal, T.~Imieli{\'{n}}ski, and A.~Swami.
\newblock Mining association rules between sets of items in large databases.
\newblock {\em ACM SIGMOD Record}, 22(2):207--216, 1993.

\bibitem{AMO89}
R.~K. Ahuja, T.~L. Magnanti, and J.~B. Orlin.
\newblock {\em Optimization}, volume~1 of
 {\em Handbooks in Management Science
  and Operations Research}, chapter Network Flows (IV), pages 211--369.
\newblock North-Holland, 1989.

\bibitem{AMO93}
R.~K. Ahuja, T.~L. Magnanti, and J.~B. Orlin.
\newblock {\em Network Flows: Theory, Algorithms, and Applications}.
\newblock Prentice-Hall, Englewood Cliffs, NJ, 1993.

\bibitem{ASA.2019}
M.~Alokshiya, S.~Salem, and F.~Abed.
\newblock A linear delay algorithm for enumerating all connected induced
  subgraphs.
\newblock {\em BMC Bioinformatics}, 20(Suppl 12):1--11, 2019.

\bibitem{AF.1996}
D.~Avis and K.~Fukuda.
\newblock Reverse search for enumeration.
\newblock {\em Discrete Applied Mathematics}, 65(1):21--46, 1996.

\bibitem{BBEGMR.2007}
E.~Boros, K.~Borys, K.~Elbassioni, V.~Gurvich, K.~Makino, and G.~Rudolf.
\newblock Generating $k$-vertex connected spanning subgraphs and $k$-edge
  connected spanning subgraphs.
\newblock In {\em Proceedings of 13th Annual International Computing and
  Combinatorics Conference}, pages 222--231, 2007.

\bibitem{BIOGRID.2015}
A.~Chatr-aryamontri, B.-J. Breitkreutz, R.~Oughtred, L.~Boucher, S.~Heinicke,
  D.~Chen, C.~Stark, A.~Breitkreutz, N.~Kolas, L.~O'Donnell, T.~Reguly,
  J.~Nixon, L.~Ramage, A.~Winter, A.~Sellam, C.~Chang, J.~Hirschman,
  C.~Theesfeld, J.~Rust, M.~S. Livstone, K.~Dolinski, and M.~Tyers.
\newblock The biogrid interaction database: 2015 update.
\newblock {\em Nucleic Acids Research}, 43(D1):D470--D478, 11 2014.

\bibitem{HMSN2.2018}
K.~Haraguchi, Y.~Momoi, A.~Shurbevski, and H.~Nagamochi.
\newblock {COOMA:} a components overlaid mining algorithm for enumerating
  connected subgraphs with common itemsets.
\newblock In {\em Proceedings of 2nd International Workshop on Enumeration
  Problems and Applications (WEPA 2018)}, 2018.

\bibitem{HMSN.2019}
K.~{Haraguchi}, Y.~{Momoi}, A.~{Shurbevski}, and H.~{Nagamochi}.
\newblock {COOMA:} a components overlaid mining algorithm for enumerating
  connected subgraphs with common itemsets.
\newblock {\em Journal of Graph Algorithms and Applications}, 23(2):434--458,
  2019.

\bibitem{HN.2019}
K.~Haraguchi and H.~Nagamochi.
\newblock Experimental comparison of connector enumeration algorithms.
\newblock In {\em Proceedings of 3rd International Workshop on Enumeration
  Problems and Applications (WEPA 2019)}, Awaji, Japan, 2019.

\bibitem{IWM.2000}
A.~Inokuchi, T.~Washio, and H.~Motoda.
\newblock An apriori-based algorithm for mining frequent substructures from
  graph data.
\newblock In D.~A. Zighed, J.~Komorowski, and J.~{\.{Z}}ytkow, editors, {\em
  Principles of Data Mining and Knowledge Discovery}, pages 13--23.

\bibitem{KBBEGM.2006}
L.~Khachiyan, E.~Boros, K.~Borys, K.~Elbassioni, V.~Gurvich, and K.~Makino.
\newblock Enumerating spanning and connected subsets in graphs and matroids.
\newblock In {\em Proceedings of 14th European Symposium on Algorithms (ESA
  2006)}, pages 444--455, 2006.

\bibitem{LSHZ.2018}
Y.~Li, C.~Sha, X.~Huang, and Y.~Zhang.
\newblock Community detection in attributed graphs: An embedding approach.
\newblock In {\em Proceedings of AAAI-18}, 2018.

\bibitem{Me27}
K.~Menger.
\newblock Zur allegemeinen kurventheorie.
\newblock {\em Fundamenta Mathematicae}, 10:96--115, 1927.

\bibitem{N.2009}
Z.~Nutov.
\newblock Listing minimal edge-covers of intersecting families with
  applications to connectivity problems.
\newblock {\em Discrete Applied Mathamatics}, 157(1):112--117, 2009.

\bibitem{O.2017}
S.~Okuno.
\newblock {\em Parallelization of Graph Mining Using Backtrack Search
  Algorithm}.
\newblock PhD thesis, Kyoto University, 2017.

\bibitem{OHNYS.2014}
S.~Okuno, T.~Hiraishi, H.~Nakashima, M.~Yasugi, and J.~Sese.
\newblock Parallelization of extracting connected subgraphs with common
  itemsets.
\newblock {\em Information and Media Technologies}, 9(3):233--250, 2014.

\bibitem{OHNYS.2016}
S.~Okuno, T.~Hiraishi, H.~Nakashima, M.~Yasugi, and J.~Sese.
\newblock Reducing redundant search in parallel graph mining using exceptions.
\newblock In {\em 2016 IEEE International Parallel and Distributed Processing
  Symposium Workshops (IPDPSW)}, pages 328--337, 2016.

\bibitem{SS.2008}
M.~Seki and J.~Sese.
\newblock Identification of active biological networks and common expression
  conditions.
\newblock In {\em 2008 8th IEEE International Conference on BioInformatics and
  BioEngineering}, pages 1--6, 2008.

\bibitem{SSF.2010}
J.~Sese, M.~Seki, and M.~Fukuzaki.
\newblock Mining networks with shared items.
\newblock In {\em Proceedings of the 19th ACM International Conference on
  Information and Knowledge Management (CIKM '10)}, pages 1681--1684, 2010.

\bibitem{U.2003}
T.~Uno.
\newblock Two general methods to reduce delay and change of enumeration
  algorithms.
\newblock Technical Report NII-2003-004E, National Institute of Informatics,
  April 2003.

\bibitem{U.2015}
T.~Uno.
\newblock Constant time enumeration by amortization.
\newblock In F.~Dehne, J.-R. Sack, and U.~Stege, editors, {\em Algorithms and
  Data Structures}, pages 593--605, Cham, 2015. Springer International
  Publishing.

\bibitem{YMN.2019}
K.~Yamanaka, Y.~Matsui, and S.~Nakano.
\newblock Enumerating highly-edge-connected spanning subgraphs.
\newblock {\em IEICE Transactions on Fundamentals of Electronics,
  Communications and Computer Sciences}, E102-A(9):1002--1006, 2019.

\bibitem{YH.2002}
X.~Yan and J.~Han.
\newblock {gSpan}: Graph-based substructure pattern mining.
\newblock In {\em Proceedings of 2002 IEEE International Conference on Data
  Mining (ICDM '02)}, pages 721--724, 2002.

\end{thebibliography}

\end{document}